\def\qed{\hfill{\raggedleft{\hbox{$\Box$}}} \smallskip}
\def\R{\mathbb{R}}
\DeclareMathOperator*{\argmax}{argmax}
\DeclareMathOperator*{\argmin}{argmin}
\newcommand{\rE}{\text{\sffamily{E}}}
\theoremstyle{plain} \newtheorem{lem}{Lemma}[section]
\theoremstyle{plain} \newtheorem{prop}{Proposition}[section]
\theoremstyle{plain} \newtheorem{thm}{Theorem}[section]
\theoremstyle{plain} \newtheorem{cor}{Corollary}[section]
\theoremstyle{plain} 
\theoremstyle{plain} 
\theoremstyle{definition} \newtheorem{defn}{Definition}[section]
\theoremstyle{definition}
\theoremstyle{definition} 
\theoremstyle{definition} 
\theoremstyle{definition} 
\theoremstyle{definition}
\newlength\savedwidth
\def\1{\mathbf{1}}
\author[1]{Ngoc Mai Tran}
\author[2]{Maria Osipenko}
\author[3]{Wolfgang Karl H\"{a}rdle}
\affil[1]{Department of Mathematics, University of Texas at Austin, USA.}
\affil[2]{Collaborative Research Center 649: {\it Economic Risk}, Humboldt-Universit\"{a}t zu Berlin, Berlin, Germany.}
\affil[3]{C.A.S.E.- Center for Applied Statistics \& Economics, Humboldt-Universit\"{a}t zu Berlin, Berlin, Germany.\newline
Lee Kong Chian School of Business, Singapore Management University,  Singapore.}
\title{Principal Component Analysis in an Asymmetric Norm \thanks{This research was supported by Deutsche Forschungsgemeinschaft through the SFB 649 "Economic Risk".
Ngoc Tran was also supported by DARPA (HR0011-12-1-0011) and an award from the Simons Foundation (\# 197982 to The University of Texas at Austin)
}}
\begin{document}
\date{}
\maketitle
\begin{abstract}Principal component analysis (PCA) is a widely used dimension reduction tool in the analysis of many kind of high-dimensional data. It is used in signal processing, mechanical engineering, psychometrics, and other fields under different names. It still bears the same mathematical idea: the decomposition of variation of a high dimensional object into uncorrelated factors or components. However, in many of the above applications, one is interested in capturing the tail variables of the data rather than variation around the mean. Such applications include weather related event curves, expected shortfalls, and speeding analysis among others. These are all high dimensional tail objects which one would like to study in a PCA fashion. The tail character though requires to do the dimension reduction in an asymmetric norm rather than the classical $L_2$-type orthogonal projection. We develop an analogue of PCA in an asymmetric norm. These norms cover both quantiles and expectiles, another tail event measure. The difficulty is that there is no natural basis, no `principal components', to the $k$-dimensional subspace found. We propose two definitions of principal components and provide algorithms based on iterative least squares. We prove upper bounds on their convergence times, and compare their performances in a simulation study. We apply the algorithms to a Chinese weather dataset with a view to weather derivative pricing.\\

\noindent\textbf{Keywords:} principal components; asymmetric norm; dimension reduction; quantile; expectile.\\

\noindent\textbf{JEL Classification:} C38, C61, C63.
\end{abstract}
\section{Introduction}
When data come as curves without known functional form, the statistician faces immediately the need for dimension reduction. The conventional and widely used tool for such high dimensional curve data is principal component analysis (PCA). The basic principle of this technique is to treat the curves as random variations around a mean curve, and then orthogonalize the covariance operator into eigenfunctions and corresponding (random) loadings. The focus of this principle is on studying the variation around a mean curve. Loadings on (interpretable) eigenfunctions would then represent specific variations around the average. PCA or more generally functional PCA (FPCA) has been successfully applied in many fields such as gene expression measurements, weather, natural hazard, and environment studies, demographics, etc, see \citet{j04}, \citet{crknsa}, 
and \citet{chenmuel}. 
One of the first applications is in \citet{rs05}. They considered temperature curves recorded daily over a year at multiple stations in an area. The premise is that there are only a few principal components influencing the average temperature, and that the temperature curve from each station is well-approximated on average by a specific linear combinations of these factors. PCA approximates the mean of the data by a nested sequence of optimal subspaces of small dimensions. Thus the optimal subspace of dimension $k$ comes with a natural basis, consisting of uncorrelated random curves (vectors), the principal components, playing the role of the factors aforementioned. Due to the nested structure of the optimal subspaces, one can compute the first few components using a greedy algorithm. The first principal component can be computed efficiently using iterative partial least squares.

In many of the above applications, one is not only interested in the variation around an average curve, but rather in features of the data that are expressible as scale (variance) or tail related functional data. In pricing of financial products where volatility is relevant, for example, the variation of the scale of risk factors is at the core of fair pricing. If one would like to construct weather derivatives or forecasts for the above FPCA example on temperature curves, one needs not only to know the variation across stations, but also the changing scale of the temperature curves, \citet{cadi}, \citet{benth}, and \citet{haelop}. In climatological science, one is interested in the extremes of certain natural phenomena like drought or rainfall. A tail indicator like a quantile of a conditional distribution when indexed by an explanatory variable also constitutes a curve. Therefore, such a quantile curve collection may also be treated in an FPCA context. Yet another tail-describing curve is the expectile function. Like the quantile curve, it can be represented via a solution with respect to an asymmetric norm. Expectiles have as well numerous application areas, especially in the calculation of risk measures of a financial asset or a portfolio. \citet{tailor2008} shows how a widely accepted risk measure such as expected shortfall can be assessed via expectiles. \citet{kuan} apply this tail measure in an autoregressive risk management context.

In this paper, we develop an analogue of PCA for quantiles and expectiles. The later, proposed by \citet{np}, is an analogue of the mean for quantiles. The quantile to level $\tau$ of a distribution with cdf $F$, assuming $F$ is invertible, is defined as $q_\tau = F^{-1}(\tau)$. It is also the solution to the following optimization problem: 
$$ q_\tau = \arg\min_{q \in \R} \rE\|X - q\|_{\tau, 1} $$
where $X$ is a random variable in $\R$ with distribution $F$, and $\|x\|_{\tau,\alpha}^\alpha$ is the asymmetric norm:
\begin{equation} \label{eqn:alpha1}
\|x\|_{\tau,\alpha}^\alpha = |I(x \leq 0) - \tau||x|^\alpha, \hspace{1em} \alpha = 1.
\end{equation}
Given $X_i \sim F, i = 1, \ldots, n$, one may formulate the estimation of the unknown quantile in a location model:
\begin{equation}\label{eqn:noisy}
X_i = q_\tau + \varepsilon_i,
\end{equation}
with the $\tau$-quantile of the cdf of $\varepsilon$ being zero. A natural estimate of $q_\tau$ in (\ref{eqn:noisy}) is:
\begin{equation}\label{eqn:qtau1}
\hat{q}_\tau = \arg\min_{q \in \R}\sum_{i=1}^n \|X_i-q\|_{\tau,1}.
\end{equation}
The estimator as written in (\ref{eqn:qtau1}) can be defined for $\R^p$-valued vectors, if the asymmetric norm is taken by applying (\ref{eqn:alpha1}) coordinatewise and then summing over the coordinates. Given this extension it can be used to analyse curves data when discretized on a regular grid as mensioned in \citet{kneip}.

Formulation (\ref{eqn:qtau1}) yields a statistical interpretation. In fact, if the noise $\varepsilon_i$ in (\ref{eqn:noisy}) follows a so-called asymmetric Laplace distribution $ALD(\tau)$, which has cdf proportional to the functional $\exp(-\|\cdot\|_{\tau,1})$, then (\ref{eqn:qtau1}) can be interpreted as a quasi maximum likelihood estimation of equation (\ref{eqn:noisy}). 
Putting $\alpha = 2$ in (\ref{eqn:alpha1}) yields, via (\ref{eqn:qtau1}), a quasi likelihood interpretation based on an asymmetric normal distribution. Both cases $\alpha = 1$ and $\alpha = 2$ for $\tau \neq 0.5$ are indicators for a certain tail index. This paper aims to shed some light on how to create suitable subspace decompositions for such collections of tail index curves. 

As noted in \citet{gzhh}, the first step in this problem corresponds to doing low-rank matrix approximation with weighted $L_1$ and $L_2$ norm, respectively, where the weights are sign-sensitive (see Section 1). Based on a proposal of \citet{schnabel} an iterative weighted least squares algorithm for expectiles is employed where the weights are updated in each iteration. This algorithm is guaranteed to converge, although not necessarily to the global minimum as we shall show below. Thus one can at least find a locally optimal $k$-dimensional subspace that best approximates a given quantile or expectile curve (vector). The difficulty is that the weight matrix is not of rank one, hence there is no natural basis, no `principal components', to the $k$-dimensional subspace found. While this is a known problem in weighted low-rank matrix approximation, see \citet{srebro}, this problem has not been addressed before. 

Furthermore, the definition of an optimal $\tau$-expectile subspace employed in \citet{gzhh} is not invariant under linear transformations of the data. That is, if one changes the basis of the data, the optimal $\tau$-expectile subspace in the new basis is not necessarily a linear transform of that expressed in the old basis. This means one has to fix a basis for the data before computing the optimal $\tau$-expectile subspace. This restricts the usefulness of this method to applications where there is a natural basis, such as in the Chinese weather dataset, where yearly temperature is expressed as a vector of 365 daily temperatures. Here one would be interested in capturing extreme daily temperature as opposed to extreme temperature expressed in a Fourier basis. However, in many other applications, invariance under change of basis is an important feature of PCA. 

The contributions of our paper is two fold. After defining the basic concepts in the next section \ref{not}, we, first, work with the formulation in \citet{gzhh} in section \ref{sec:buptd} and propose two natural bases, hence two definitions of principal components for the optimal subspace found. Second, in section \ref{sec:pec} we propose an alternative definition of principal components for quantiles and expectiles, closely related to the definition of principal directions for quantiles of \citet{fraimanlopez}. This definition satisfies many nice properties, such as invariance under translations and linear transformations of the data. In particular, it returns the usual PCA basis under elliptically symmetric distributions.  We then provide algorithms to compute the three versions of principal components aforementioned, based on iterative weighted least squares in section \ref{sec:alg}. We prove upper bounds on their convergence times in section \ref{ssec:bounds} and compare their performances in a simulation study in section \ref{sec:sim}. In section \ref{sec:app} of our paper, we show an application to a Chinese weather dataset with a view to pricing weather derivatives. The last section summarizes our findings.

\section{Quantiles and expectiles}\label{not}
\subsection{Definitions}
We now set up notations and recall the definitions of quantile and expectile. In the next two sections we specify the main optimization problems in $\R^p$. 

For $y \in \R^p$, define $y_+ \stackrel{\textup{def}}{=} \max(0, y)$, $y_- \stackrel{\textup{def}}{=} \max(0, -y)$ coordinatewise. For $\tau \in (0,1)$, let $\|\cdot\|_1$ denote the $L_1$-norm in $\R^p$, that is, $\|y\|_1 = \sum_{j=1}^p|y_j|$. Define the asymmetric $L_1$-norm in $\mathbb{R}^p$ via
$$ \|y\|_{\tau, 1} = \tau\|y_+\|_1 + (1-\tau)\|y_-\|_1 = \sum_{j=1}^p|y_j|\cdot \left\{ \tau I(y_j \geq 0) + (1-\tau)I(y_j < 0) \right\}. $$
Similarly, let $\|\cdot\|_2$ denote the $L_2$-norm in $\R^p$, $\|y\|^2_2 = \sum_{j=1}^p y_j^2$. Define the asymmetric $L_2$-norm in $\mathbb{R}^p$ via
$$ \|y\|^2_{\tau,2} = \tau\|y_+\|_2^2 + (1-\tau)\|y_-\|_2^2. $$
When $\tau = 1/2$, we recover a constant multiple of the $L_1$ and $L_2$-norms. These belong to the class of asymmetric norms with sign-sensitive weights, and have appeared in approximation theory, \citet{asym}. Some properties we use in this paper are the fact that these norms are convex, and their unit balls restricted to a given orthant in $\R^p$ are weighted simplices for the $\|\cdot\|_{\tau, 1}$ norm, and axis-aligned ellipsoids for the $\|\cdot\|_{\tau,2}$ norm. In other words, they coincide with the unit balls of axis-aligned weighted $L_1$ and $L_2$ norms.

Let $Y \in \R^p$ be a random variable with cdf $F$. The $\tau$-quantile $q_\tau(Y) \in \R^p$ of $F_Y$ is the solution to the following optimization problem
$$ q_\tau(Y) = \argmin_{q \in \R^p} \rE \| Y - q \|_{\tau, 1}. $$
Similarly, the $\tau$-expectile $e_\tau(Y) \in \R^p$ of $F_Y$ is the solution to
$$ e_\tau(Y) = \argmin_{e \in \R^p} \rE \| Y - e \|^2_{\tau, 2}. $$ 

Since the asymmetric $L_1$ and $L_2$ norms are convex, 
the solution exists and is unique, assuming that $\rE(Y)$ is finite. This definition guarantees that the $\tau$-quantile $q_\tau(Y)$ is unique even when the cdf $F$ is not invertible.

\subsection{Properties}
We collect some mathematical properties of quantiles and expectiles here. These will be useful for proving theorems in future sections. For convenience we shall suppress the dependence on $Y$ where possible. We shall state the next proposition for the one-dimensional case, that is, $Y \in \R$. Analogous results in higher dimensions hold coordinatewise.

\begin{prop}[Properties of expectile \citet{np}]\label{prop:exp.prop} Let $Y \in \R$ be a random variable. Let $F$ be its cdf, $G$ be its first partial moment, defined as
$$ G(x) = \int_{-\infty}^x u\, dF(u). $$ 
Assume that $G(x) < \infty$ for all $x \in \R$
\begin{itemize}
	\item For $\tau \in (0,1)$, $e_\tau(Y+t) = e_\tau(Y) + t$ for $t \in \R$. 
	\item For $\tau \in (0,1)$, 
$$ e_\tau(sY) = \left\{\begin{array}{ccc} se_\tau(Y) & \mbox{ for } & s \in \R, s > 0 \\
-se_{1-\tau}(Y) & \mbox{ for } & s \in \R, s < 0
\end{array} \right. $$
	\item $e_\tau = e_\tau(Y)$ is the $\tau$-quantile of the distribution function $T$, i.e. $\tau = T(e_\tau)$ where
\begin{align}T(x) = \frac{G(x) - x F(x)}{2\{G(x) - x F(x)\} + \{x - \int_{-\infty}^{\infty} u\, dF(u)\}}.\label{Tx}\end{align}
\end{itemize}
\end{prop}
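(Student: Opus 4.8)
The plan is to treat all three statements as consequences of the variational characterization of $e_\tau(Y)$ as the unique minimizer of the strictly convex, differentiable map $e \mapsto \E\|Y-e\|_{\tau,2}^2 = \E[\tau((Y-e)_+)^2 + (1-\tau)((Y-e)_-)^2]$. The first two claims then follow from elementary changes of variable, while the third follows from the stationarity equation of this objective.

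For the translation property I would substitute $e = e' + t$ into $\E\|(Y+t)-e\|_{\tau,2}^2$, so that the objective becomes $\E\|Y - e'\|_{\tau,2}^2$; its minimizer is $e' = e_\tau(Y)$, whence $e_\tau(Y+t) = e_\tau(Y)+t$. For the scaling property with $s>0$, the substitution $e = se'$ together with $(sw)_\pm = s\,w_\pm$ gives $\E\|sY-e\|_{\tau,2}^2 = s^2\,\E\|Y-e'\|_{\tau,2}^2$, so the minimizer simply scales by $s$. For $s<0$ the same substitution $e = se'$ and the identities $(sw)_+ = |s|\,w_-$, $(sw)_- = |s|\,w_+$ interchange the two summands, turning the objective into $|s|^2\,\E\|Y-e'\|_{1-\tau,2}^2$; minimizing over $e'$ returns $e_{1-\tau}(Y)$ and hence the stated expression with $\tau$ replaced by $1-\tau$.

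The third claim is the substantive one. I would first derive the first-order condition by differentiating under the expectation: using $\tfrac{d}{de}((Y-e)_+)^2 = -2(Y-e)_+$ and $\tfrac{d}{de}((Y-e)_-)^2 = 2(Y-e)_-$ (the kink at $Y=e$ contributing a null set), setting the derivative to zero yields $\tau\,\E[(Y-e_\tau)_+] = (1-\tau)\,\E[(Y-e_\tau)_-]$. Writing $x = e_\tau$, $\mu = \int_{-\infty}^{\infty} u\,dF(u)$, and $G(x) = \E[Y\,\1(Y\le x)]$, I would then rewrite the two one-sided moments as $\E[(Y-x)_-] = xF(x) - G(x)$ and $\E[(Y-x)_+] = (\mu - x) + (xF(x) - G(x))$, the latter obtained from the identity $\E[(Y-x)_+] - \E[(Y-x)_-] = \mu - x$.

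Substituting these into the stationarity condition and solving the resulting linear equation for $\tau$ gives $\tau = \frac{xF(x)-G(x)}{2\{xF(x)-G(x)\} + (\mu - x)}$; since the numerator and denominator here are exactly the negatives of those defining $T(x)$ in (\ref{Tx}), the ratio is unchanged and $\tau = T(e_\tau)$ follows. To complete the assertion that $e_\tau$ is the $\tau$-quantile of $T$, I would finally verify that $T$ is a genuine distribution function, i.e. $T(x)\to 0$ as $x\to-\infty$ and $T(x)\to 1$ as $x\to+\infty$, which follows from the limiting behaviour of $xF(x)-G(x)$ and $\mu - x$. The main obstacle is the bookkeeping in this last part: justifying the differentiation through the positive and negative parts, and above all rewriting the one-sided moments in terms of $F$, $G$, and $\mu$ with the correct signs so that the stationarity condition matches the exact sign pattern of $T$ in (\ref{Tx}).
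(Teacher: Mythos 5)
The paper offers no proof of this proposition: it is imported wholesale from Newey and Powell (1987), so there is no internal argument to compare against. Your variational approach (characterize $e_\tau$ as the minimizer of the convex objective $\E\|Y-e\|_{\tau,2}^2$, derive the first-order condition, and translate it into statements about $F$, $G$ and $\mu$) is exactly the standard Newey--Powell route, and your treatment of the first and third bullets is correct. In particular, for the third bullet the chain $\E[(Y-x)_-]=xF(x)-G(x)$, $\E[(Y-x)_+]-\E[(Y-x)_-]=\mu-x$, substitution into $\tau\,\E[(Y-e_\tau)_+]=(1-\tau)\,\E[(Y-e_\tau)_-]$, and the observation that numerator and denominator of $T$ both change sign, does yield $\tau=T(e_\tau)$; the remaining check that $T$ is a bona fide distribution function is exactly what Newey--Powell establish and is needed to speak of its $\tau$-quantile.

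There is, however, one point you pass over too quickly. For $s<0$ your substitution $e=se'$ with $(sw)_+=|s|w_-$ and $(sw)_-=|s|w_+$ gives the objective $s^2\,\E\|Y-e'\|_{1-\tau,2}^2$, minimized at $e'=e_{1-\tau}(Y)$, hence $e_\tau(sY)=s\,e_{1-\tau}(Y)$. That is \emph{not} ``the stated expression'' $-s\,e_{1-\tau}(Y)$: the two differ by a sign (they agree only when $e_{1-\tau}(Y)=0$). Your computation is the correct one --- it matches Newey--Powell's Theorem~1 and is confirmed by the first-order condition at $s=-1$, which forces $e_\tau(-Y)=-e_{1-\tau}(Y)$, whereas the displayed formula would give $e_\tau(-Y)=+e_{1-\tau}(Y)$ --- so the display in the proposition carries a sign typo. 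You should say so explicitly rather than assert that your derivation reproduces the printed formula; as written, your proof claims to establish a statement that your own algebra contradicts.
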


\begin{cor} Suppose $Y \in \R^p$ has a symmetric distribution about 0, and it belongs to the location-scale family. Then for $\tau \in (0,1)$,
$ e_\tau(Y) = -e_{1-\tau}(Y) $, and 
$$ e_\tau(sY + t) = se_\tau(Y) + t $$
for all $s \in \R, t \in \R^p$. In particular, if $Y \sim \textup N(\mu, \sigma^2)$, $Z \sim \textup N(0,1)$, then
$$ e_\tau(Y) = |\sigma| e_\tau(Z) + \mu. $$
\end{cor}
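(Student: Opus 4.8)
The plan is to reduce to the one-dimensional coordinatewise case and then apply Proposition~\ref{prop:exp.prop} together with the distributional symmetry $Y\stackrel{d}{=}-Y$. Since $\|\cdot\|_{\tau,2}^{2}$ is a sum over coordinates, the minimization defining $e_\tau$ separates, so $e_\tau(Y)_j=e_\tau(Y_j)$ and it suffices to prove each identity for a scalar variable; the marginals inherit $Y_j\stackrel{d}{=}-Y_j$ from the symmetry of $Y$ about $0$. The tools I would use are the translation equivariance $e_\tau(Y+t)=e_\tau(Y)+t$ and the homogeneity property of Proposition~\ref{prop:exp.prop} (for $s<0$ the latter expresses $e_\tau(sY)$ through $e_{1-\tau}(Y)$); equivalently, the first-order normal equation $\tau\,\E[(Y-e)_+]=(1-\tau)\,\E[(Y-e)_-]$ characterizing $e_\tau$.

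First I would establish the symmetry relation $e_\tau(Y)=-e_{1-\tau}(Y)$. As the expectile depends only on the law of $Y$ and $Y\stackrel{d}{=}-Y$, we have $e_\tau(Y)=e_\tau(-Y)$; evaluating the homogeneity property at $s=-1$ and tracking the sign gives $e_\tau(-Y)=-e_{1-\tau}(Y)$, hence $e_\tau(Y)=-e_{1-\tau}(Y)$. The same conclusion drops out of the normal equation: the substitution $e\mapsto-e$ interchanges $(\,\cdot\,)_+$ and $(\,\cdot\,)_-$ and swaps $\tau\leftrightarrow 1-\tau$, so $-e_\tau(Y)$ solves the $(1-\tau)$-normal equation and uniqueness of the expectile finishes it.

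Next I would prove the affine relation by combining translation equivariance with homogeneity: $e_\tau(sY+t)=e_\tau(sY)+t$, and then evaluate $e_\tau(sY)$. For $s\ge 0$ homogeneity gives $e_\tau(sY)=s\,e_\tau(Y)$ at once. The delicate branch, which I expect to be the main obstacle, is $s<0$: here homogeneity returns $e_{1-\tau}(Y)$ rather than $e_\tau(Y)$, and one must substitute the symmetry relation $e_{1-\tau}(Y)=-e_\tau(Y)$ from the previous step; the two sign reversals combine to give $e_\tau(sY)=|s|\,e_\tau(Y)$, so that uniformly $e_\tau(sY+t)=|s|\,e_\tau(Y)+t$. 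This is precisely where the absolute value appears, and it is what makes the factor $|\sigma|$ (rather than $\sigma$) show up in the normal case. Finally, using that a symmetric location-scale member $Y\sim\textup{N}(\mu,\sigma^2)$ can be written $Y=\mu+\sigma Z$ with $Z\sim\textup{N}(0,1)$ symmetric about $0$, I would apply the affine relation coordinatewise with $s=\sigma$, $t=\mu$ to conclude $e_\tau(Y)=|\sigma|\,e_\tau(Z)+\mu$.
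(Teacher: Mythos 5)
Your argument is correct and is exactly the intended derivation: the paper states this corollary without proof as an immediate consequence of Proposition~\ref{prop:exp.prop}, and your coordinatewise reduction combined with translation equivariance, homogeneity, and the symmetry relation $e_\tau(-Y)=-e_{1-\tau}(Y)$ is the route it has in mind. Note that what you (correctly) establish is $e_\tau(sY+t)=|s|\,e_\tau(Y)+t$, which is the version consistent with the factor $|\sigma|$ in the normal special case; the displayed formula $se_\tau(Y)+t$ in the corollary is evidently missing the absolute value for $s<0$, so your proof in fact fixes a small inconsistency in the statement rather than containing a gap.
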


Suppose the cdf $F$ is differentiable. Then $q_\tau(Y) = F^{-1}(\tau)$. Let $F_n^{-1}: (0,1) \to \R$ and $F^{-1}: (0,1) \to \R$ denote the empirical and population quantile function, respectively. A classical result of empirical process theory in \citet[\S 2]{pollard} states that 
$$\sqrt{n}(F_n^{-1} - F^{-1})(t) \stackrel{\mathcal L}{\to} \frac{-W^0}{F'(F^{-1})}(t), ~t\in(0,1)$$
where $W^0$ denotes the Brownian bridge on $[0,1]$, and the convergence takes place over the Skorokhod space $D([0,1])$. Now, the last point of Proposition \ref{prop:exp.prop} states that the expectile is indeed the quantile of a function $T$. Thus, one may suspect that the expectile process also satisfies a similar statement. Indeed, we now make this concrete.

\begin{thm}\label{thm:exp.process}
Let $F$ be a differentiable cdf which defines a distribution with mean zero, variance $\sigma^2$. Let $e: (0,1) \to \R, \tau \mapsto e_\tau$ be the expectile function. Let $F_n, e_n$ be the empirical versions. Then for any $0 < \delta < 1$,
$$ \sqrt{n}(e_{n} - e) \stackrel{\mathcal L}{\to} \mathcal{E}, $$
where the convergence takes place over $D([\delta, 1-\delta])$, $\mathcal{E}$ is a stochastic process on $[\delta, 1-\delta]$, whose marginals are normally distributed with mean $0$ and variance
\begin{equation}\label{eqn:variance.jkj}
 \textup{Var}\{\mathcal{E}(\tau)\} = \frac{\textup{\rE}\{\tau (Y-e_\tau)_+ + (1-\tau)(e_\tau-Y)_+\}^2}{\left[\tau\{1-F(e_\tau)\} + (1-\tau)F(e_\tau)\right]^2}
 \end{equation}
for $\tau \in [\delta, 1-\delta]$. 
 \end{thm}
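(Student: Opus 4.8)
The plan is to recognise the empirical expectile as a Z-estimator and to read off the process-level limit from the functional delta method for estimating equations, rather than pushing the quantile representation $e_\tau=T^{-1}(\tau)$ of Proposition~\ref{prop:exp.prop} through directly. The reason for avoiding the latter is exactly the point flagged just before the theorem: although $e_\tau$ is the $\tau$-quantile of $T$, the map $T$ is itself a functional of $F$, so the empirical expectile is $e_{n,\tau}=T_n^{-1}(\tau)$ with $T_n$ built from $F_n$, and one cannot simply invoke the Brownian-bridge quantile result of \citet[\S2]{pollard}, since there both the ``cdf'' and its inverse fluctuate simultaneously. The estimating-equation viewpoint sidesteps this entanglement.

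First I would write down the first-order condition. As the sample objective $e\mapsto\sum_i\|Y_i-e\|_{\tau,2}^2$ is convex and differentiable, $e_{n,\tau}$ is characterised by $\Psi_n(\tau,e_{n,\tau})=0$, where
$$\Psi_n(\tau,e)=\frac1n\sum_{i=1}^n g_\tau(Y_i,e),\qquad g_\tau(y,e)=\tau(y-e)_+-(1-\tau)(e-y)_+,$$
and the population expectile solves $\Psi(\tau,e_\tau)=0$ with $\Psi(\tau,e)=\rE\,g_\tau(Y,e)$. A short computation using the mean-zero assumption gives $\Psi(\tau,e)=(1-2\tau)\{G(e)-eF(e)\}-\tau e$, which vanishes precisely at the root of $T(e_\tau)=\tau$ --- the consistency check with Proposition~\ref{prop:exp.prop}. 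Differentiating under the integral and using $G'(e)=ef(e)$ (with density $f=F'$) yields the clean expression
$$\dot\Psi(\tau,e)\eq\frac{\partial}{\partial e}\Psi(\tau,e)=-\bigl[\tau\{1-F(e)\}+(1-\tau)F(e)\bigr],$$
strictly negative for $\tau\in(0,1)$, $F(e)\in(0,1)$; this is already the negative of the denominator in \eqref{eqn:variance.jkj}.

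Next I would establish the uniform-in-$\tau$ linearisation. Restricting to $\tau\in[\delta,1-\delta]$ keeps $\{e_\tau\}$ in a compact set and keeps $-\dot\Psi(\tau,e_\tau)$ bounded away from $0$, since as $\tau\to0,1$ one has $F(e_\tau)\to0,1$ and the denominator degenerates. On this range the family $\mathcal G=\{y\mapsto g_\tau(y,e):\tau\in[\delta,1-\delta],\,e\in K\}$, with $K$ a compact neighbourhood of $\{e_\tau\}$, is Lipschitz in $(\tau,e)$ with an $L_2(F)$-integrable Lipschitz constant $\lesssim|y|+C$ (square-integrable because $\Var(Y)=\sigma^2<\infty$), hence $\mathcal G$ is Donsker and Glivenko--Cantelli. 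Uniform consistency $\sup_\tau|e_{n,\tau}-e_\tau|\to0$ then follows from monotonicity of $\Psi(\tau,\cdot)$ and the uniform law of large numbers, and the functional Z-estimator master theorem (as in van der Vaart and Wellner, or Kosorok) gives, uniformly in $\tau$,
$$\sqrt n\,(e_{n,\tau}-e_\tau)=\frac{1}{\tau\{1-F(e_\tau)\}+(1-\tau)F(e_\tau)}\;\mathbb G_n g_\tau+o_P(1),\qquad \mathbb G_n g_\tau\eq\sqrt n\,\Psi_n(\tau,e_\tau).$$

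Finally I would read off the limit and its variance. Because $\mathcal G$ is Donsker, $\mathbb G_n g_\cdot\rightsquigarrow\mathbb G g_\cdot$, a tight mean-zero Gaussian process on $[\delta,1-\delta]$; composing with the continuous deterministic factor above gives $\sqrt n\,(e_n-e)\rightsquigarrow\mathcal E$ in $D([\delta,1-\delta])$, Gaussian with mean $0$. For the marginal variance, $\Psi(\tau,e_\tau)=0$ makes $g_\tau(Y,e_\tau)$ mean zero, so $\Var\{\mathbb G g_\tau\}=\rE\,g_\tau(Y,e_\tau)^2$; since $(Y-e_\tau)_+$ and $(e_\tau-Y)_+$ have disjoint support the cross term drops and $\rE\,g_\tau(Y,e_\tau)^2=\rE\{\tau(Y-e_\tau)_++(1-\tau)(e_\tau-Y)_+\}^2$, reproducing the numerator of \eqref{eqn:variance.jkj}; dividing by the squared denominator gives \eqref{eqn:variance.jkj}. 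I expect the main obstacle to be the process-level statement rather than the marginals: one must verify the Donsker property of $\mathcal G$ together with the uniform-in-$\tau$ stochastic equicontinuity that upgrades the pointwise central limit theorem to tightness in $D([\delta,1-\delta])$, and it is precisely the boundary degeneracy of $\dot\Psi$ that forces the $[\delta,1-\delta]$ truncation.
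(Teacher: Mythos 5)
Your argument is correct, and it reaches the result by a genuinely different route from the paper's. The paper splits the work into two pieces: it first proves weak convergence of the \emph{inverse} process $\sqrt{n}(T_n-T)$ (an explicit functional of $F_n$, $G_n$ and $\mu_n$, hence amenable to direct analysis) and transfers this to $\sqrt{n}(e_n-e)$ via the inversion theorem of Doss and Gill; it then obtains the marginal variance separately by treating $e_{\tau,n}$ as the minimizer of a convex objective and invoking Hjort--Pollard, computing $J^{-1}KJ^{-1}$ with $K=4\rE\{\tau(Y-e_\tau)_++(1-\tau)(e_\tau-Y)_+\}^2$ and $J=2[\tau\{1-F(e_\tau)\}+(1-\tau)F(e_\tau)]$, which reproduces \eqref{eqn:variance.jkj}. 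You instead characterize $e_{n,\tau}$ by its estimating equation and run a single functional Z-estimator argument: your computations of $\Psi$, of $\dot\Psi(\tau,e)=-[\tau\{1-F(e)\}+(1-\tau)F(e)]$, and of $\rE g_\tau(Y,e_\tau)^2$ (using the disjoint supports of $(Y-e_\tau)_+$ and $(e_\tau-Y)_+$) all check out and agree with the paper's $J$ and $K$ up to the factor of $2$ that cancels in $J^{-1}KJ^{-1}$, and your consistency check that $\Psi(\tau,\cdot)=0$ recovers $T(e_\tau)=\tau$ is exactly right. What your route buys is a unified treatment: the Donsker property of the Lipschitz-in-$(\tau,e)$ class with square-integrable envelope delivers tightness in $D([\delta,1-\delta])$ and the finite-dimensional limits (indeed the full covariance $\rE\{g_\tau(Y,e_\tau)g_{\tau'}(Y,e_{\tau'})\}/\{\dot\Psi(\tau,e_\tau)\dot\Psi(\tau',e_{\tau'})\}$, which the paper never writes down) in one stroke, whereas the paper must separately argue that the process-level limit from Doss--Gill and the marginals from Hjort--Pollard describe the same object; the paper itself concedes that Hjort--Pollard alone gives only finite-dimensional convergence and that extracting second moments from the inverted process is messy. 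What the paper's route buys is that it leans on the explicit algebraic form of $T_n$ and avoids verifying any empirical-process conditions beyond convergence of $F_n$, $G_n$ and $\mu_n$. Your identification of the boundary degeneracy of $\dot\Psi$ as the reason for the restriction to $[\delta,1-\delta]$ is also the correct diagnosis, and the uniform consistency and uniform linearization steps you sketch are standard for this class; I see no gap.
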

For example, if $\tau = 1/2$, then $e_{1/2,n}$ and $e_{1/2}$ are just the empirical and population mean, and $\textup{Var}\{\mathcal{E}(1/2)\} = \sigma^2$. Thus we recover the classical central limit theorem.
We first give an overview of the proof. We shall prove convergence for the inverse process of $e_{\tau,n}$ and $e_{\tau}$, which is $T_n(e_\tau)$ and $T(e_\tau)$ as defined in Proposition \ref{prop:exp.prop}. Then we invoke the result of \citet{dossgill} to show that the expectile process itself must also converge to a stochastic process $\mathcal{E}_\tau$. Finally, to derive the marginal distribution of $\mathcal{E}_\tau$ with $e_\tau$ being the solution of a \emph{convex} optimization problem. Thus its asymptotic properties, in particular, its limit in distribution, can be derived involving a theorem of \citet{hjort2011asymptotics}. Applying the result of Hjort and Pollard can only give finite dimensional convergence of the process $\sqrt{n}(e_{\tau,n} - e_\tau)$. On the other hand, it is possible to derive Theorem \ref{thm:exp.process} using the result of Doss and Gill alone, however, the computation for the second moment of $\mathcal{E}_\tau$ is quite messy. Thus we choose to only derive the second moment properties of the process $\mathcal{E}_\tau$. 

\begin{proof}
Note that the inverse process of $e_{\tau,n}$ and $e_\tau$ are $T_n: \R \to [0,1], e_\tau \mapsto T_n(e_\tau)$ and $T: \R \to [0,1], e_\tau \mapsto T(e_\tau)$ as defined by (\ref{Tx}) in Proposition \ref{prop:exp.prop}. To be clear,
$$ T_n(x) = \frac{G_n(x) - xF_n(x)}{2\{G_n(x) - xF_n(x)\} + (x - \mu_n)}, $$
where $G_n(x) = \int_{-\infty}^x u\, dF_n(u)$ is the empirical version of $G$, and $\mu_n = \int_{-\infty}^\infty udF_n(u)$ is the empirical mean. By \citet{np}, the functions $T_n, T$ are both distribution functions, and thus they are non-decreasing cadlag functions. We claim that the stochastic processes $\sqrt{n}(T_n - T)$ converges to some stochastic process in the Skorokhod space $D([-\infty,\infty])$. Indeed, note that the processes $\sqrt{n}(G_n - G)$ and $\sqrt{n}(F_n - F)$ both converge to some process on $D([-\infty,\infty])$. Similarly, assuming $\mu = 0$, $\sqrt{n}\mu_n$ converges to the normal distribution with mean $0$, variance $\sigma^2$. The numerator of the fraction $\sqrt{n}\{T_n(x) - T(x)\}$ is
$$\sqrt{n}\{G_n(x) - G(x)\}x - \sqrt{n}\{F_n(x) - F(x)\}x^2 -\sqrt{n}\mu_n\{G(x) - xF(x)\}.$$
Since $F$ has finite second moment, $|G(x) - xF(x)|$ is uniformly bounded for large $x$. Thus the above expression converges in distribution uniformly in $x$. Now, the denominator of the fraction $T_n(x) - T(x)$  is
$$[2\{G(x) - xF(x)\} + x][2\{G_n(x) - xF_n(x)\} + (x - \mu_n)],$$
which converges a.s. for all $x$ to $[2\{G(x) - xF(x)\} + x]^2$, which is bounded away from $0$. Thus the process $\sqrt{n}(T_n - T)$ converges in $D([-\infty,\infty])$. \\
By \citet{dossgill}, this implies that the inverse processes $T_n^{-1} = e_n$, $T^{-1} = e$ must satisfy
$$ \sqrt{n}(e_n - e) \stackrel{\mathcal L}{\to} \mathcal{E} $$
where the convergence takes place over $D([\delta, 1-\delta])$, $\mathcal{E}$ is a stochastic process on $[\delta, 1-\delta]$. Finally, to derive the marginal distribution of $\mathcal{E}_{\tau}$ with $e_\tau$ being the solution of a \emph{convex} optimization problem with differentiable objective function. Thus the asymptotic properties of the empirical estimator $e_{\tau,n}$, and in particular, its limit in distribution, can be derived using the theorem of \citet[Theorem 2]{hjort2011asymptotics}. Explicitly, in their notation, fix $\tau \in [\delta, 1-\delta]$, and let $g_\tau(y,t) = \|y-t\|_{\tau,2}^2 = \tau(y - t)_+^2 + (1-\tau)(t-y)_+^2$ be our objective function. Differentiate with respect to $t$, we find
$$ g'_\tau(y,t) = 2\{-\tau(y-t)_+ + (1-\tau)(t-y)_+\}, \hspace{1em} g''_\tau(y,t) = 2\{\tau I(y \geq t) + (1-\tau)I({y < t})\}. $$
Define $K = \textup{\rE}g'_\tau(Y,t)^2 = 4\textup{\rE}\{\tau (Y-e_\tau)_+ + (1-\tau)(e_\tau-Y)_+\}^2$, and 
$$J = \textup{\rE}\{g''_\tau(Y,e_\tau)\} = \textup{\rE}2\{\tau I(y \geq t) + (1-\tau)I({y < t})\} = 2[\tau\{1-F(e_\tau)\} + (1-\tau)F(e_\tau)].$$
Now, since $g$ is a convex differentiable function, as $t \to e_\tau$, 
$$ \textup{\rE}\{g_\tau(Y,t) - g_\tau(Y,e_\tau)\} = \frac{1}{2}\textup{\rE}\{g''_\tau(Y,e_\tau)\}(e_\tau - t)^2 + {\scriptsize\text{$\mathcal{O}$}}(|t|^2).$$
Therefore, by \citet[Theorem 2]{hjort2011asymptotics}, $\sqrt{n}\{e_{\tau,n} - e_\tau\}$ converges to a normal distribution with mean $0$ and variance $J^{-1}KJ^{-1}$, which in our case simplifies to (\ref{eqn:variance.jkj}).
\end{proof}


\section{Principal components as error minimizers}\label{sec:buptd}
There are multiple, equivalent ways to define standard PCA, which generalize to different definitions of principal components for quantiles and expectiles. We focus on two formulations: minimizing the residual sum of squares, and maximizing the variance capture. 
\subsection{Review of PCA}
Suppose we observe $n$ vectors $Y_1, \ldots, Y_n \in \R^p$ with edf $F_n$. Write $Y$ for the $n \times p$ data matrix. PCA solves for the $k$-dimensional affine subspace that best approximates $Y_1, \ldots, Y_n$ in $L_2$-norm. In matrix terms, we are looking for the constant $m^\ast \in \R^p$ and the matrix $E^\ast_k$, the rank-$k$ matrix that best approximates $Y - \mathbf{1}(m^\ast)^\top$ in the Frobenius norm. That is,
\begin{equation}\label{eqn:optim.estar} 
(m^\ast_k,E^\ast_k) = \argmin_{m\in \R^p, E \in \R^{n \times p}: rank(E) = k} \|Y - \mathbf{1}m^\top - E \|^2_{1/2, 2}.
\end{equation}
As written, $m$ is not well-defined: if $(m,E)$ is a solution, then $(m+c,E-\mathbf{1}c^\top)$ is another equivalent solution for any $c$ in the column space of $E$. Geometrically, this means we can express the affine subspace $m+E$ with respect to any chosen point $m$. It is intuitive to choose $m$ to be the best constant in this affine subspace that approximates $Y$. By a least squares argument, the solution is $m_k^\ast = \rE(Y)$. That is, it is independent of $k$ and coincides with the best constant approximation to $Y$. Thus, it is sufficient to assume $\rE(Y) = m \equiv 0$, and consider the optimization problem in (\ref{eqn:optim.estar}) without the constant term.

Suppose $Y$ is full rank and the eigenvalues of its covariance matrix are all distinct. Again by least squares argument, for $1 \leq k < p$, the column space of $E_k^\ast$ is contained in the column space of $E_{k+1}^\ast$, and $E_{k+1}^\ast - E_k^\ast$ is the optimal rank-one approximation of $Y - E_k^\ast$. This has two implications. Firstly, there exists a natural basis for $E_k^\ast$. Indeed, there exists a unique ordered sequence of orthonormal vectors $v_1,v_2,\ldots,v_p \in \R^p$ such that $E_1^\ast = U_1V_1^\top$,$E_2^\ast=U_2V_2^\top$, and so on, where the columns of $V_k$ are the first $k$ $v_i$'s. The $v_i$'s are called the \emph{principal components}, or \emph{factors}. For fixed $k$, $V_k$ is the \emph{component}, or \emph{factor matrix}, and $U_k$ is the \emph{loading}. 

Secondly, a greedy algorithm reduces computing the components $v_1,v_2,\ldots$ to computing the first component, in other words, solving (\ref{eqn:optim.estar}) for $k=1$. Every rank-one matrix $E \in \R^{n \times p}$ has a unique decomposition $E = UV^\top$ for $U \in \R^{n \times 1}$, $V \in \R^{p \times 1}$ with $V^\top V = 1$. Thus, solving (\ref{eqn:optim.estar}) is equivalent to an unconstrained minimization problem over the pair of matrices $(U,V)$ with objective
$$ J(U,V) = \|Y - UV^\top\|_{1/2, 2}^2 = \sum_{i,j}(Y_{ij} - \sum_l U_{il}V_{jl})^2.$$
For fixed $U$, $J$ is a quadratic in the entries of $V$, and vice versa. Since all local minima of $J$ are global, see \citet{srebro}, $J$ can be efficiently minimized using an iterative least squares algorithm, leading to an efficient method for performing PCA for small $k$ in large datasets. 

\subsection{Analogues for expectiles}\label{sec:errmin}
We now generalize the above definition of PCA to handle expectiles. The quantiles case follows similarly, and algorithms for $L_1$ matrix factorization can also be adapted to this case. Recall that we are looking for the best $k$-dimensional affine subspace which minimizes the asymmetric $L_2$-norm. The analogue of (\ref{eqn:optim.estar}) is the following low-rank matrix approximation problem
\begin{equation}\label{eqn:optim.estar.tau} 
(m^\ast_k,E^\ast_k) = \argmin_{m\in \R^p, E \in \R^{n \times p}: rank(E) = k} \|Y - \mathbf{1}m^\top - E \|^2_{\tau, 2}.
\end{equation}
Again, we may define $m$ to be the best constant approximation to $Y$ on the affine subspace determined by $(m,E)$. For a fixed affine subspace, such a constant is unique, and is the coordinatewise $\tau$-expectile of the residuals $Y - E$. However, the expectile is not additive for $\tau \neq 1/2$. Thus in general, the column space of $E_k^\ast$ is not a subspace of the column space $E_{k+1}^\ast$, the constant $m_k^\ast$ depends on $k$, and is not equal to the $\tau$-expectile $e_\tau(Y)$. 

Let us fix $k$ and consider the problem of computing $m_k^\ast$ and $E_k^\ast$. Write a rank-$k$ matrix $E$ as $E = UV^\top$, where $U \in \R^{n \times k}, V \in \R^{p \times k}$. Adjoin $U$ with an all-1 column to form $\tilde{U}$, and adjoin $m$ to the corresponding column of $V$ to form $\tilde{V}$. Thus $\mathbf{1}m^\top + E = \tilde{U}\tilde{V}^\top$. Equation (\ref{eqn:optim.estar.tau}) is an unconstrained minimization problem over the pair of (adjoined) matrices $(\tilde{U}, \tilde{V})$ with minimization objective
$$ J(\tilde{U},\tilde{V},W) = \|Y - \tilde{U}\tilde{V}^\top\|_{\tau, 2}^2 = \sum_{i,j}w_{ij}(Y_{ij} - m_j - \sum_l U_{il}V_{jl})^2.$$
where the weights $w_{ij}$ are sign-dependent:
$w_{ij} = \tau$ if $Y_{ij} - m_j - \sum_l U_{il}V_{lk} > 0$, $w_{ij} = 1-\tau$ otherwise. 

This objective function is not jointly convex in $\tilde{U}$ and $\tilde{V}$. However, for fixed $\tilde{U}$, in each coordinate $ij$, it is the asymmetric $L_2$-norm of a linear combination in the entries of $\tilde{V}$, and hence convex. Similarly, $J$ is convex in $\tilde{U}$ for fixed $\tilde{V}$. Therefore, an iterative weighted least squares solution with weight update at each step is guaranteed to converge to a critical point of $J$ (cf. Proposition \ref{prop:laws.gradient.descent}). This algorithm (cf Algorithm \ref{alg:laws}) is called asymmetric weighted least squares (LAWS), see \citet{np} and \citet{schnabel}. While there are local minima, we find that the algorithm often finds the global minimum quite quickly, supporting similar observations in the literature for fixed weight matrix $[w_{ij}]$, as in \citet{srebro}.

For $k > 1$, the decomposition $E = UV^\top$ is not unique: for any $k \times k$ matrix $R$, the matrix $(UR, V(R^\top)^{-1})$ is another equivalent factorization.  To specify a unique solution we need a choice for $V$. This is one of the unaddressed issues in \citet{gzhh}, and certainly a key difficulty. While there are algorithms to solve for $(m_k^\ast, E_k^\ast)$ for fixed $k$, there is no natural basis for $E_k^\ast$ which reveals information on $E_j^\ast$ for $j < k$. Hence, we do not have a direct analogue for principal components for $\tau$-expectiles.

To furnish a principal components basis for $E^\ast_k$ based on LAWS, we propose two algorithms: TopDown and BottomUp. These are two definitions, described as algorithms, which output is a nested sequence of subspaces, each approximating $E_j^\ast$ for $j = 1, \ldots, k$. They lead to two different definitions of principal components. 

\begin{defn} Given data $Y \in \R^{n \times p}$ and an integer $k \geq 1$, the first $k$ \emph{TopDown principal components} are the outputs of the TopDown algorithm with input $(Y,k)$. The first $k$ \emph{BottomUp principal components} are the outputs of the BottomUp algorithm with input $(Y,k)$. 
\end{defn}

In TopDown, one first finds $E_k^\ast$. Then for $j = 1, 2, \ldots, k-1$, one finds $E_j$, the best $j$-dimensional subspace approximation to $Y-m_k^\ast$, subjected to $E_{j-1} \subset E_j \subset E_k^\ast$. This defines a nested sequence of subspace $E_1\subset E_2 \subset \ldots \subset E_{k-1} \subset E_k^\ast$, and hence a basis for $E_k^\ast$, such that $E_j$ is an approximation of the best $j$-dimensional subspace approximation to $Y-m_k^\ast$ contained in $E_k^\ast$. We solve (\ref{eqn:optim.estar.tau}) 
since $(m_k^\ast,E_k^\ast)$ is the true minimizer in dimension $k$, and thus we knew the optimal constant term. 

In BottomUp, one first finds $E_1^\ast$. Then for $j = 2, \ldots, k$, one finds $(m_j,E_j)$, the optimal $j$-dimensional affine subspace approximation to $Y$, subjected to $E_{j-1} \subset E_j$. In each step we re-estimate the constant term. Again, we obtain a nested sequence of subspaces $E_1^\ast \subset E_2 \subset \ldots \subset E_k$, and constant terms $m_1, \ldots, m_k$, where $(m_j,E_j)$ is an approximation to the best affine $j$-dimensional subspace approximation to $Y$. 

When $\tau = 1/2$, that is, when doing usual PCA, both algorithms correctly recover the principal components. For $\tau \neq 1/2$, they can produce different output. Interestingly, both in simulations and in practice, their outputs are not significantly different (see Sections~\ref{sec:sim} and \ref{sec:app}). See Section \ref{sec:alg} for a formal description of the TopDown and BottomUp algorithms and computational bounds on their convergence times.

\subsection{Statistical properties}
Even for $\tau = 1/2$, the objective function $J(U,V)$ is not simultaneously convex in both $U$ and $V$, but it is a convex function when either one of the two arguments is kept fixed. 
By the same argument, one can show that the same property holds for $J(U,V,W)$. That is, if $U$ is kept fixed, then $J(U,V,W)$ (which is now a function of V only, as W is a function of U and V) is convex in $V$. Similarly, if $V$ is kept fixed, then $J(U,V,W)$ is a convex function in $U$. Applying the result of \citet{hjort2011asymptotics}, we see that in each iteration, $V_n^{(t+1)}$ differs from $V^{(t+1)}$ by a term of order $\mathcal O(n^{-1/2})$. Thus, if the total number of iterations is small, one can prove consistency of the iterative least squares algorithm. We are not able to obtain a theoretical bound on the total number of iterations. In practice this does indeed seem to be small.

\section{Principal components as maximizers of captured variance} \label{sec:pec}

\subsection{Review of PCA}
Again, suppose we observe $n$ vectors $Y_1, \ldots, Y_n \in \R^p$. The first principal component $\phi^\ast$ is the unit vector in $\R^p$ which maximizes the variance of the data projected onto the subspace spanned by $\phi^\ast$. That is,
\begin{equation}\label{eqn:optim.variance}
\phi^\ast = \argmax_{\phi \in \R^p, \phi^\top\phi = 1}\textup{Var}(\phi\phi^\top Y_i: 1 \leq i \leq n) = \argmax_{\phi \in \R^p, \phi^\top\phi = 1}{n^{-1}}\sum_{i=1}^n (\phi^\top Y_i-\overline{\phi^\top Y})^2,
\end{equation}
where $\overline{\phi^\top Y} = n^{-1}\sum_{i=1}^n \phi^\top Y_i = \phi^\top \bar{Y}$ is the mean of the projected data, or equivalently, the projection of the mean $\bar{Y}$ onto the subspace spanned by $\phi$. Given that the first principal component is $\phi^\ast_1$, the second principal component $\phi^\ast_2$ is the unit vector in $\R^p$ which maximizes the variance of the residual $Y_i-(\phi_1^\ast)^\top\bar{Y}-\phi_1^\ast(\phi_1^\ast)^\top Y_i$, and so on. In this formulation, the data does not have to be pre-centered. The sum $(\phi_1^\ast)^\top\bar{Y} + (\phi_2^\ast)^\top\bar{Y} + \ldots + (\phi_{k}^\ast)^\top\bar{Y}$ is the overall mean $\bar{Y}$ projected onto the subspace spanned by the first $k$ principal components. 
For the benefit of comparison to Theorem \ref{thm:pec}, let us reformulate PCA as an optimization problem. Define 
\begin{equation}\label{eqn:C.pca}
C = {n^{-1}}\sum_{i=1}^n(Y_i - \bar{Y})(Y_i - \bar{Y})^\top.
\end{equation}
Then $\phi^\ast$ is the solution to the following optimization problem.
\begin{align*}
\mbox{maximize } & \phi^\top C\phi \\
\mbox{subject to } & \phi^\top\phi = 1.
\end{align*}
The principal component is not necessarily unique: if the covariance matrix is the identity, for example, then any unit vector $\phi$ would solve (\ref{eqn:optim.variance}), and thus there is no unique principal component. In the discussions that follows, we implicitly assume that the principal component $\phi^\ast$ is unique. In other words, $C$ has a unique largest eigenvalue. 

\subsection{An analogue for expectiles}
Let $Y \in \R$ be a random variable with cdf $F$. We define its \emph{$\tau$-variance} to be
$$ \textup{Var}_{\tau}(Y) = \rE\|Y - e_\tau\|_{\tau, 2}^2 = \min_{e \in \R}\rE\|Y - e\|_{\tau, 2}^2 $$
where $e_\tau = e_\tau(Y)$ is the $\tau$-expectile of $Y$. When $\tau = 1/2$, this reduces to the usual definition of variance. The following are immediate from Proposition \ref{prop:exp.prop}

\begin{prop}[Properties of $\tau$-variance] \label{prop:var.tau}
Let $Y \in \R$ be a random variable. For $\tau \in (0,1)$, the following statements hold.
\begin{itemize}
	\item $\textup{Var}_\tau(Y + c) = \textup{Var}_\tau(Y)$ for $c \in \R$
	\item $\textup{Var}_\tau(sY) = s^2\textup{Var}_\tau(Y)$ for $s \in \R, s > 0$. 
	\item $\textup{Var}_\tau(-Y) = \textup{Var}_{1-\tau}(Y)$
\end{itemize}
\end{prop}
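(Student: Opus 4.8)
The plan is to reduce all three identities to two elementary transformation rules for the scalar asymmetric norm, combined with the matching transformation rules for the expectile recorded in Proposition~\ref{prop:exp.prop}. First I would record the two facts, both one-line consequences of the definition $\|y\|_{\tau,2}^2 = \tau (y_+)^2 + (1-\tau)(y_-)^2$. For $s > 0$ we have $(sy)_+ = s\,y_+$ and $(sy)_- = s\,y_-$, so $\|sy\|_{\tau,2}^2 = s^2 \|y\|_{\tau,2}^2$. For the sign flip, $(-y)_+ = y_-$ and $(-y)_- = y_+$, so flipping the sign interchanges the two weights and yields the reflection identity $\|-y\|_{\tau,2}^2 = \tau (y_-)^2 + (1-\tau)(y_+)^2 = \|y\|_{1-\tau,2}^2$. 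Throughout I will use the plug-in form $\textup{Var}_\tau(Y) = \rE\|Y - e_\tau(Y)\|_{\tau,2}^2$ alongside the minimality characterization $\textup{Var}_\tau(Y) = \min_{e}\rE\|Y-e\|_{\tau,2}^2$.

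For translation invariance I would invoke the shift rule $e_\tau(Y+c) = e_\tau(Y) + c$ from Proposition~\ref{prop:exp.prop}, which gives $(Y+c) - e_\tau(Y+c) = Y - e_\tau(Y)$; the residual is unchanged, hence so is its $\|\cdot\|_{\tau,2}^2$-expectation. (Equivalently, the substitution $e \mapsto e+c$ leaves the minimization defining $\textup{Var}_\tau$ invariant.) For positive scaling, the homogeneity rule $e_\tau(sY) = s\,e_\tau(Y)$ for $s>0$ gives $sY - e_\tau(sY) = s\bigl(Y - e_\tau(Y)\bigr)$, and the first norm identity pulls out the factor $s^2$, yielding $\textup{Var}_\tau(sY) = s^2\,\textup{Var}_\tau(Y)$.

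The reflection identity is the one that needs care, because it is where the index must migrate from $\tau$ to $1-\tau$, and the migration occurs twice and must be arranged to cancel into the clean statement. I would start from $\textup{Var}_\tau(-Y) = \rE\|{-Y} - e_\tau(-Y)\|_{\tau,2}^2$ and substitute the reflection rule for the expectile, $e_\tau(-Y) = -e_{1-\tau}(Y)$ (the $s<0$ instance of the scaling property of Proposition~\ref{prop:exp.prop}), so that $-Y - e_\tau(-Y) = -\bigl(Y - e_{1-\tau}(Y)\bigr)$. The norm reflection identity then converts the outer minus sign into a swap of the norm index, $\|-\bigl(Y - e_{1-\tau}(Y)\bigr)\|_{\tau,2}^2 = \|Y - e_{1-\tau}(Y)\|_{1-\tau,2}^2$, and the right-hand side is by definition $\rE\|Y - e_{1-\tau}(Y)\|_{1-\tau,2}^2 = \textup{Var}_{1-\tau}(Y)$.

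The main thing to check in this last step, and the one place I expect a reader to want reassurance, is the consistency of the two index swaps: the $1-\tau$ produced in the expectile subscript is exactly the index demanded by the norm reflection, so that $e_{1-\tau}(Y)$ is precisely the minimizing constant for the $(1-\tau)$-norm and no stray index or scalar factor survives. If one prefers to avoid the expectile formulas altogether, each identity also follows directly from the minimality characterization by the corresponding change of variable in the minimization, with the reflection case again relying on $\|-y\|_{\tau,2}^2 = \|y\|_{1-\tau,2}^2$; I regard this norm reflection identity as the real content of the proposition.
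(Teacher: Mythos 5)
Your argument is correct and matches the paper's own proof: the first two identities are read off from the translation and positive-homogeneity properties of $e_\tau$ in Proposition~\ref{prop:exp.prop}, and the reflection identity is obtained by substituting $e_\tau(-Y)=-e_{1-\tau}(Y)$ and applying the norm identity $\|-y\|_{\tau,2}^2=\|y\|_{1-\tau,2}^2$, exactly as in the paper. Your extra remarks on the consistency of the two index swaps and the alternative via the minimality characterization are sound but not needed.
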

\begin{proof} The first two follow directly from corresponding properties for $e_\tau$. We shall prove that last assertion. Recall that $e_\tau(-Y) = -e_{1-\tau}(Y)$. Thus
\begin{align*} \textup{Var}_\tau(-Y) &= \rE\|-Y - e_\tau(-Y)\|_{\tau,2}^2 = \rE\|-\{Y - e_{1-\tau}(Y)\}\|_{\tau,2}^2 = \rE\|Y - e_{1-\tau}(Y)\|_{1-\tau,2}^2 \\
&= \textup{Var}_{1-\tau}(Y). \end{align*}
\end{proof}

If $\phi \in \R^p$ is a unit vector, that is, $\phi^\top\phi =1$, then we define
$$ \textup{Var}_\tau(\phi\phi^\top Y_i: 1 \leq i \leq n) = \textup{Var}_\tau(\phi^\top Y_i: 1 \leq i \leq n). $$
That is, the $\tau$-variance of $n$ vectors which are multiples of $\phi$ is just the $\tau$-variance of the coefficients, which is a sequence of \emph{real numbers}. Thus, the direct generalization of (\ref{eqn:optim.variance}) would be 
\begin{align}
\phi^\ast_\tau & = \argmax_{\phi \in \R^p, \phi^\top \phi = 1}\textup{Var}_\tau(\phi\phi^\top  Y_i: 1 \leq i \leq n) = \argmax_{\phi \in \R^p, \phi^\top \phi = 1}\textup{Var}_\tau(\phi^\top Y_i: 1 \leq i \leq n) \label{eqn:optim.var.tau.raw} \\
&= \argmax_{\phi \in \R^p, \phi^\top \phi = 1}{n^{-1}}\sum_{i=1}^n(\phi^\top Y_i - \mu_\tau)^2w_i
\label{eqn:optim.variance.tau} 
\end{align}
where $\mu_\tau \in \R$ is the $\tau$-expectile of the sequence of $n$ real numbers $\phi^\top Y_1, \ldots \phi^\top Y_n$, and 
\begin{equation}\label{eqn:wi.variance}
w_i = \tau \mbox{ if } \sum_{j=1}^p Y_{ij}\phi_j > \mu_\tau, \mbox{ and } w_i = 1-\tau \mbox{ otherwise.}
\end{equation}

\begin{defn} Suppose we observe $Y_1, \ldots, Y_n \in \R^p$. The first \emph{principal expectile component} (PEC) $\phi_\tau^\ast$ is the unit vector in $\R^p$ that maximizes the $\tau$-variance of the data projected on the subspace spanned by $\phi_\tau^\ast$. That is, $\phi_\tau^\ast$ solves (\ref{eqn:optim.variance.tau}).
\end{defn}

`The' principal expectile component is not necessarily unique. In classical PCA, the first principal component is only unique if and only if the covariance matrix has a unique maximal eigenvalue. Even then, under this assumption, the principal component is only unique up to sign. That is, if $\phi$ is the principal component, then $-\phi$ is also a principal component. Principal expectile component, on the other hand, are sign-sensitive in general, unless if the distribution of $Y$ is symmetric, or if $\tau = 1/2$. We make this observation concrete below, which is a Corollary of Proposition \ref{prop:var.tau}. 
\begin{cor}
For $\tau \in (0,1)$, random variable $Y \in \R^p$, suppose $\phi_\tau^\ast$ is a first $\tau$-PEC of $Y$. Then
$$ -\phi_\tau^\ast = \phi_{1-\tau}^\ast,$$
that is, $-\phi_\tau^\ast$ is also a first $(1-\tau)$-PEC of $Y$. Furthermore, if the distribution of $Y$ is symmetric about $0$, that is, $Y \stackrel{\mathcal L}{=} -Y$, then
$-\phi_{\tau}^\ast$ is also a first $\tau$-PEC of $Y$. 
\end{cor}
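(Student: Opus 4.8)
The plan is to reduce everything to the scalar projected variable $Z = \phi^\top Y$ and invoke the third bullet of Proposition \ref{prop:var.tau}, which after the substitution $\tau \mapsto 1-\tau$ reads $\textup{Var}_{1-\tau}(-Z) = \textup{Var}_\tau(Z)$. The objective in (\ref{eqn:optim.variance.tau}) depends on $\phi$ only through the one-dimensional law of $\phi^\top Y$ (or, in the empirical setting, through the sequence $\phi^\top Y_1,\ldots,\phi^\top Y_n$), so this single identity will drive both assertions.

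First I would establish, for every unit vector $\phi$, the identity
$$\textup{Var}_{1-\tau}\big((-\phi)^\top Y\big) = \textup{Var}_{1-\tau}\big(-(\phi^\top Y)\big) = \textup{Var}_\tau(\phi^\top Y),$$
where the last equality is Proposition \ref{prop:var.tau} applied to the scalar $Z = \phi^\top Y$ with $\tau$ replaced by $1-\tau$. Since the map $\phi \mapsto -\phi$ is a bijection of the unit sphere $\{\phi : \phi^\top\phi = 1\}$, taking the supremum over $\phi$ on both sides shows that the $\tau$- and $(1-\tau)$-objectives share the same optimal value, and, more precisely, that $\phi$ maximizes the $\tau$-objective if and only if $-\phi$ maximizes the $(1-\tau)$-objective. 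Hence if $\phi_\tau^\ast$ is a first $\tau$-PEC, then $-\phi_\tau^\ast$ is a first $(1-\tau)$-PEC, which is the assertion $-\phi_\tau^\ast = \phi_{1-\tau}^\ast$.

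For the symmetric case I would first note that $Y \stackrel{\mathcal L}{=} -Y$ forces $\phi^\top Y \stackrel{\mathcal L}{=} -\phi^\top Y$ for every $\phi$, i.e.\ each scalar projection is symmetric about $0$ (in the empirical setting the same conclusion holds because the $\tau$-variance of the projection depends only on the distribution of the projected values). Applying the third bullet of Proposition \ref{prop:var.tau} to this symmetric $Z = \phi^\top Y$ gives $\textup{Var}_\tau(\phi^\top Y) = \textup{Var}_\tau(-\phi^\top Y) = \textup{Var}_{1-\tau}(\phi^\top Y)$, so the $\tau$- and $(1-\tau)$-objectives coincide as functions of $\phi$ and therefore have identical maximizer sets; in particular, every first $(1-\tau)$-PEC is a first $\tau$-PEC. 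Combining this with the first part, $-\phi_\tau^\ast$ is a first $(1-\tau)$-PEC and hence a first $\tau$-PEC, as claimed.

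I expect the only genuine obstacle to be bookkeeping: keeping the $\argmax$ correspondence under the sign flip correct, and being explicit that the objective sees $\phi$ only through the law (empirical distribution) of $\phi^\top Y$, so that the one-dimensional Proposition \ref{prop:var.tau} transfers verbatim. Because PECs need not be unique, I would phrase both conclusions as equalities of maximizer sets rather than of individual vectors, which matches the `a first PEC' wording of the statement.
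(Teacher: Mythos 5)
Your proposal is correct and follows essentially the same route as the paper: both reduce to the scalar projection $\phi^\top Y$ and apply the identity $\textup{Var}_\tau(-Z)=\textup{Var}_{1-\tau}(Z)$ from Proposition \ref{prop:var.tau}, first to transfer maximizers under the sign flip and then, via $Y \stackrel{\mathcal L}{=} -Y$, to identify the $\tau$- and $(1-\tau)$-objectives. Your version is if anything slightly more careful (explicit bijection of the sphere, maximizer sets rather than individual vectors), and it cleans up a small typographical slip in the paper's displayed chain of equalities for the symmetric case.
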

\begin{proof}
By Proposition \ref{prop:var.tau}, $\textup{Var}_\tau(\phi_\tau^{\ast\top} Y) = \textup{Var}_{1-\tau}\{(-\phi_\tau^{\ast\top}) Y\}$. Thus if $\phi_\tau^\ast$ solves (\ref{eqn:optim.var.tau.raw}) for $\tau$, then $(-\phi_\tau)^\ast$ solves (\ref{eqn:optim.var.tau.raw}) for $1-\tau$. If the distribution of $Y$ is symmetric about $0$, then 
$$\textup{Var}_\tau(\phi_\tau^{\ast\top} Y) = \textup{Var}_{1-\tau}\{\phi_\tau^{\ast\top} (-Y)\} = \textup{Var}_\tau(\phi_\tau^{\ast\top} Y).$$ 
In this case $ -\phi_\tau^\ast = \phi_{1-\tau}^\ast$ is another $\tau$-PEC of $Y$.
\end{proof}

Like in classical PCA, the other components are defined based on the residuals, and thus by definition, they are orthogonal to the previously found components. Therefore one obtains a nested sequence of subspace which captures the tail variations of the data. 

By replacing the $\|\cdot\|_{\tau,2}^2$ norm with the $\|\cdot\|_{\tau,1}$ norm, one can define the analogue of principal component for quantiles. The analogue of $\tau$-variance is the $\tau$-deviation
$$ \textup{Dev}_\tau(Y) = \rE\|Y - q_\tau(Y)\|_{\tau,1} = \min_{q\in\R^p}\rE\|Y - q\|_{\tau,1}. $$
The $\tau$-deviation is linear rather than quadratic with respect to constants, that is, $\textup{Dev}_\tau(cY)=c\textup{Dev}_\tau(Y)$ for $c > 0$, we consider vectors in the $L_1$ unit ball rather than the $L_2$ unit ball. Define the $\tau$-deviance of $n$ vectors which are multiples of a vector $\psi \in \R^p$ to be the $\tau$-deviance of the coefficients. That is,
$$\textup{Dev}_\tau(\psi\psi^\top Y_i: 1 \leq i \leq n) = \textup{Dev}_\tau(\psi^\top Y_i: 1 \leq i \leq n) $$
This leads to the optimization problem
$$ \psi_\tau^\ast = \argmax_{\psi \in \R^p: \sum_j|\psi_j| = 1}\textup{Dev}_\tau(\psi\psi^\top Y_i: 1 \leq i \leq n). $$
\begin{defn} The first \emph{principal quantile component} $\psi_\tau^\ast$ is the $L_1$-unit vector in $\R^p$ that maximizes the $\tau$-deviation captured by the data projected on the subspace spanned by~$\psi_\tau^\ast$.
\end{defn}

Generalizing principal components to quantiles via its interpretation as variance maximizer is not new. \citet{fraimanlopez} define the first principal quantile direction $\psi$ to be the one that maximizes the $L_2$ norm of the $\tau$-quantile of the centered data, projected in the direction $\psi$. That is, $\psi$ is the solution of
$$ \max_{\psi \in \R^p: \psi^\top \psi = 1}\|\psi^\top q_\tau(Y - \rE Y)\|_{1/2,2}. $$
Their definition works for random variables in arbitrary Hilbert spaces. \citet{kongmizera} proposed the same definition but without centering $Y$ at $\rE Y$. These authors used the principal directions computed to study quantile level sets of distributions in small dimensions. Compared to these work, our definition is very natural, can be extended to Hilbert spaces, and in the case of expectile, satisfies many `nice' properties, some of which are shared by the principal directions of \citet{fraimanlopez}. For example, the PEC coincides with the classical PC when the distribution of $Y$ is elliptically symmetric.

\begin{prop}\label{proppec}[Properties of principal expectile component]
Let $Y \in \R^p$ be a random variable, $\phi_\tau^\ast(Y)$ its unique first principal expectile component.
\begin{enumerate}
	\item For any constant $c \in \R^p$, $\phi_\tau^\ast(Y+c) = \phi_\tau^\ast(Y)$. In words, the PEC is invariant under translations of the data.
	\item If $B \in \R^{p \times p}$ is an orthogonal matrix, then $\phi_\tau^\ast(BY) = B\phi_\tau^\ast(Y)$. In words, the PEC respects change of basis.
	\item If the distribution of $Y$ is elliptically symmetric about some point $c \in \R^p$, that is, there exists an invertible $p \times p$ real matrix $A$ such that $BA^{-1}(Y -c) \stackrel{\mathcal L}{=} A^{-1}(Y-c)$ for all orthogonal matrix $B$, then $\phi_\tau^\ast(Y) = \phi_{1/2}^\ast(Y)$. In this case, the PEC coincides with the classical PC regardless of $\tau$.
	\item If the distribution of $Y$ is spherically symmetric about some point $c \in \R^p$, that is, $B(Y -c) \stackrel{\mathcal L}{=} Y-c$ for all orthogonal matrix $B$, then all directions are principal. 
\end{enumerate}
\end{prop}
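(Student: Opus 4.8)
The plan is to derive all four statements from the way $\textup{Var}_\tau$ of a one-dimensional projection $\phi^\top Y$ transforms, so that each reduces to an application of Proposition~\ref{prop:var.tau}. Throughout I work with the population objective $\phi \mapsto \textup{Var}_\tau(\phi^\top Y)$, which is the function being maximised over the unit sphere, and I use only that this objective depends on $\phi$ through the law of $\phi^\top Y$. Parts~1 and~2 are immediate from this viewpoint.

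For part~1, fix a unit vector $\phi$ and observe $\phi^\top(Y+c) = \phi^\top Y + \phi^\top c$, where $\phi^\top c \in \R$ is a scalar constant; the first item of Proposition~\ref{prop:var.tau} shows the objective is unchanged for every $\phi$, so the argmax is identical and $\phi_\tau^\ast(Y+c) = \phi_\tau^\ast(Y)$. For part~2, setting $\psi = B^\top\phi$ gives $\phi^\top(BY) = \psi^\top Y$, and since $B$ is orthogonal $\psi$ ranges bijectively over the unit sphere as $\phi$ does; hence the maximiser $\psi^\ast = \phi_\tau^\ast(Y)$ corresponds to $\phi^\ast = B\psi^\ast$, yielding $\phi_\tau^\ast(BY) = B\phi_\tau^\ast(Y)$.

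The substantive work is in parts~3 and~4, and by part~1 I may first translate so that $c = 0$. For part~4, spherical symmetry means $BY \stackrel{\mathcal L}{=} Y$ for every orthogonal $B$; given unit vectors $\phi,\phi'$ I pick orthogonal $B$ with $\phi' = B\phi$, so that $\phi'^\top Y = \phi^\top B^\top Y \stackrel{\mathcal L}{=} \phi^\top Y$ and the objective is constant on the sphere, i.e. every direction is principal. For part~3, write $Y = AZ$ with $Z = A^{-1}Y$ spherically symmetric. For a unit $\phi$ set $u = A^\top\phi$, which is nonzero since $A^\top$ is invertible; spherical symmetry forces the law of $u^\top Z$ to depend only on $\|u\|_2$, so (via an orthogonal map sending $e_1$ to $u/\|u\|_2$) we get $\phi^\top Y = u^\top Z \stackrel{\mathcal L}{=} \|u\|_2\, W$ for the fixed one-dimensional law $W = e_1^\top Z$.

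The key step, and the main obstacle, is then the factorisation: by the positive-scaling item of Proposition~\ref{prop:var.tau},
$$ \textup{Var}_\tau(\phi^\top Y) = \|u\|_2^2\,\textup{Var}_\tau(W) = \textup{Var}_\tau(W)\cdot \phi^\top A A^\top \phi, $$
in which $\textup{Var}_\tau(W)$ is a constant independent of $\phi$. Hence maximising the $\tau$-variance over the unit sphere is equivalent to maximising the quadratic form $\phi^\top AA^\top\phi$, a problem that does not involve $\tau$; its maximiser is therefore the same for every $\tau$, in particular equal to the one obtained at $\tau = 1/2$, which is by definition the classical principal component. This gives $\phi_\tau^\ast(Y) = \phi_{1/2}^\ast(Y)$. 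The only care needed is in justifying $u^\top Z \stackrel{\mathcal L}{=} \|u\|_2 W$ and that the scaling property applies with the nonnegative factor $\|u\|_2$, both of which are routine given invertibility of $A$.
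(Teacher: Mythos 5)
Your proof is correct. Parts 1 and 2 coincide with the paper's argument (both are direct applications of Proposition \ref{prop:var.tau} plus the observation that $\phi\mapsto B^\top\phi$ is a bijection of the sphere). Where you genuinely diverge is in part 3: the paper writes $A=UDV$ in its singular value decomposition, reduces to the diagonal case $DZ$ via part 2, and then argues that $\textup{Var}_\tau(\phi^\top DZ)=\sum_j\phi_j^2d_j^2\textup{Var}_\tau(Z_j)$ lies in the convex hull of the numbers $d_j^2\textup{Var}_\tau(Z_j)$, so the maximum is attained along the axis of largest singular value. You instead bypass the SVD entirely by noting that spherical symmetry of $Z$ gives $\phi^\top Y=(A^\top\phi)^\top Z\stackrel{\mathcal L}{=}\|A^\top\phi\|_2\,W$, whence $\textup{Var}_\tau(\phi^\top Y)=\textup{Var}_\tau(W)\cdot\phi^\top AA^\top\phi$ and the optimisation is a $\tau$-free Rayleigh quotient. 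The two arguments rest on the same mechanism (a one-dimensional reference law rescaled by a $\phi$-dependent norm, plus quadratic scaling of $\textup{Var}_\tau$), but your version is cleaner and actually makes explicit a step the paper glosses over: the displayed identity $\textup{Var}_\tau\bigl(\sum_j d_jZ_j\phi_j\bigr)=\sum_j\phi_j^2d_j^2\textup{Var}_\tau(Z_j)$ is \emph{not} a consequence of Proposition \ref{prop:var.tau} alone ($\textup{Var}_\tau$ is not additive over independent summands for $\tau\neq 1/2$); it needs exactly the rotational-invariance identification $\sum_j d_j\phi_jZ_j\stackrel{\mathcal L}{=}\|(d_j\phi_j)_j\|_2Z_1$ that you state and use. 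Your part 4 is also proved directly (the objective is constant on the sphere) rather than deduced from part 3 as in the paper; both are fine, and yours does not depend on the uniqueness caveat implicit in part 3.
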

\begin{proof}
By the first part of Proposition \ref{prop:var.tau}:
\begin{align*} \textup{Var}_\tau\{\phi^\top (Y_i + c): i = 1, \ldots, n\} &= \textup{Var}_\tau(\phi^\top Y_i + \phi^\top c: i = 1, \ldots, n)\\
 &= \textup{Var}_\tau(\phi^\top Y_i: i = 1, \ldots, n). 
\end{align*}
This proves the first statement. For the second, note that
$$ \textup{Var}_\tau(\phi^\top BY_i: i = 1, \ldots, n) = \textup{Var}_\tau\{(B^\top\phi)^\top Y_i: i = 1, \ldots, n\}. $$
Thus if $\phi_\tau^\ast$ is the first $\tau$-PEC of $Y$, then $(B^\top)^{-1}\phi_\tau^\ast$ is the first $\tau$-PEC of $BY$. But $B$ is orthogonal, that is, $(B^\top)^{-1} = B$. hence $B\phi_\tau^\ast$ is the $\tau$-PEC of $BY$. This proves the second statement. For the third statement, by statement 1, we can assume $c \equiv 0$. Thus $Y = AZ$ where $BZ \stackrel{\mathcal L}{=} Z$ for all orthogonal matrices $B$. Write $A$ in its singular value decomposition $A = UDV$, where $D$ is a diagonal matrix with positive values $D_{ii} = d_i$ for $i = 1, \ldots p$, and $U$ and $V$ are $p \times p$ orthogonal matrices. Choosing $B = V^{-1}$ gives
$$ \phi_\tau^\ast(Y) = \phi_\tau^\ast(UDZ) = U\phi_\tau^\ast(DZ). $$
Now, by Proposition \ref{prop:var.tau}, since $d_j \geq 0$ for all $j$,
$$ \textup{Var}_\tau(\phi^\top DZ) = \textup{Var}_\tau(\sum_{j=1}^pd_jZ_j\phi_j) = \sum_j\phi_j^2d_j^2\textup{Var}_\tau(Z_j).$$
Since $\sum_j\phi_j^2 = 1$, $\textup{Var}_\tau(\phi^\top DZ)$ lies in the convex hull of the $p$ numbers $d_j^2\textup{Var}_\tau(Z_j)$ for $j = 1, \ldots p$. Therefore, it is maximized by setting $\phi$ to be the unit vector along the axis $j$ with maximal  $d_j^2\textup{Var}_\tau(Z_j)$.  But $Z \stackrel{\mathcal L}{=} BZ$ for all orthogonal matrices $B$, thus $Z_j \stackrel{\mathcal L}{=} Z_k$, hence $\textup{Var}_\tau(Z_j) = \textup{Var}_\tau(Z_k)$ for all indices $j,k = 1, \ldots, p$. Thus $\textup{Var}_\tau(\phi^\top DZ)$ is maximized when $\phi$ is the unit vector along the axis $j$ with maximal $d_j$. This is precisely the axis with maximal singular value of $A$, and hence is also the direction of the (classical) principal component of $DZ$. This proves the claim. The last statement follows immediately from the third statement. 
\end{proof}

To compute the principal expectile component $\phi_\tau^\ast$, one needs to optimize the right-hand side of (\ref{eqn:optim.variance.tau}) over all unit vectors $\phi$. Although this is a differentiable function in $\phi$, optimizing it is a difficult problem, since $\mu_\tau$ also depends on $\phi$, and does not have a closed form solution. However, in certain situations, for given weights $w_i$, not only $\mu_\tau$ but also $\phi_\tau^\ast$ \emph{have} closed form solutions. 

\begin{thm}\label{thm:pec} Consider (\ref{eqn:optim.variance.tau}). Suppose we are given the true weights $w_i$, which are either $\tau$ or $1-\tau$. Let $\tau_+ = \{i \in \{1, \ldots, n\}: w_i = \tau\}$ denote the set of observations $Y_i$ with `positive' labels, and $\tau_- = \{i \in \{1, \ldots, n\}: w_i = 1-\tau\}$ denote its complement. Let $n_+$ and $n_-$ be the sizes of the respective sets. Define an estimator $\hat{e}_\tau \in \R^p$ of the $\tau$-expectile via
\begin{equation}
\hat{e}_\tau = \frac{\tau\sum_{i \in \tau_+}Y_i + (1-\tau)\sum_{i \in \tau_-}Y_i}{ \tau n_+ + (1-\tau)n_-}.\label{eqn:mu.tau.linear}
\end{equation}
Define 
\begin{equation}\label{eqn:C}
C_\tau = \frac{\tau}{n}\left\{\sum_{i\in \tau_+}(Y_i-\hat{e}_\tau)(Y_i-\hat{e}_\tau)^\top \right\} + \frac{1-\tau}{n}\left\{\sum_{i\in \tau_-}(Y_i-\hat{e}_\tau)(Y_i-\hat{e}_\tau)^\top \right\}.\end{equation}
Then $\phi^\ast_\tau$ is the solution to the following optimization problem.
\begin{align}
\textup{maximize } & \phi^\top C_\tau \phi \notag \\
\textup{subject to } & \phi^\top Y_i > \phi^\top\hat{e}_\tau \Leftrightarrow i \in \tau_+ \label{eqn:constraint} \\
& \phi^\top\phi = 1. \notag
\end{align}
In particular, the PEC is the constrained classical PC of a weighted version of the covariance matrix of the data, centered at a constant possibly different from the mean. 
\end{thm}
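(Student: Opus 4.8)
The plan is to exploit the hypothesis that the weights $w_i$ (equivalently, the partition $\tau_+,\tau_-$) are fixed. This is precisely what linearizes the problem: once the sign pattern is frozen, the inner $\tau$-expectile $\mu_\tau$ collapses to a closed-form weighted mean, and the objective of (\ref{eqn:optim.variance.tau}) becomes a pure quadratic form in $\phi$.

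First I would fix a unit vector $\phi$ that is \emph{consistent} with the given partition, i.e. one satisfying the constraint (\ref{eqn:constraint}), so that $\phi^\top Y_i > \phi^\top\hat e_\tau$ exactly for $i \in \tau_+$. For such $\phi$ I claim that $\mu_\tau$, the $\tau$-expectile of the real numbers $z_i = \phi^\top Y_i$, equals $\phi^\top\hat e_\tau$. Indeed, $\mu_\tau$ minimizes $\sum_i \|z_i - e\|_{\tau,2}^2 = \sum_i\{\tau (z_i-e)_+^2 + (1-\tau)(e-z_i)_+^2\}$ over $e\in\R$, a strictly convex function whose stationarity condition is $\sum_i w_i(z_i-e)=0$, where $w_i = \tau$ when $z_i>e$ and $w_i=1-\tau$ when $z_i<e$. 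Solving gives $e = (\sum_i w_i z_i)/(\sum_i w_i)$. Because $\phi$ is consistent, the weights induced at $e=\phi^\top\hat e_\tau$ are exactly the given $w_i$, and substituting $z_i=\phi^\top Y_i$ reproduces $e = \phi^\top\hat e_\tau$ with $\hat e_\tau$ as in (\ref{eqn:mu.tau.linear}); uniqueness of the minimizer then pins down $\mu_\tau = \phi^\top\hat e_\tau$.

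Next I would substitute $\mu_\tau = \phi^\top\hat e_\tau$ into (\ref{eqn:optim.variance.tau}). Writing $(\phi^\top Y_i - \phi^\top\hat e_\tau)^2 = \phi^\top (Y_i-\hat e_\tau)(Y_i-\hat e_\tau)^\top\phi$ and grouping the $\tau$ and $1-\tau$ weights over $\tau_+$ and $\tau_-$ shows that the objective equals $\phi^\top C_\tau\phi$ with $C_\tau$ exactly the matrix (\ref{eqn:C}). Hence on the feasible set cut out by (\ref{eqn:constraint}) together with $\phi^\top\phi=1$, the $\tau$-variance objective and the quadratic form $\phi^\top C_\tau\phi$ coincide. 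Finally I would transfer the global optimality: the true PEC $\phi^\ast_\tau$ induces, through (\ref{eqn:wi.variance}), precisely the weights we are given, and by the closed-form computation its own expectile is $(\phi^\ast_\tau)^\top\hat e_\tau$, so $\phi^\ast_\tau$ is itself feasible for (\ref{eqn:constraint}). Since $\phi^\ast_\tau$ globally maximizes the $\tau$-variance and that quantity agrees with $\phi^\top C_\tau\phi$ for every feasible $\phi$, we obtain $(\phi^\ast_\tau)^\top C_\tau\phi^\ast_\tau \ge \phi^\top C_\tau\phi$ for all feasible $\phi$; thus $\phi^\ast_\tau$ solves the constrained quadratic program. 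The closing interpretation is then immediate, since $C_\tau$ is a $\tau$-weighted average of outer products recentered at $\hat e_\tau$ rather than $\bar Y$.

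The step I expect to be the main obstacle is the self-consistency bookkeeping. In the original objective the weights depend on $\phi$, so fixing the partition is legitimate \emph{only} on the consistent set; the argument must verify both that the substituted $\phi^\top\hat e_\tau$ is genuinely the expectile there and that the global optimizer $\phi^\ast_\tau$ lands in that set, which is exactly what constraint (\ref{eqn:constraint}) encodes. Ties, where some $\phi^\top Y_i$ equals $\mu_\tau$ exactly, are a minor edge case that the strict inequalities in (\ref{eqn:constraint}) sidestep generically.
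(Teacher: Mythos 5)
Your proposal is correct and follows essentially the same route as the paper's proof: both verify that for any $\phi$ satisfying (\ref{eqn:constraint}) the induced expectile $\mu_\tau$ equals $\phi^\top\hat e_\tau$, rewrite the objective as $\phi^\top C_\tau\phi$, and note that the true $\phi^\ast_\tau$ is feasible. The only difference is that you spell out the first-order-condition argument identifying $\mu_\tau$ with $\phi^\top\hat e_\tau$, which the paper simply asserts.
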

\begin{proof}
Since the weights are the true weights coming from the true principal expectile component $\phi^\ast_\tau$, clearly $\phi^\ast_\tau$ satisfies the constraint in (\ref{eqn:constraint}). Now suppose $\phi$ is another vector in this constraint set. Then $\phi^\top\hat{e}_\tau$ is exactly $\mu_\tau$, the $\tau$-expectile of the sequence of $n$ real numbers $\phi^\top Y_1, \ldots, \phi^\top Y_n$. Therefore, the quantity we need to maximize in (\ref{eqn:optim.variance.tau}) reads
\begin{align*}
\frac{1}{n}\sum_{i=1}^n(\phi^\top Y_i - \mu_\tau)^2 w_i
&= \frac{\tau}{n}\sum_{i \in \tau_+}(\phi^\top Y_i - \phi^\top \hat{e}_\tau)^2 + \frac{1-\tau}{n}\sum_{i \in \tau_-}(\phi^\top Y_i - \phi^\top \hat{e}_\tau)^2 \\
&= \frac{\tau}{n}\sum_{i \in \tau_+}\phi^\top(Y_i - \hat{e}_\tau)(Y_i-\hat{e}_\tau)^\top \phi + \frac{1-\tau}{n}\sum_{i \in \tau_-}\phi^\top(Y_i - \hat{e}_\tau)(Y_i-\hat{e}_\tau)^\top \phi \\
&= \phi^\top C_\tau \phi.
\end{align*}
Thus the optimization problem above is indeed an equivalent formulation of 
(\ref{eqn:optim.variance.tau}), which was used to define $\phi^\ast_\tau$. Finally, the last observation follows by comparing the above with the optimization formulation for PCA, see the paragraph after (\ref{eqn:C.pca}). Indeed, when $\tau = 1/2$, $\hat{e}_{1/2} = \bar{Y}$, $C_{1/2} = C$, and we recover the classical PCA.
\end{proof}

Since $\hat{e}_\tau$ is a linear function in the $Y_i$, (\ref{eqn:constraint}) defines a system of linear constraints in the entries of $Y_i$ and $\phi_\tau^\ast$. Thus for each fixed sign sets $(\tau_+, \tau_-)$, there exist (not necessarily unique) local optima $\phi_\tau^\ast(\tau_+, \tau_-)$. There are $2^n$ possible sign sets, one of which corresponds to the global optima $\phi_\tau^\ast$ that we need. It is clear that finding the global optimum $\phi_\tau^\ast$ by enumerating all possible sign sets is intractable. However, in many situations, the constraint in (\ref{eqn:constraint}) is inactive. That is, the largest eigenvector of $C_\tau$ satisfies (\ref{eqn:constraint}) for free. In such situations, we call $\phi^\ast_\tau$ a \emph{stable solution}. Just like classical PCA, stable solutions are unique for matrices $C_\tau$ with unique principal eigenvalue. More importantly, we have an efficient algorithm for finding stable solutions, if they exist. 

\begin{defn}
For some given sets of weights $w = (w_i)$, define $e_\tau(w)$ via (\ref{eqn:mu.tau.linear}), $C_\tau(w)$ via~(\ref{eqn:C}). Let $\phi_\tau(w)$ be the largest eigenvector of $C_\tau(w)$. If $\phi_\tau(w)$ satisfies (\ref{eqn:constraint}), we say that $\phi_\tau(w)$ is a locally \emph{stable solution} with weight $w$.
\end{defn}

To find locally stable solutions, one can solve (\ref{eqn:optim.variance}) using iterative reweighted least squares: first initialize the $w_i$'s, compute estimators $\mu_\tau(w)$ and $\phi_\tau(w)$ ignoring the constraint (\ref{eqn:constraint}), update the weights via (\ref{eqn:wi.variance}), and iterates. At each step of this algorithm, one finds the principal component of a weighted covariance matrix with some approximate weight. Since there are only finitely many possible weight sets, the algorithm is guaranteed to converge to a locally stable solution if it exists. In particular, if the true solution to (\ref{eqn:optim.variance}) is stable, then for appropriate initial weights, the algorithm will find this value.  We call this algorithm PrincipalExpectile. We give a formal description of this algorithm in Section~\ref{sec:alg}.

\subsection{Statistical properties}
We now prove consistency of local maximizers of (\ref{eqn:optim.variance}). The main theorem in this section is the following.

\begin{thm}\label{thm:consistent}
Fix $\tau > 0$. Let $Y_n$ be the empirical version of $Y$, a random variable in $\R^p$ with finite second moment, distribution function $F$. Suppose $\phi^\ast = \phi^\ast_\tau$ is a unique global solution to (\ref{eqn:optim.variance}) corresponding to $Y$. Then for sufficiently large $n$, for any sequence of global solutions $\phi_n^\ast$ corresponding to $Y_n$, we have
$$ \phi_n^\ast \stackrel{F-a.s.}{\longrightarrow}\phi^\ast $$
in $\R^p$ as $n \to \infty$. 
\end{thm}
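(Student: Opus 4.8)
The plan is to treat $\phi_n^\ast$ as an M-estimator and invoke the classical argmax (Wald-type) consistency argument: if the empirical objective converges uniformly and almost surely to a population objective whose maximizer is unique and well separated over a compact parameter set, then the empirical maximizers converge to the population one. Here the parameter set is the unit sphere $\mathbb{S} = \{\phi \in \R^p : \phi^\top\phi = 1\}$, which is compact, so the only real work is the uniform strong law.

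The key preparatory move is to express the $\tau$-variance in its variational form rather than through the sign-dependent weights $w_i$. For $\phi \in \mathbb{S}$ set
$$ f(\phi) = \textup{Var}_\tau(\phi^\top Y) = \min_{e \in \R}\rE\|\phi^\top Y - e\|_{\tau,2}^2, \qquad f_n(\phi) = \min_{e\in\R}\frac{1}{n}\sum_{i=1}^n\|\phi^\top Y_i - e\|_{\tau,2}^2, $$
so that $\phi^\ast$ maximizes $f$ and $\phi_n^\ast$ maximizes $f_n$ over $\mathbb{S}$, matching the defining problem (\ref{eqn:optim.variance.tau}). The integrand $\|z-e\|_{\tau,2}^2 = \tau(z-e)_+^2 + (1-\tau)(e-z)_+^2$ is jointly continuous (indeed $C^1$) in $(z,e)$, which is exactly what lets us sidestep the discontinuity of the weights.

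I would then proceed in three steps. First, compactify the inner minimization: from the first-order condition $\tau\,\rE(Z-e)_+ = (1-\tau)\,\rE(e-Z)_+$ one gets $|e_\tau(\phi^\top Y)| \le \rE|\phi^\top Y|/\min(\tau,1-\tau) \le \rE\|Y\|/\min(\tau,1-\tau)$ uniformly over $\phi\in\mathbb{S}$ (finite since $Y$ has finite second moment), and the empirical expectiles are bounded by $n^{-1}\sum_i\|Y_i\|/\min(\tau,1-\tau)$, which converges a.s.\ by the strong law; hence the inner $\min_e$ in both $f$ and $f_n$ may be restricted to a compact interval $[-M,M]$ for all large $n$. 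Second, apply a uniform strong law of large numbers to $g_n(\phi,e) = n^{-1}\sum_i\|\phi^\top Y_i - e\|_{\tau,2}^2$: the index set $\mathbb{S}\times[-M,M]$ is compact, $(\phi,e)\mapsto\|\phi^\top y - e\|_{\tau,2}^2$ is continuous, and the family has the integrable envelope $c(1+\|y\|^2)$, so a Jennrich-type ULLN yields $\sup_{(\phi,e)}|g_n(\phi,e)-g(\phi,e)|\to 0$ $F$-a.s., where $g(\phi,e)=\rE\|\phi^\top Y - e\|_{\tau,2}^2$. Third, since taking a minimum is $1$-Lipschitz for the sup-norm, $\sup_{\phi\in\mathbb{S}}|f_n(\phi)-f(\phi)| \le \sup_{(\phi,e)}|g_n(\phi,e)-g(\phi,e)| \to 0$ $F$-a.s. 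Finally I would run the deterministic argmax argument on the probability-one event where $f_n\to f$ uniformly: since $f$ is continuous and $\phi^\ast$ is its unique maximizer on the compact sphere, the maximum is well separated, so for every neighborhood $U$ of $\phi^\ast$ there is $\varepsilon>0$ with $\sup_{\phi\in\mathbb{S}\setminus U}f(\phi)\le f(\phi^\ast)-\varepsilon$; then $f_n(\phi_n^\ast)\ge f_n(\phi^\ast)\ge f(\phi^\ast)-\varepsilon/3$ while $f_n(\phi)\le f(\phi^\ast)-2\varepsilon/3$ for $\phi\notin U$, forcing $\phi_n^\ast\in U$ eventually, hence $\phi_n^\ast\to\phi^\ast$ $F$-a.s.

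The hard part will be the uniform strong law in the second step together with the legitimacy of passing from the weighted-sum definition of $\textup{Var}_\tau$ to the variational form. The explicit representation $n^{-1}\sum_i(\phi^\top Y_i-\mu_\tau)^2 w_i$ has weights $w_i=w_i(\phi)$ that jump as $\phi$ crosses the hyperplanes $\phi^\top Y_i=\mu_\tau(\phi)$, so a naive ULLN on that form is not available; the whole argument hinges on replacing it by $\min_e g_n(\phi,e)$, whose integrand is continuous. A secondary technical point is the measurability of the selection $\phi_n^\ast$, handled by a measurable-selection theorem (or by passing to outer probability), along with the uniform boundedness of the expectiles used in the compactification step.
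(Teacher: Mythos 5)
Your proposal is correct, and its overall skeleton coincides with the paper's: establish uniform almost-sure convergence of the empirical $\tau$-variance of projections over the compact sphere $\mathbb{S}^{p-1}$ (the paper's Lemma \ref{lem:consistent}), then run the standard well-separated-argmax argument exploiting continuity of $V_Y$ and uniqueness of $\phi^\ast$ (the paper's $\epsilon$--$\delta$ bookkeeping with $\epsilon/6$, $\epsilon/3$, $5\epsilon/6$ is exactly your final step). Where you genuinely diverge is in how the uniform law is obtained. The paper's Lemma \ref{lem:consistent} is proved in two lines: it invokes the Cram\'er--Wold device to get $Y_n^\top\phi \stackrel{\mathcal L}{\to} Y^\top\phi$ ``uniformly'' over the sphere and then asserts that $e_\tau$ and $\Var_\tau$, being determined by the distribution function, converge $F$-a.s.\ uniformly; this conflates convergence in distribution with almost-sure convergence and leaves the passage to the nonlinear functionals unjustified. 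Your route --- rewriting $\Var_\tau$ in its variational form $\min_e \rE\|\phi^\top Y - e\|_{\tau,2}^2$ so the integrand is continuous in $(\phi,e)$, compactifying the inner minimization via the uniform expectile bound $|e_\tau(\phi^\top Y)|\le \rE\|Y\|/\min(\tau,1-\tau)$, applying a Jennrich-type ULLN with envelope $c(1+\|y\|^2)$, and using that $\min_e$ is $1$-Lipschitz for the sup-norm --- supplies precisely the argument the paper omits, and also cleanly sidesteps the discontinuity of the sign-dependent weights $w_i(\phi)$, which is the real obstruction to a naive uniform law on the weighted-sum representation. The only cost of your approach is the extra bookkeeping (compactification, measurable selection), and one caveat: your expectile bound degenerates as $\tau\to 0$ or $1$, so unlike the paper's lemma you do not get uniformity in $\tau$; but that uniformity is not needed for Theorem \ref{thm:consistent}, which fixes $\tau$. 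Also note the paper's proof repeatedly says ``minimum'' where the optimization (\ref{eqn:optim.variance}) is a maximization; you handle the orientation correctly.
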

For the proof, we first need the following lemma.
\begin{lem}\label{lem:consistent} Let $Y_n$ be the empirical version of $Y$, a random variable in $\R^p$ with finite second moment and distribution function $F$. Then uniformly over all $\phi \in \R^p$ with $\phi^\top \phi = 1$, and uniformly over all $\tau \in (0,1)$,
$$ \textup{Var}_\tau(Y_n^\top \phi) \stackrel{F-a.s.}{\longrightarrow} \textup{Var}_\tau(Y^\top \phi).$$
\end{lem}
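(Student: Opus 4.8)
The plan is to express the $\tau$-variance as the minimum, over a scalar location $e$, of an (empirical or population) average of the asymmetric squared loss, and then to combine a \emph{uniform} strong law of large numbers over a compact parameter set with a bound confining the relevant minimizers to a fixed compact interval. Write $Z=\phi^\top Y$, $\ell_\tau(u)=\tau u_+^2+(1-\tau)u_-^2$, and
$$ M(\phi,\tau,e)=\rE\,\ell_\tau(\phi^\top Y-e),\qquad M_n(\phi,\tau,e)=\frac1n\sum_{i=1}^n\ell_\tau(\phi^\top Y_i-e), $$
so that $\textup{Var}_\tau(Y^\top\phi)=\min_e M(\phi,\tau,e)$ and $\textup{Var}_\tau(Y_n^\top\phi)=\min_e M_n(\phi,\tau,e)$. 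Letting $e^\ast=e_\tau(\phi^\top Y)$ and $\hat e=e_\tau(\phi^\top Y_n)$ be the two minimizers, the elementary sandwich
$$ |\textup{Var}_\tau(Y_n^\top\phi)-\textup{Var}_\tau(Y^\top\phi)|\le\max_{e\in\{e^\ast,\hat e\}}|M_n(\phi,\tau,e)-M(\phi,\tau,e)| $$
reduces the problem to controlling $|M_n-M|$ at these minimizers, uniformly in $(\phi,\tau)$.

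For the bulk region I would fix $\delta\in(0,1/2)$ and restrict to $\tau\in[\delta,1-\delta]$. Here the minimizers are uniformly bounded: from $\textup{Var}_\tau(Z)\le\lambda_{\max}(\rE YY^\top)$ together with $\textup{Var}_\tau(Z)\ge\min(\tau,1-\tau)\,\rE(e_\tau-Z)_+^2\ge\delta\,\rE(e_\tau-Z)_+^2$ and the balance (first-order) condition $\tau\rE(Z-e_\tau)_+=(1-\tau)\rE(e_\tau-Z)_+$, one gets $|e_\tau(\phi^\top Y)-\phi^\top\rE Y|\le\sigma(\phi)/\sqrt\delta$, and hence a bound $|e^\ast|\le B(\delta)$ uniform over unit $\phi$ and $\tau\in[\delta,1-\delta]$; the identical computation with empirical moments, which converge a.s.\ by the SLLN, confines $\hat e$ to $[-B,B]$ for all large $n$, a.s. On the compact set $\{\phi:\phi^\top\phi=1\}\times[0,1]\times[-B,B]$ the integrand $\ell_\tau(\phi^\top y-e)$ is jointly continuous in $(\phi,\tau,e)$ and dominated by the integrable envelope $(\|y\|+B)^2$, so Jennrich's uniform strong law of large numbers gives $\sup|M_n-M|\to0$ a.s.\ over this set. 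With the sandwich this yields $\sup_{\phi}\sup_{\tau\in[\delta,1-\delta]}|\textup{Var}_\tau(Y_n^\top\phi)-\textup{Var}_\tau(Y^\top\phi)|\to0$ a.s.

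The main obstacle is that uniformity is claimed over the \emph{full} open interval, whereas $e_\tau\to\pm\infty$ as $\tau\to0,1$, so no single $B$ works near the endpoints and the bulk argument breaks. I would treat $\tau\in(0,\delta]$ by a direct smallness estimate instead of locating the minimizer: plugging the suboptimal $e=-t$ into $M$ and using $(e-Z)_+^2\le Z^2$ on $\{Z<-t\}$ and $|\phi^\top Y|\le\|Y\|$ gives
$$ \textup{Var}_\tau(\phi^\top Y)\le\delta\bigl[\lambda_{\max}(\rE YY^\top)+2t\|\rE Y\|+t^2\bigr]+\rE\bigl[\|Y\|^2\1\{\|Y\|>t\}\bigr], $$
with the analogous empirical bound involving $\tfrac1n\sum\|Y_i\|^2$ and $\tfrac1n\sum\|Y_i\|^2\1\{\|Y_i\|>t\}$, which converge a.s. Given $\varepsilon>0$, the order of choices is: first fix $t$ large so the tail terms fall below $\varepsilon$, then $\delta$ small so the remaining terms fall below $\varepsilon$; this forces both $\textup{Var}_\tau(Y^\top\phi)$ and (for large $n$, a.s.) $\textup{Var}_\tau(Y_n^\top\phi)$ to be uniformly small on $(0,\delta]$, and symmetrically on $[1-\delta,1)$ via $\textup{Var}_\tau(-Z)=\textup{Var}_{1-\tau}(Z)$ from Proposition~\ref{prop:var.tau}. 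Using this same $\delta$ in the bulk step and letting $n\to\infty$ then delivers the full uniform a.s.\ convergence. The only delicate bookkeeping is this ordering of quantifiers ($t$, then $\delta$, then $n$) and the uniform integrability of $\{(\phi^\top Y)^2\}$ through the envelope $\|Y\|^2$, both of which are supplied by the finite-second-moment hypothesis.
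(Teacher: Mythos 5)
Your proof is correct, but it follows a genuinely different and considerably more careful route than the paper's. The paper disposes of this lemma in two sentences: by the Cram\'er--Wold device and compactness of the unit sphere, $Y_n^\top \phi$ converges in law to $Y^\top\phi$ uniformly over unit vectors $\phi$, and it is then asserted that $e_\tau$ and $\textup{Var}_\tau$, being determined by the distribution function, converge $F$-a.s.\ uniformly as well. Your argument instead runs through the M-estimation sandwich $|\min_e M_n-\min_e M|\le\max_{e\in\{e^\ast,\hat e\}}|M_n-M|$, a Jennrich-type uniform strong law over the compact set $\{\phi^\top\phi=1\}\times[0,1]\times[-B,B]$, and a separate truncation estimate near the endpoints of $(0,1)$. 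What this buys is substantial: weak convergence of $Y_n^\top\phi$ does not by itself yield convergence of the second-moment functional $\textup{Var}_\tau$ without some uniform integrability, which you supply explicitly via the integrable envelope $(\|y\|+B)^2$ and the tail term $\rE[\|Y\|^2\1\{\|Y\|>t\}]$; and the paper's argument says nothing about uniformity over the full open interval in $\tau$, where $e_\tau$ escapes to $\pm\infty$ and no single compact interval can contain the minimizers. Your split into a bulk region $[\delta,1-\delta]$ (minimizers confined via the variance lower bound and the first-order condition) and endpoint regions (where $\textup{Var}_\tau$ itself is shown to be uniformly small, with the quantifiers ordered as $t$, then $\delta$, then $n$) is precisely the extra work needed to make the endpoint claim honest. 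In short: same statement, different method; the paper's proof is shorter but elides exactly the points around which your proof is organized.
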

\begin{proof}
Since $Y_n$ is the empirical version of $Y$ and the set of all unit vectors $\phi \in \R^p, \phi^\top \phi = 1$ is compact, by the Cramer-Wold theorem, $Y_n^\top \phi \stackrel{\mathcal L}{\to} Y^\top \phi$ uniformly over all such unit vectors $\phi \in \R^p$. It then follows that $e_\tau$ and $\textup{Var}_\tau$, which are completely determined by the distribution function, also converge $F-a.s.$ uniformly over all $\phi$.   
\end{proof}
\begin{proof}[Proof of Theorem \ref{thm:consistent}]
Let $\mathbb{S}^{p-1}$ denote the unit sphere in $\R^p$. Equip $\R^p$ with the Euclidean norm $\|\cdot\|$. Define the map $V_Y: \mathbb{S}^{p-1} \to \R$, $V_Y(\phi) =  \textup{Var}_\tau(Y^\top \phi)$. Fix $\epsilon > 0$. We shall prove that there exists a $\delta > 0$ such that the global minimum of $V_{Y_n}$ is necessarily within $\delta$-distance of $\phi^\ast$.

Since $V_Y$ is continuous, $\mathbb{S}^{p-1}$ is compact, and $\phi^\ast$ is unique, there exists a sufficiently small $\delta > 0$ such that
$$ |V_Y(\phi) - V_Y(\phi^\ast)| < \epsilon \Rightarrow \|\phi - \phi^\ast\| < \delta $$
for $\phi \in \mathbb{S}^{p-1}$. In particular, if $\|\phi - \phi^\ast\| > \delta$, then
$$ V_Y(\phi^\ast) + \epsilon < V_Y(\phi). $$
By Lemma \ref{lem:consistent}, $V_{Y_n} \to V_Y$ as $n \to \infty$ uniformly over $\mathbb{S}^{p-1}$. In particular, there exists a large $N$ such that for all $n > N$, 
$$ |V_{Y_n}(\phi) - V_Y(\phi)| < \epsilon/6$$
for all $\phi \in \mathbb{S}^{p-1}$. Thus for $\phi \in \mathbb{S}^{p-1}$ such that $\|\phi - \phi^\ast\| > \delta$,
$$ V_{Y_n}(\phi) - V_Y(\phi^\ast) > \epsilon - \epsilon/6 = 5\epsilon/6. $$
Meanwhile, since $V_Y$ is continuous, one can choose $\epsilon' = \epsilon/6$, and thus obtain $\delta'$ such that
$$ |V_Y(\phi) - V_Y(\phi^\ast)| < \epsilon/6 \Leftarrow \|\phi - \phi^\ast\| < \delta'. $$
Then, for $\phi$ such that $\|\phi - \phi^\ast\| < \delta'$, 
$$V_{Y_n}(\phi) - V_Y(\phi^\ast) \leq |V_{Y_n}(\phi) - V_Y(\phi)| + |V_Y(\phi) - V_Y(\phi^\ast)| < \epsilon/6 + \epsilon/6 = \epsilon/3. $$
So far we have shown that if $\|\phi - \phi^\ast\| > \delta$, then $V_{Y_n}(\phi)$ is at least $5\epsilon/6$ bigger than $V_Y(\phi^\ast)$. Meanwhile, if $\|\phi - \phi^\ast\| < \delta'$, then $V_{Y_n}(\phi)$ is at most $\epsilon/3$ bigger than $V_Y(\phi^\ast)$. Thus the global minimum $\phi_n^\ast$ of $V_{Y_n}$ necessarily satisfy $\|\phi_n^\ast - \phi^\ast\| < \delta$. This completes the proof.
\end{proof}

\section{Algorithms}\label{sec:alg}

\subsection{TopDown and BottomUp}
We now describe how iterative weighted least squares can be adapted to implement TopDown and BottomUp. We start with a description of the asymmetric weighted least squares (LAWS) algorithm of Newey and Powell \citet{np}. The basic algorithm outputs a subspace without the affine term, and needs to be adapted. See \citet{gzhh} for a variation with smoothing penalty and spline basis. 

\begin{algorithm}[ht]
\caption{Asymmetric weighted least squares (LAWS)}\label{alg:laws}
\begin{algorithmic}[1]
\State Input: data $Y \in \R^{n \times p}$, positive integer $k < p$
\State Output: $\hat{E}^\ast_k$, an estimator of $E^\ast_k$, expressed in product form $\hat{E}^\ast_k = \hat{U}\hat{V}^\top $, where $\hat{U} \in \R^{n \times k}, \hat{V} \in \R^{p \times k}$.$\hat{U}, \hat{V}$ are unique up to multiplication by an invertible matrix.  
\Procedure{LAWS}{$Y,k$}
\State Set $V^{(0)}$ to be some rank-$k$ $p \times k$ matrix. 
\State Set $W^{(0)} \in \R^{n \times p}$ to be 1/2 everywhere.
\State Set $t = 0$.
\Repeat
	\State Update $U$: Set $U^{(t+1)} = \argmin_{U \in \R^{n \times k}}J(U,V^{(t)}, W^{(t)})$. \label{ln:up.u}
	\State \label{ln:up.uw} Update $W$: Set $W^{(t+1)}_{ij} = \tau$ if $Y_{ij} - \sum_l U^{(t+1)}_{il}V^{(t)}_{lk} > 0$, $W^{(t+1)}_{ij} = 1-\tau$ otherwise. 
	\State Update $V$: Set $V^{(t+1)} = \argmin_{V \in \R^{k \times p}}J(U^{(t+1)},V, W^{(t+1)})$.\label{ln:up.v}
	\State Update $W$: Set $W^{(t+1)}_{ij} = \tau$ if $Y_{ij} - \sum_l U^{(t+1)}_{il}V^{(t+1)}_{lk} > 0$, $W^{(t+1)}_{ij} = 1-\tau$ otherwise.\label{ln:up.vw}	
	\State Set t = t + 1
 \Until {$U^{(t+1)} = U^{(t)}, V^{(t+1)} = V^{(t)}, W^{(t+1)} = W^{(t)}$}. \\
 \Return $\hat{E}_k = U^{(t)}(V^{(t)})^\top $.
\EndProcedure
\end{algorithmic}
\end{algorithm}
  
\begin{prop}\label{prop:laws.gradient.descent} The LAWS algorithm is well-defined, and is a gradient descent algorithm. Thus it converges to a critical point of the optimization problem $(\ref{eqn:optim.estar})$. 
\end{prop}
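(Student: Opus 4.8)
\emph{Proof strategy.} The plan is to read LAWS as an alternating (block) weighted--least--squares scheme and to establish the three claims in turn: that every line of the loop is well defined, that the genuine objective $J(U,V,W)=\|Y-\tilde U\tilde V^\top\|^2_{\tau,2}$ is non-increasing along the iterates, and that monotonicity together with boundedness forces convergence to a point satisfying the first--order conditions of (\ref{eqn:optim.estar.tau}).

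First I would dispose of well-definedness. For fixed $V$ and $W$ the map $U\mapsto J(U,V,W)=\sum_{ij}W_{ij}(Y_{ij}-\sum_l U_{il}V_{lj})^2$ is a convex quadratic whose minimization decouples over the rows of $U$; each row is an ordinary weighted least squares problem, so it has a minimizer, unique whenever the relevant design built from $V$ has full column rank (a genericity assumption that also underlies the $GL_k$ ambiguity noted in the output of Algorithm~\ref{alg:laws}). The symmetric statement holds for the $V$--update in line~\ref{ln:up.v} with $U$ fixed, and the weight updates in lines \ref{ln:up.uw} and \ref{ln:up.vw} are deterministic functions of the current residuals. Hence all four updates are well defined.

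The core of the argument is monotonicity. Observe that after either weight update the entries of $W$ match the signs of the current residuals, and for such $W$ one has exactly $J(U,V,W)=\|Y-\tilde U\tilde V^\top\|^2_{\tau,2}$; this identity is the device that links the reweighted surrogate to the quantity we actually wish to minimize, which I write as $F(U,V)$. The $U$-- and $V$--updates minimize $J$ over one block with $W$ held fixed, so they do not increase $J$. The delicate point is the reweighting: because $\rho_\tau(r)=\{\tau\mathbf 1(r>0)+(1-\tau)\mathbf 1(r\le 0)\}r^2$ is \emph{not} the pointwise minimum of $w\,r^2$ over $w\in\{\tau,1-\tau\}$, resetting the weights is not itself a minimization of the surrogate, and a naive majorization--minimization bound fails. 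Instead I would argue directly that one reweighted--least--squares step decreases the \emph{true} block objective $F(\cdot,V)$, which is convex: the surrogate $\tilde J(U)=\sum_{ij}W^0_{ij}(Y_{ij}-(UV^\top)_{ij})^2$ formed from the current signs $W^0$ agrees with $F$ in both value and gradient at the current iterate $U^0$ (the gradients coincide because $\rho_\tau'(r)=2\{\tau\mathbf 1(r>0)+(1-\tau)\mathbf 1(r\le0)\}r$), so the least--squares solution $U^1$ of $\tilde J$ yields a descent direction $U^1-U^0$ for $F$, and a no-overshoot (line-search) argument then gives $F(U^1,V)\le F(U^0,V)$. I expect this step to be the main obstacle, precisely because the sign-dependent weights differ from the curvature-matching weights $\max(\tau,1-\tau)$ that a clean global majorizer would require.

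Granting this descent lemma, the conclusion is routine. The sequence $F(U^{(t)},V^{(t)})$ is non-increasing and bounded below by $0$, hence convergent. Since each residual sign takes one of two values, there are only finitely many weight patterns $W$; consequently the iterates eventually use a fixed pattern $W^\infty$, on which LAWS reduces to ordinary alternating minimization of the smooth, separately-convex quadratic $J(\cdot,\cdot,W^\infty)$. Standard block--coordinate--descent theory then shows every limit point is stationary, and transporting the stationarity conditions through the identity $J=F$ and through the $GL_k$ reparametrization exhibits the limit as a critical point of (\ref{eqn:optim.estar.tau}). The only remaining care concerns the measure-zero set of residuals equal to $0$, where the weight is ambiguous and the relevant derivative of $F$ must be read as a subgradient.
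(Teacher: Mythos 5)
Your route is the same as the paper's: well-definedness from the explicit weighted least-squares solutions of the block subproblems, and convergence by reading the block minimization \emph{together with} the subsequent reweighting as one step of a second-order method on the true asymmetric objective $F(\cdot,V)=\|Y-\tilde U\tilde V^\top\|^2_{\tau,2}$, which is convex, $C^1$ and piecewise quadratic in each block separately. Your tangency observation --- the surrogate built from the current signs agrees with $F$ in value and gradient at the current iterate --- is exactly the content of the paper's identification of lines \ref{ln:up.u}--\ref{ln:up.uw} with ``one step of Newton--Raphson on $J(U,V,W)$ for fixed $V$,'' after which the paper concludes that a coordinatewise descent on a coordinatewise convex function converges.

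The gap is the one you flagged yourself, and it is genuine: the descent inequality $F(U^1,V)\le F(U^0,V)$ fails for the \emph{full} reweighted least-squares step, and LAWS takes the full step, so there is no line search for your ``no-overshoot'' argument to attach to. Because the surrogate freezes the weights at the old signs, it \emph{underestimates} $F$ wherever a residual flips sign toward the side carrying weight $\max(\tau,1-\tau)$, so minimizing it can overshoot into a region where $F$ has increased. Concretely, in one dimension with $\tau=0.1$, nine observations at $0$ and one at $10$, and a current iterate just below $0$ (all residuals positive, all weights $\tau$), the update is the plain mean $u^1=1$, and $F(u^1)=9(0.9)(1)^2+(0.1)(9)^2=16.2$ while $F(u^0)\to(0.1)(10)^2=10$. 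This also undercuts your subsequent step, since you deduce the eventual freezing of the weight pattern from monotone convergence of $F$ (and finiteness of the pattern set alone does not rule out cycling --- the paper itself observes such cycling for the related PrincipalExpectile iteration). You should know that the paper's own proof does not close this gap either: it asserts convergence of the Newton--Raphson / coordinatewise-descent scheme without establishing that the objective decreases. A complete argument would need a step-size safeguard, a genuine majorizer (e.g.\ weight $\max(\tau,1-\tau)$ on terms that may change sign), or a direct fixed-point analysis over the finitely many sign patterns; as written, your proposal reproduces the published argument faithfully, including its weakest link.
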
  
\begin{proof} First, we note that the steps in the algorithm are well-defined. For fixed $W$ and $V$, $J(U,V,W)$ is a quadratic in the entries of $U$. Thus the global minimum on line 8 has an explicit solution, see \citet{srebro, gzhh}. A similar statement applies to line 9. 

As noted in Section \ref{sec:errmin}, $J(U,V, W)$ is not jointly convex in $U$ and $V$, but as a function in $U$ for fixed $V$, it is a convex, continuously differentiable, piecewise quadratic function. The statement holds for $J(U,V,W)$ as a function in $V$ for fixed $U$. Hence lines \ref{ln:up.u} and \ref{ln:up.uw} is one step in a Newton-Raphson algorithm on $J(U, V, W)$ for fixed $V$. Similarly, lines \ref{ln:up.v} and \ref{ln:up.vw} is one step in a Newton-Raphson algorithm on $J(U, V, W)$ for fixed $U$. Thus the algorithm is a coordinatewise gradient descent on a coordinatewise convex function, hence converges.  
\end{proof}

If some columns of $U$ or $V$ are pre-specified, one can run LAWS and not update these columns in lines \ref{ln:up.u} and \ref{ln:up.v}. Thus one can use LAWS to find the optimal affine subspace by writing $\1m^\top  + E = \tilde{U}\tilde{V}$ with the first column of $\tilde{U}$ constrained to be~$\1$. Similarly, we can use this technique to solve the constrained optimization problems: 

\begin{itemize}
\setlength{\itemsep}{1pt}
	\item \emph{Find a rank-$k$ approximation $E_k$ whose span contains a given subspace of dimension $r < k$} 
	\item Solution: Constrain the first $r$ columns of $V^{(0)}$ to be a basis of the given subspace. 
	\item \emph{Find a rank-$k$ approximation whose span lies within a given subspace of dimension $r > k$.}
	\item Solution: Let $B \in \R^{n \times r}$ be a basis of the given subspace. Then the optimization problem becomes
$$ \min_{U \in \R^{r \times k}, V \in \R^{p \times k}} \|Y - BUV^\top \|^2_{\tau, 2}. $$
One can then apply the LAWS algorithm with variables $U$ and $V$. 
	\item \emph{Find a rank-$k$ approximation whose span contains a given subspace of dimension $r < k$, and is contained in a given subspace of dimension $R > k$.}
	\item Solution: Combine the previous two solutions.
\end{itemize}

With these tools, we now define the two algorithms, TopDown and BottomUp.

\begin{algorithm}[ht]
\caption{TopDown}\label{alg:td}
\begin{algorithmic}[1]
\State Input: data $Y \in \R^{n \times p}$, positive integer $k < p$
\State Output: $\hat{E}^\ast_k$, an estimator of $E^\ast_k$, expressed in product form $\hat{E}^\ast_k = \hat{U}\hat{V}^\top $, where $\hat{U} \in \R^{n \times k}, \hat{V} \in \R^{p \times k}$ are unique. 
\Procedure{TopDown}{$Y,k$}
\State Use LAWS(Y,k) to find $\hat{m}^\ast_k, \hat{E}^\ast_k$. Write $\hat{E}^\ast_k = UV^\top $ for some orthonormal basis $U$. 
\State Use LAWS to find $\hat{U}_1$, the vector which spans the optimal subspace of dimension 1 contained in $U$. 
\State Use LAWS to find $\hat{U}_2$, where $(\hat{U}_1,\hat{U}_2)$ spans the optimal subspace of dimension 1 contained in $U$ and contains the span of $\hat{U}_1$
\State Repeat the above step until obtains $\hat{U}$.
\State Obtain $\hat{V}$ through the constraint $\hat{E}^\ast_k = \hat{U}\hat{V}^\top $. \\
\Return $\hat{m}^\ast_k, \hat{E}^\ast_k, \hat{U}, \hat{V}^\top $. 
\EndProcedure
\end{algorithmic}
\end{algorithm}

\begin{algorithm}[h]
\caption{BottomUp}\label{alg:bu}
\begin{algorithmic}[1]
\State Input: data $Y \in \R^{n \times p}$, positive integer $k < p$
\State Output: $\hat{E}^\ast_k$, an estimator of $E^\ast_k$, expressed in product form $\hat{E}^\ast_k = \hat{U}\hat{V}^\top $, where $\hat{U} \in \R^{n \times k}, \hat{V} \in \R^{p \times k}$ are unique. 
\Procedure{BottomUp}{$Y,k$}
\State Use LAWS to find $\hat{E}^\ast_1$. Let $\hat{U}_1$ be the basis vector. 
\State Use LAWS to find $\hat{U}_2$ such that $(\hat{U}_1, \hat{U}_2)$ is the best two-dimensional approximation to $Y$, subjected to containing $\hat{U}_1$.
\State Repeat the above step until obtains $\hat{U}$. We obtain $\hat{V}$ and $\hat{E}^\ast_k$ in the last iteration.
\Return $\hat{E}^\ast_k, \hat{U}, \hat{V}^\top $. 
\EndProcedure
\end{algorithmic}
\end{algorithm}

The TopDown algorithm requires the weights $w_{ij}$ and the loadings on previous principal components to be re-evaluated when finding the next principal component. A variant of the algorithm would be to keep the weights $w_{ij}$. In this case, the algorithm is still well-defined. However, it will produce a different basis matrix $\hat{U}$, since the estimators are no longer optimal in the $\|\cdot\|_{\tau,2}^2$ norm. 

\subsection{Performance bounds of TopDown and BottomUp}\label{ssec:bounds}
We now show that the dependence on $k$ only grows polylog in $n$. Thus both TopDown and BottomUp are fairly efficient algorithms even for large $k$.

\begin{thm}\label{thm:computation}
For fixed $V$ of dimension k, LAWS requires at most $\mathcal O\{\log(p)^k\}$ iterations, $\mathcal O\{npk^2\log(p)^k\}$ flops to estimate $U$.
\end{thm}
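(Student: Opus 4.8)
Fix $V \in \R^{p \times k}$. The plan is to first reduce to a one-row problem. With $V$ held fixed, the estimation of $U$ decouples completely across the $n$ rows, because
$J(U,V,W) = \sum_{i=1}^n \sum_{j=1}^p w_{ij}\big(Y_{ij} - \sum_l U_{il}V_{jl}\big)^2$
splits as a sum of independent row objectives. For row $i$, writing $V_j$ for the $j$-th row of $V$ and $u$ for the (unknown) $i$-th row of $U$, LAWS solves $\min_{u \in \R^k} f_i(u)$ with $f_i(u) = \sum_{j=1}^p \|Y_{ij} - V_j^\top u\|_{\tau,2}^2$. By Proposition \ref{prop:laws.gradient.descent}, each $f_i$ is convex, continuously differentiable and piecewise quadratic, and its quadratic pieces are exactly the cells of the arrangement of the $p$ hyperplanes $H_j = \{u : V_j^\top u = Y_{ij}\}$ in $\R^k$, on which the sign pattern (hence the weight row $W_{i\cdot}$) is constant. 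Thus LAWS with $V$ fixed is $n$ independent instances of \emph{reweight, solve a $k$-dimensional weighted least squares, reweight}, and the number of outer iterations of the whole procedure is the maximum over rows of the number of reweightings that row requires.

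Next I would account for the per-iteration cost, which is the easy half. For one row with weights frozen, the next iterate minimizes the current quadratic piece, i.e. it solves the $k \times k$ normal equations $\big(\sum_j w_{ij} V_j V_j^\top\big) u = \sum_j w_{ij} V_j Y_{ij}$. Forming the weighted Gram matrix costs $\mathcal O(pk^2)$ and solving it costs $\mathcal O(k^3)$, so one reweighting of one row costs $\mathcal O(pk^2)$ flops (using $k \le p$), and one full iteration over all $n$ rows costs $\mathcal O(npk^2)$. Hence the flop bound $\mathcal O\{npk^2\log(p)^k\}$ follows immediately once the iteration bound $\mathcal O\{\log(p)^k\}$ is in hand.

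The heart of the argument is therefore the iteration bound, which I would prove by induction on $k$. For the base case $k=1$, $f_i$ is a convex piecewise-quadratic function of one variable with $p$ breakpoints $b_j = Y_{ij}/V_j$, and $f_i'$ is monotone increasing and piecewise linear; minimizing $f_i$ amounts to locating, among the $p+1$ gaps between consecutive sorted breakpoints, the one containing the root of $f_i'$. Each reweighting solves the affine extension of $f_i'$ on the current gap, and convexity forces monotone progress of the iterates toward the optimum $u^\ast$; the key claim is that this progress is \emph{bisection-like}, halving the number of breakpoints still separating the iterate from $u^\ast$ at each step, which yields $\mathcal O(\log p)$ reweightings. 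For the inductive step I would restrict attention to the search along one direction: along this line at most $p$ of the hyperplanes are crossed, and each crossing triggers a $(k-1)$-dimensional reweighting subproblem bounded inductively by $\mathcal O\{\log(p)^{k-1}\}$; composing the $\log p$ gap-halving in the last direction with the lower-dimensional bound gives $\mathcal O\{\log(p)^k\}$, while the total number of cells $\sum_{i=0}^k \binom{p}{i} = \mathcal O(p^k)$ controls the recursion.

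The hard part will be the geometric halving claim in the base case. Convexity guarantees that the reweighted-least-squares iterates move monotonically toward $u^\ast$, but a Newton-type step can overshoot, so the delicate point is to show that the net effect of at most two consecutive steps removes a constant \emph{fraction} of the remaining breakpoints, rather than only one at a time (which would give merely the trivial $\mathcal O(p)$ bound). I expect this to require a careful case analysis of how the weighted centroid $u^{(t+1)} = \big(\sum_j w_{ij} V_j^2\big)^{-1}\sum_j w_{ij} V_j Y_{ij}$ sits relative to the sorted breakpoints. Once this halving lemma is established, the decoupling across rows, the per-iteration flop count, and the composition across the $k$ directions are all routine.
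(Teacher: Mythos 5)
Your proposal follows essentially the same route as the paper: decouple the estimation of $U$ across the $n$ rows, bound each iteration by the cost of a $k\times k$ weighted least squares ($\mathcal O(npk^2)$ in total), and reduce the iteration count to a one-dimensional $\mathcal O(\log p)$ search composed across the $k$ directions --- your arrangement of the $p$ hyperplanes $\{u : V_j^\top u = Y_{ij}\}$ in $\R^k$ is just the dual description of the paper's count of orthants of $\R^p$ met by the affine residual line/subspace (Lemma \ref{lem:orthant} and Corollary \ref{cor:orthant}). The one step you explicitly leave unproven --- the bisection-like halving of the remaining breakpoints in the $k=1$ base case --- is precisely the content of the paper's Lemma \ref{lem:log.n}, which the paper likewise asserts (``since the points are ordered, it takes $\mathcal O\{\log(n)\}$ to learn their true weights'') without supplying the case analysis you anticipate; so your plan is no less complete than the published argument, and the careful overshoot analysis you sketch is exactly what would be needed to make either version rigorous.
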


In other words, if $V$ has converged, LAWS needs at most $\mathcal O\{npk^2\log(p)^k\}$ flops to estimate $U$. The role of $U$ and $V$ are interchangeable if we transpose $Y$. Thus if $U$ has converged, LAWS needs at most $\mathcal O\{npk^2\log(n)^k\}$ to estimate $V$. We do not have a bound for the number of iterations needed until convergence. In practice this seem to be of order $\log$ of $n$ and $p$. For the proof of Theorem \ref{thm:computation} we need the following two lemmas.

\begin{lem}\label{lem:log.n} If $Y_1, \ldots, Y_n \in \R$ are $n$ real numbers, then LAWS finds their $\tau$-expectile $e_\tau$ in $\mathcal O\{\log(n)\}$ iterations.
\end{lem}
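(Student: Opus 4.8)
The plan is to reduce the problem to one dimension and to recognise LAWS as Newton--Raphson on a convex piecewise-quadratic objective. Computing the $\tau$-expectile of $Y_1,\dots,Y_n\in\R$ amounts to minimising $g(e)=\sum_{i=1}^n\|Y_i-e\|_{\tau,2}^2$, which is convex and piecewise quadratic with breakpoints exactly at the sorted data $Y_{(1)}\le\cdots\le Y_{(n)}$. By Proposition \ref{prop:laws.gradient.descent} the LAWS update is the Newton--Raphson step for $g$; concretely I would verify the algebra showing that the reweighted mean produced in the update is exactly $e-g'(e)/g''(e)$, where $g'(e)=2\{-\tau\sum_{Y_i>e}(Y_i-e)+(1-\tau)\sum_{Y_i\le e}(e-Y_i)\}$ and $g''(e)=2\{\tau n_+(e)+(1-\tau)n_-(e)\}$, with $n_\pm(e)$ the number of data above, respectively at or below, $e$. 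Thus each iteration replaces $e$ by the $(\tau,1-\tau)$-weighted mean determined by the current sign pattern. (For $\tau=1/2$ the expectile is the mean and the step is exact, so I assume $\tau\neq 1/2$, and by the symmetry $e_\tau(-Y)=-e_{1-\tau}(Y)$ of Proposition \ref{prop:exp.prop} I may take $\tau<1/2$.)

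Next I would establish that, after the first step, the iterates are monotone and one-sided. Since $g$ is convex, $g'$ is increasing and piecewise linear, and on the linear piece containing the current iterate the Newton step follows that piece to its root. For $\tau<1/2$ the function $g'$ is convex, so it lies above this supporting segment; hence the segment's root lands on the far side of, or inside, the bracketing interval of the true minimiser $e_\tau$. Consequently $e^{(1)},e^{(2)},\dots$ approach $e_\tau$ monotonically from one side, and the associated cut index in the sorted data moves monotonically toward the true cut $j^\ast$ with $Y_{(j^\ast)}\le e_\tau\le Y_{(j^\ast+1)}$. Once an iterate lands in the bracketing interval $(Y_{(j^\ast)},Y_{(j^\ast+1)})$, the current quadratic piece already contains $e_\tau$, so the next step is exact. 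This reduces the whole question to bounding the number of distinct sorted-data intervals the iterates visit.

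The heart of the proof, and the step I expect to be the main obstacle, is to bound this visit count by $\mathcal O\{\log(n)\}$ through a geometric-reduction argument. Writing $m_t$ for the number of data points strictly between the current iterate and $e_\tau$, a short computation gives the closed form
\begin{equation*}
 e^{(t+1)}-e_\tau=\frac{\sum_{i}(x_i-e_\tau)}{A+m_t},\qquad A=\frac{\tau n_+^\ast+(1-\tau)n_-^\ast}{1-2\tau},
\end{equation*}
where the sum runs over the $m_t$ intermediate points $x_i$ and $n_\pm^\ast$ count the data on either side of $e_\tau$. The goal is to deduce $m_{t+1}\le \rho\, m_t$ for a fixed $\rho<1$ depending only on $\tau$, which yields the logarithmic bound since $m_0\le n$. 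I would attempt this by lower-bounding the fraction of intermediate points the weighted mean leapfrogs in one step, using the weight ratio $\tau:(1-\tau)$ together with the estimate $A\ge \tfrac{\tau}{1-2\tau}\,m_t$ coming from $n_+^\ast\ge m_t$. The delicate point is that the fraction actually cleared is governed by how the mass $\sum_i(x_i-e_\tau)$ is spread among the $x_i$: when the intermediate gaps grow very fast the step can advance past only a few points at a time, so securing the constant-fraction reduction uniformly is exactly the crux, and is where I would expect to need additional control on the spacing or tail behaviour of the sample.
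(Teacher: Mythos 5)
Your reduction to one dimension, the identification of the LAWS update with a Newton--Raphson step on the convex piecewise-quadratic objective $g$, and the monotone, one-sided movement of the cut index toward the bracketing interval of $e_\tau$ all match the skeleton of the paper's argument (the paper's proof of Lemma \ref{lem:log.n} likewise invokes Proposition \ref{prop:laws.gradient.descent} and the linearity of the expectile in the $Y_i$ once the weights are known, as in (\ref{eqn:mu.tau.linear})). Up to that point your write-up is correct and in fact more explicit than the paper's.

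The gap you flag at the end is, however, a genuine one, and it is worth being precise about why the geometric-reduction step fails as stated. From your own closed form, what contracts geometrically is the \emph{distance} $|e^{(t+1)}-e_\tau|\le \frac{m_t}{A+m_t}\,|e^{(t)}-e_\tau|$, with ratio bounded by a constant depending only on $\tau$; this does \emph{not} imply $m_{t+1}\le\rho\,m_t$. If the data between the starting point and $e_\tau$ are spaced with gaps shrinking at the same geometric rate as the contraction ratio, each iteration clears only $\mathcal O(1)$ points, and the iteration count one actually extracts from the distance contraction is $\mathcal O\{\log(\Delta_{\max}/\Delta_{\min})\}$, a scale-ratio bound rather than a bound in $n$. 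So the constant-fraction reduction in $m_t$ cannot be obtained uniformly over all samples without some control on the spacing, exactly as you suspect. You should be aware that the paper's own proof does not close this gap either: it asserts that ``since the points $Y_i$'s are ordered, it takes $\mathcal O\{\log(n)\}$ to learn their true weights,'' which is a binary-search intuition that the reweighted-mean iteration does not obviously satisfy. Your proposal therefore reproduces the paper's approach and correctly isolates the one step that neither you nor the paper actually proves; to repair it one would need either an additional assumption on the sample (e.g.\ bounded gap ratios) or a genuinely different argument for the per-iteration reduction in the misclassification count.
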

\begin{proof}
Given the weights $w_1, \ldots, w_n$, that is, given which $Y_i$'s are above and below $e_\tau$, the $\tau$-expectile $e_\tau$ is a linear function in the $Y_i$ as we saw in (\ref{eqn:mu.tau.linear}). As shown in Proposition \ref{prop:laws.gradient.descent}, LAWS is equivalent to a Newton-Raphson algorithm on a piecewise quadratic function. Since the points $Y_i$'s are ordered, it takes $\mathcal O\{\log(n)\}$ to learn their true weights. Thus the algorithm converges in $\mathcal O\{\log(n)\}$ iterations.
\end{proof}

\begin{lem}\label{lem:orthant} An affine line in $\R^p$ can intersect at most $2p$ orthants. 
\end{lem}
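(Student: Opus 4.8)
The plan is to parametrize the affine line and track how the sign pattern of a point on it evolves, since the orthant containing a point $x \in \R^p$ with no vanishing coordinate is determined exactly by the vector of signs $\bigl(\operatorname{sgn}(x_1), \ldots, \operatorname{sgn}(x_p)\bigr)$. Write the line as $\ell(t) = a + t v$ for fixed $a, v \in \R^p$ with $v \neq 0$. The key observation is that each coordinate $\ell_j(t) = a_j + t v_j$ is an affine function of the single real parameter $t$, and hence changes sign at most once: if $v_j \neq 0$ it vanishes only at the unique time $t_j = -a_j/v_j$, while if $v_j = 0$ its sign is constant in $t$.

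First I would collect the crossing times $\{t_j : v_j \neq 0\}$. There are at most $p$ distinct such values, and they partition $\R$ into at most $p+1$ open intervals. On each such interval every coordinate has a fixed nonzero sign, so the sign pattern of $\ell(t)$ is constant there and the line stays within a single orthant. Since the line can lie in the interior of a given orthant only along these intervals, it meets at most $p+1$ orthants. As $p+1 \le 2p$ for every $p \ge 1$, this already yields the claimed bound (and in fact the sharper bound $p+1$).

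The only configurations requiring care are the degenerate ones, which I would handle by noting that they can only \emph{decrease} the count. If several coordinates vanish at the same value of $t$, the corresponding crossing times coincide, so the number of distinct break points, and hence the number of intervals, is strictly below $p+1$; the line then jumps between non-adjacent orthants through a lower-dimensional face, but such a break point still contributes at most one new orthant. Coordinates with $v_j = 0$ simply never produce a crossing time, and if some coordinate is identically zero along $\ell$ then the line lies in a coordinate hyperplane and meets no full orthant at all. The main obstacle is therefore purely bookkeeping: one must count ``intersecting an orthant'' via the open intervals of constant sign and make sure the finitely many boundary crossing times are not themselves double-counted as extra orthants. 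Once the reduction to at most $p$ sign changes along the line is in place, the bound is immediate.
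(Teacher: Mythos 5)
Your argument is correct and rests on the same core observation as the paper's proof: each coordinate of $\ell(t)=a+tv$ is affine in $t$ and so changes sign at most once, giving at most $p$ sign flips along the line. The only difference is bookkeeping --- the paper splits the line into two half-lines and bounds each by $p$ orthants to get $2p$, whereas your partition of $\R$ by the at most $p$ crossing times yields the sharper bound of $p+1$ directly.
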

\begin{proof}
Recall that an \emph{orthant} of $\R^p$ is a subset of $\R^p$ where the sign of each coordinate is constrained to be either nonnegative or nonpositive. There are $2^p$ orthants in $\R^p$. Let $f(\lambda) = Y + \lambda v$ be our affine line, $\lambda \in \R, Y, v \in \R^p$. Let $\textup{sgn}: \R^p \to \{\pm 1\}^p$ denote the sign function. Now, $\textup{sgn}\{f(0)\} = \textup{sgn}(Y), \textup{sgn}\{f(\infty)\} = \textup{sgn}(v)$, and $\textup{sgn}\{f(\lambda)\}$ is a monotone increasing function in $\lambda$. As $\lambda \to \infty$, $\textup{sgn}\{f(\lambda)\}$ goes from $\textup{sgn}(Y)$ to $\textup{sgn}(v)$ one bit flip at a time. Thus there are at most $p$ flips, that is, the half-line $f(\lambda)$ for $\lambda \in [0, \infty)$ intersects at most $p$ orthants. By a similar argument, the half-line $f(\lambda)$ for $\lambda \in (-\infty, 0)$ intersects at most $p$ other orthants. This concludes the proof. 
\end{proof}

\begin{cor}\label{cor:orthant}An affine subspace of dimension $k$ in $\R^p$ can intersect at most $\mathcal O(p^k)$ orthants.
\end{cor}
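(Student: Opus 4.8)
The plan is to prove the bound by induction on the dimension $k$, with Lemma \ref{lem:orthant} serving as the base case. The key reformulation is to identify each orthant with a sign pattern: an orthant of $\R^p$ is a choice of sign for each of the $p$ coordinates, and its boundary lies in the coordinate hyperplanes $H_j = \{x_j = 0\}$. Restricting these to the affine subspace $A$ (identified with $\R^k$ via an affine isomorphism) yields an arrangement of at most $p$ affine hyperplanes in $\R^k$. Since $A$ is convex and every open orthant is convex, $A$ meets each open orthant in a connected (possibly empty) set; hence the open orthants met by $A$ are in bijection with the full-dimensional cells (chambers) of this induced arrangement. It therefore suffices to bound the number of chambers of an arrangement of at most $p$ hyperplanes in $\R^k$ by $\mathcal O(p^k)$.

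I would prove this chamber bound by the standard ``add one hyperplane at a time'' recursion, nested inside the induction on $k$. The base case $k = 1$ is exactly Lemma \ref{lem:orthant}: an affine line meets at most $2p = \mathcal O(p)$ orthants (equivalently, $p$ points cut a line into at most $p+1$ pieces). For the inductive step, order the induced hyperplanes $g_1, \ldots, g_m$ with $m \le p$ and let $r_j$ be the number of chambers of $\{g_1, \ldots, g_j\}$. Adding $g_j$ increases the chamber count by exactly the number of pieces into which $g_j \cong \R^{k-1}$ is divided by its intersections with $g_1, \ldots, g_{j-1}$; those intersections form an arrangement of at most $p$ hyperplanes in dimension $k-1$, so by the inductive hypothesis this increment is $\mathcal O(p^{k-1})$. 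Summing the increments over the at most $p$ hyperplanes gives $r_m = \mathcal O(p)\cdot\mathcal O(p^{k-1}) = \mathcal O(p^k)$, as required. (The sharp constant $\sum_{i=0}^{k}\binom{p}{i}$ also drops out of this recursion, but only the order matters here.)

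The main obstacle is degeneracy rather than the counting itself. The clean bijection above and the recursion tacitly assume the coordinate hyperplanes restrict to genuine, distinct hyperplanes in general position on $A$; in reality $A$ may lie inside some $H_j$ (so the restriction is all of $A$ or empty), several restrictions may coincide, or $A$ may meet the boundary between orthants, placing boundary points in several \emph{closed} orthants at once. Each such degeneracy can only decrease the number of chambers, and the count of closed orthants met exceeds the count of open orthants met only by contributions from lower-dimensional faces; a generic affine perturbation of $A$, which can only increase the number of orthants intersected, reduces everything to the general-position case. Since we seek only an upper bound of the stated order, all of these effects are absorbed into the $\mathcal O$, and the induction goes through.
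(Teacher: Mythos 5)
Your proof is correct, but it takes a genuinely different route from the paper's. The paper fixes a basis $\psi_1,\ldots,\psi_k$ of the subspace and argues by a direct sweep: the line spanned by $\psi_1$ meets at most $2p$ orthants by Lemma \ref{lem:orthant}, and for each of these, varying along $\psi_2$ contributes at most another factor of $2p$, giving $(2p)^k$ by induction --- the authors themselves call this ``a rather liberal bound,'' and the inductive step (``varying along $\psi_2$ can yield at most another $2p$ orthants'') is asserted rather than justified. You instead restrict the coordinate hyperplanes to the subspace and bound the number of chambers of the resulting arrangement of at most $p$ hyperplanes in $\R^k$ by the add-one-hyperplane recursion; this is the classical chamber-counting argument for hyperplane arrangements, it makes the inductive step airtight, and it yields the sharp constant $\sum_{i=0}^{k}\binom{p}{i}$ rather than $(2p)^k$. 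The one place where you and the paper are equally loose is the degenerate case: for \emph{closed} orthants the statement fails in special position (a subspace through the origin meets all $2^p$ closed orthants), and your remark that a generic perturbation ``can only increase the number of orthants intersected'' is true for open orthants but not for closed ones. This does not affect the intended application --- Theorem \ref{thm:computation} only needs the number of distinct sign/weight patterns realized on the subspace, which is bounded by the total face count of the induced arrangement and is still $\mathcal O(p^k)$ --- but the corollary is best read as counting open orthants, or sign patterns under a fixed half-open convention.
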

\begin{proof}
Fix any basis, say $\psi_1, \ldots, \psi_k$. By Lemma \ref{lem:orthant}, $\psi_1$ can intersect at most $2p$ orthants. For each orthant of $\psi_1$, varying along $\psi_2$ can yield at most another $2p$ orthants. The proof follows by induction. (This is a rather liberal bound, but it is of the correct order for $k$ small relative to $p$).
\end{proof}

\begin{proof}[Proof of Theorem \ref{thm:computation}]
By Corollary \ref{cor:orthant}, it is sufficient to consider the case $k = 1$. Fix $V$ of dimension 1. Since $U, V$ are column matrices, we write them in lower case letters $u, v$. Solving for each $u_i$ is a separate problem, thus we have $n$ separate optimization problem, and it is sufficient to prove the claim for each $i$ for $i = 1, \ldots, n$. \\
Fix an $i$. As $u_i$ varies, $Y_i - m_i - u_iv$ defines a line in $\R^p$. The weight vector $(w_{i1}, \ldots, w_{ip})$ only depend on which coordinates are the orthant of $\R^p$ in which $Y_i -  m_i - u_iv$ is in. The later is equivalently to determining the weight of the $p$ points $\frac{Y_i-m_i}{v_i}$. By Lemma \ref{lem:log.n}, it takes $\mathcal O\{\log(p)\}$ for LAWS to determine the weights correctly. Thus LAWS takes at most $\mathcal O\{\log(p)\}$ iterations to converge, since each iteration involves estimating $w$, then $v$. Each iteration solves a weighted least squares, thus take $\mathcal O(npk^2)$. Hence for fixed $v$, LAWS can estimate $u$ after at most $\mathcal O\{npk^2\log(p)\}$ flops for $k = 1$. This concludes the proof for fixed $v$. 
By considering the transposed matrix $Y$, we see that the role of $u$ and $v$ are interchangeable. The conclusion follows similarly for fixed $u$. 
\end{proof}

\subsection{PrincipalExpectile}
In this section we describe the PrincipalExpectile algorithm. This algorithm is used to compute the principal expectile component defined in Section \ref{sec:pec}. We shall describe the case $k = 1$, that is, the algorithm for computing the first principal expectile component only. To obtain higher order components, one iterates the algorithm over the residuals $Y_i - \hat\phi_1(\hat\phi_1^\top Y_i + \hat{\mu}_1)$, where $\hat\mu_1$ is the $\tau$-expectile of the loadings $\hat\phi_1^\top Y_i$.

\begin{algorithm}
\caption{PrincipalExpectile}\label{alg:pec}
\begin{algorithmic}[1]
\State Input: data $Y \in \R^{n \times p}$.
\State Output: a vector $\hat{\phi}$, an estimator of the first principal expectile component of $Y$.
\Procedure{PrincipalExpectile}{$Y$}
\State Initialize the weights $w_i^{(0)}$
\State Set $t = 0$.
\Repeat
	\State Let $\tau^{(t)}_+$ be the set of indices $i$ such that $w_i^{(t)} = \tau$, and $\tau^{(t)}_-$ be the complement.
	\State Compute $e_\tau^{(t)}$ as in equation (\ref{eqn:mu.tau.linear}) with sets $\tau^{(t)}_+, \tau^{(t)}_-$.
	\State Compute $C_\tau^{(t)}$ as in equation (\ref{eqn:C}) with sets $\tau^{(t)}_+, \tau^{(t)}_-$.
	\State Set $\phi^{(t)}$ to be the largest eigenvector of $C_\tau^{t}(C_\tau^{t})^\top$
	\State Set $\mu_\tau^{(t)}$ to be the $\tau$-expectile of $(\phi^{(t)})^\top Y_i$
	\State Update $w_i$: set $w_i^{(t+1)} = \tau$ if $(\phi^{(t)})^\top Y_i > \mu_\tau^{(t)}$, and set $w_i^{(t+1)} = 1-\tau$ otherwise.
	\State Set t = t + 1
 \Until $w_i^{t} = w_i^{(t+1)}$ for all $i$. \\
 \Return $\hat\phi = \phi^{(t)}$.
\EndProcedure
\end{algorithmic}
\end{algorithm}

For $n$ observations $Y_1, \ldots, Y_n$, there are at most $2^n$ possible labels for the $Y_i$'s, and hence the algorithm has in total $2^n$ possible values for the $w_i$'s. Thus either Algorithm \ref{alg:pec} converges to a point which satisfies the properties of the optimal solution that Theorem \ref{thm:pec} prescribes, or that it iterates infinitely over a cycle of finitely many possible values of the $w_i$'s. In particular, the true solution is a fixed point, and thus fixed points always exist. In practice, we find that the algorithm converges very quickly, and can get stuck in a finite cycle of values. In this case, one can jump to a different starting point and restart the algorithm. Choosing a good starting value is important in ensuring convergence. Since the $\tau$-variance is a continuous function in $\tau$, we find that in most cases, one can choose a good starting point by performing a sequence of such computations for a sequence of $\tau$ starting with $\tau = 1/2$, and set the initial weight to be that induced by the previous run of the algorithm for a slightly smaller (or larger) $\tau$. 
\newpage\section{Simulation}\label{sec:sim}
To study the finite sample properties of the proposed algorithms we do a simulation study. We follow the simulation setup of \cite{gzhh}, that is, we simulate the data $Y_{ij}, i=1,\ldots,n$, $j=1,\ldots,p$ as
\begin{equation}\label{simy}
Y_{ij}=\mu(t_j)+f_1(t_j)\alpha_{1i}+f_2(t_j)\alpha_{2i}+\varepsilon_{ij},
\end{equation}
where $t_j$'s are equidistant on [0,1], $\mu(t)=1+t+\exp\{-(t-0.6)^2/0.05\}$ is the mean function, $f_1(t)=\sqrt 2\sin(2\pi t)$ and $f_2(t)=\sqrt 2\cos(2\pi t)$ are principal component curves, and $\varepsilon_{ij}$ is a random noise.\\
We consider different settings 1 and 2 each with five error scenarios:
\begin{enumerate}
\item $\alpha_{1i}\sim \textup{N}(0,36)$ and $\alpha_{2i}\sim \textup{N}(0,9)$ are both iid and $\varepsilon_{ij}$'s are
(1) iid $\textup{N}(0, \sigma_1^2)$, (2) iid $t(5)$, (3) independent $\textup{N}\{0,\mu(t_j)\sigma_1^2\}$, (4) iid $\textup{logN}(0,\sigma_1^2)$ and (5) iid sums of two uniforms $U(0,\sigma_1^2)$ with $\sigma_1^2$=0.5.
\item $\alpha_{1i}\sim \textup{N}(0,16)$ and $\alpha_{2i}\sim \textup{N}(0,9)$ are both iid and $\varepsilon_{ij}$'s are 
(1) iid $\textup{N}(0, \sigma_2^2)$, (2) iid $t(5)$, (3) independent $\textup{N}\{0,\mu(t_j)\sigma_2^2\}$, (4) iid $\textup{logN}(0,\sigma_2^2)$ and (5) iid sums of two uniforms $U(0,\sigma_2^2)$ with $\sigma_2^2$=1.
\end{enumerate}
Note that the settings imply different ratios of coefficient-to-coefficient-to-noise variations. In the setting 1 scenario (1) we have a ratio 36:9:0.5, whereas in the setting 2 scenario (1) we have 16:9:1. Apart from standard Gaussian errors, we also consider "fat tailed" errors in scenario (2), heteroscedastic in (3) and skewed errors in (4). 
We study the performance of the algorithms for three sample sizes: (i) small $n$=20, $p$=100; (ii) medium $n$=50, $p$=150; (iii) large $n$=100, $p$=200.

For every combination of parameters we repeat the simulations 500 times and record the mean computing times, the mean of the average mean squared error (MSE), its standard deviation, and convergence ratio for each algorithm. We label the run of the algorithm as unconverged whenever after 30 iterations and 50 restarts from a random starting point the algorithms fail to converge. 

Computational time records are in Table \ref{tab4} and convergence statistics are reported in Table \ref{tab3}. For the ease of notation we write BUP for BottomUp, TD for TopDown and PEC for PrincipalExpectile.

PEC is the fastest algorithm as shown in Table \ref{tab4}. For large sample and high expectile level it is more than three times faster than TD and more than five times faster than BUP. Although the fastest from considered algorithms the PEC is considerably slower than the classical PCA routines with the average computational times 0.002 seconds for small, 0.005 seconds for medium, and 0.023 seconds for large sample (computed by function \verb prcomp ~in package \verb stats ~of statistical script language R).  
\begin{table}[H]
\centering
\begin{tabular}{cccccccccc}\hline\hline
sample &\multicolumn{3}{c}{small}&\multicolumn{3}{c}{medium}&\multicolumn{3}{c}{large}\\
$\tau$/sec&BUP&TD&PEC&BUP&TD&PEC&BUP&TD&PEC\\\hline
0.900    &         1.15   &       0.70      &   { 0.57}    &      2.87     &     1.59& { 1.39}      &    7.44   &       4.02     &      {2.71}\\
0.950      &      1.52     &    1.13       &   {  0.55}    &      3.94       &   2.68&{1.57}   &      10.34       &   6.88    &       {3.03}\\
0.975      &     2.47       &   2.32       &   { 0.56}     &     5.49       &   4.62&  {1.56}    &     14.37   &     10.96       &    {3.54}\\
\hline\hline	
\end{tabular}
\caption{Average time in seconds for convergence of the algorithms by 500 simulations}\label{tab4}
\end{table}

The major draw back of PEC is the relative low convergence rate: for all sample sizes only around 80\% of algorithm runs were convergent. In 20\% cases the algorithm keeps iterating between two sets of weights which possibly indicates an adverse sample geometry, i.e. that two eigevalues of the scaled covariance matrix are too close to each other. TD, on the contrary, converges almost always in medium and large sample sizes.
\begin{table}[H]
\centering
\begin{tabular}{cccccccccc}\hline\hline
sample &\multicolumn{3}{c}{small}&\multicolumn{3}{c}{medium}&\multicolumn{3}{c}{large}\\
$\tau$/rate&BUP&TD&PEC&BUP&TD&PEC&BUP&TD&PEC\\\hline
0.900    &         0.11      &   0.00       &   {0.24}        &  0.07      &   0.00   &  {0.23}      &    0.03        &    0.00     &    { 0.20}\\
0.950      &      0.17     &    0.00       &    {0.22}   &      0.13    &     0.00 &  {0.26}    &      0.11       &     0.00    &       {0.21}\\
0.975      &    0.25     &    0.03      &     {0.21}     &    0.22     &   0.01  &  {0.25}     &    0.22      &      0.00       &   {0.24}\\
\hline\hline	
\end{tabular}
\caption{Nonconvergence rates of the algorithms by 500 simulation runs}\label{tab3}
\end{table}

The results on the MSEs for both simulation settings are presented in Tables \ref{tab1} and \ref{tab2} respectively. For the settings 1 and 2 solely the magnitude of the average MSE differs; there is no substantial qualitative difference in relative performance of the algorithms. BUP performs the worst of the three algorithms in terms of its MSE in all scenarios. TD and PEC are comparable in terms of their MSEs. PEC shows robustness against skewness and fat tails in the error distribution since it produces the lowest MSEs in scenarios (2) and (4). Yet TD tends to slightly outperform PEC in medium and large samples by errors close to iid normal or normal heteroscedastic; by small sample sizes PEC outperforms TD in all scenarios but (5). 

Figures \ref{figpc1} and \ref{figpc2} illustrate the difference in the quality of component estimation for the 95\% expectile when coefficient-to-coefficient-to-noise variation ratio changes (setting 1 versus setting 2 respectively). The results are shown for the error scenario (1) and  small sample size. We observe that as the ratio changes from 36:9:0.5 (setting 1, Figure \ref{figpc1}) to 16:9:1 (setting 2, Figure \ref{figpc2}) the variability of the estimators of both component functions increases. The overall mean of the estimators remains very close to the true component functions.

We conclude that whenever the error distribution is fat-tailed or skewed, or by small samples PEC is likely produce more reliable results in terms of its MSE, whereas by errors close to normal and moderate or large samples TD is likely to produce smaller MSEs. 
\begin{spacing}{0.05}
\begin{landscape}
\begin{table}
\centering
\begin{small}
\begin{tabular}{rrrrrrrrrrr}\hline\hline
\multirow{2}{*}{scenario}&\multirow{2}{*}{$\tau$}&\multicolumn{3}{c}{$n$=20, $p$=100}&\multicolumn{3}{c}{$n$=50, $p$=150}&\multicolumn{3}{c}{$n$=100, $p$=200}\\
&&BUP& TD& PEC& BUP& TD& PEC& BUP& TD& PEC\\\hline
\multirow{6}{*}{(1)} &0.900&0.2762&0.1216&0.1123&0.1339&0.0538&0.0632 &0.0698&0.0297&0.0459\\
&&(0.1997) &(0.0097)&(0.0111)&(0.1099)&(0.0033)&(0.0029)&(0.0552)&(0.0015)&(0.0014)\\
&0.950&0.3619 &0.1568 &0.1334 &0.2323 &0.0705 &0.0727 &0.1312 &0.0394 &0.051\\
&&(0.2199) &(0.0123) &(0.0181) &(0.2076) &(0.0045) &(0.0044) &(0.1415) &(0.0020) &(0.0019)\\
&0.975&0.5064 &0.2053 &0.1601 &0.3583 &0.0944 &0.0874 &0.2157 &0.0536 &0.0594\\
&&(0.2977) &(0.0154) &(0.0276) &(0.2989) &(0.0060) &(0.0075) &(0.2314) &(0.0027) &(0.0035)\\
\multirow{6}{*}{(2)} &0.900&0.7092 &0.5421 &0.3147 &0.3382 &0.2714 &0.1494 &0.1866 &0.1548 &0.0932\\
&&(0.2382) &(0.1096) &(0.0685) &(0.1223) &(0.0727) &(0.0117) &(0.0522) &(0.0217) &(0.0050)\\
&0.950&1.105 &0.7847 &0.3854 &0.5789 &0.4440 &0.1819 &0.3316 &0.2680 &0.1101\\
&&(0.4453) &(0.1646) &(0.0988) &(0.2664) &(0.1675) &(0.0192) &(0.1144) &(0.0575) &(0.0075)\\
&0.975&1.6066 &1.1158 &0.4709 &0.9956 &0.7033 &0.2309 &0.5780 &0.4641 &0.1358\\
&&(0.7968) &(0.2106) &(0.1413) &(0.6936) &(0.2629) &(0.0341) &(0.2227) &(0.1175) &(0.0132)\\
\multirow{6}{*}{(3)} &0.900&0.4146 &0.2300 &0.2215 &0.1829 &0.1019 &0.1270 &0.0962 &0.0562 &0.0942\\
&&(0.2413) &(0.0195) &(0.0236) &(0.1070) &(0.0065) &(0.0066) &(0.0510) &(0.0029) &(0.0032)\\
&0.950&0.6261 &0.2966 &0.2792 &0.3538 &0.1335 &0.1622 &0.1603 &0.0746 &0.1208\\
&&(0.6313) &(0.0246) &(0.0369) &(1.1684) &(0.0088) &(0.0097) &(0.1135) &(0.0039) &(0.0045)\\
&0.975&0.8051 &0.3885 &0.3516 &0.4879 &0.1789 &0.2109 &0.2665 &0.1016 &0.1568\\
&&(0.4516) &(0.0312) &(0.0527) &(0.3736) &(0.0118) &(0.0167) &(0.2234) &(0.0052) &(0.0077)\\
\multirow{6}{*}{(4)} &0.900&0.9162 &0.8041 &0.2226 &0.4854 &0.4510 &0.1077 &0.2876 &0.2763 &0.0697\\
&&(0.2432) &(0.1532) &(0.0588) &(0.1093) &(0.0597) &(0.0089) &(0.0498) &(0.0247) &(0.0042)\\
&0.950&1.4972 &1.2869 &0.2725 &0.9127 &0.8092 &0.1296 &0.5585 &0.5280 &0.0812\\
&&(0.4494) &(0.2337) &(0.0713) &(0.4895) &(0.1187) &(0.0142) &(0.1595) &(0.0554) &(0.0069)\\
&0.975&2.3371 &1.9727 &0.3331 &1.5522 &1.3387 &0.1629 &1.2223 &0.9421 &0.0995\\
&&(1.0034) &(0.2835) &(0.0979) &(0.7483) &(0.1999) &(0.0248) &(1.4707) &(0.1110) &(0.0117)\\
\multirow{6}{*}{(5)}&0.900&0.0343 &0.0091 &0.0368 &0.0298 &0.0038 &0.0315 &0.0244 &0.0021 &0.0296\\
&&(0.0224) &(0.0007) &(0.0013) &(0.0261) &(0.0002) &(0.0004) &(0.0238) &(0.0001) &(0.0002)\\
&0.950&0.1225 &0.0110 &0.0409 &0.0351 &0.0044 &0.0345 &0.0285 &0.0023 &0.0322\\
&&(1.1145) &(0.0008) &(0.0020) &(0.0398) &(0.0003) &(0.0007) &0.0254 &(0.0004) &(0.0004)\\
&0.975&0.0776 &0.0135 &0.0474 &0.0455 &0.0052 &0.0397 &0.0360 &0.0027 &0.0366\\
&&(0.3266) &(0.0011) &(0.0034) &(0.0658) &(0.0003) &(0.0012) &(0.0309) &(0.0001) &(0.0006)\\\hline\hline
\end{tabular}
\caption{average MSE and its standard deviation in brackets by 500 simulation runs for the simulation setting 1.}
\label{tab1}
\end{small}
\end{table}
\begin{table}
\begin{small}
\centering
\begin{tabular}{rrrrrrrrrrr}\hline\hline
\multirow{2}{*}{scenario}&\multirow{2}{*}{$\tau$}&\multicolumn{3}{c}{$n$=20, $p$=100}&\multicolumn{3}{c}{$n$=50, $p$=150}&\multicolumn{3}{c}{$n$=100, $p$=200}\\
&&BUP& TD& PEC& BUP& TD& PEC& BUP& TD& PEC\\\hline
\multirow{6}{*}{(1)} &0.900&0.4484 &0.2436 &0.1988 &0.2053 &0.1077 &0.1002 &0.1109 &0.0595 &0.0660\\
&&(0.2671) &(0.0195) &(0.0238) &(0.1273) &(0.0066) &(0.0058) &(0.0924) &(0.0030) &(0.0027)\\
&0.950&0.7021 &0.314 &0.2418 &0.3681 &0.1411 &0.119 &0.2075 &0.0788 &0.0761\\
&&(0.4611) &(0.0246) &(0.0386) &(0.3066) &(0.0090) &(0.0091) &(0.2346) &(0.0039) &(0.0039)\\
&0.975&0.9218 &0.4116 &0.2945 &0.5957 &0.1890 &0.1483 &0.3364 &0.1074 &0.0925\\
&&(0.5578) &(0.0312) &(0.0546) &(0.4751) &(0.0121) &(0.0152) &(0.3565) &(0.0053) &(0.0067)\\
\multirow{6}{*}{(2)}&0.900&0.7424 &0.5427 &0.3186 &0.3560 &0.2716 &0.1502 &0.2047 &0.1549 &0.0935\\
&&(0.2933) &(0.1099) &(0.0762) &(0.1695) &(0.0728) &(0.0123) &(0.1886) &(0.0218) &(0.0050)\\
&0.950&1.1483 &0.7855 &0.3920 &0.6656 &0.4437 &0.1832 &0.3805 &0.2684 &0.1103\\
&&(0.5078) &(0.1643) &(0.1096) &(0.6719) &(0.1658) &(0.0185) &(0.3563) &(0.0581) &(0.0075)\\
&0.975&1.7083 &1.1095 &0.4805 &1.1714 &0.7048 &0.2342 &0.6974 &0.4648 &0.1368\\
&&(0.8614) &(0.1744) &(0.1493) &(0.9716) &(0.2652) &(0.0323) &(0.5981) &(0.1192) &(0.0126)\\
\multirow{6}{*}{(3)} &0.900&0.6616 &0.4613 &0.4093 &0.2993 &0.2041 &0.2200 &0.1684 &0.1126 &0.1540\\
&&(0.2625) &(0.0392) &(0.0486) &(0.1163) &(0.0131) &(0.0134) &(0.1880) &(0.0058) &(0.0066)\\
&0.950&1.0027 &0.5948 &0.5229 &0.4979 &0.2675 &0.2875 &0.3031 &0.1494 &0.2042\\
&&(0.5055) &(0.0495) &(0.0802) &(0.3671) &(0.0177) &(0.0215) &(0.4360) &(0.0077) &(0.0090)\\
&0.975&1.465 &0.7811 &0.6719 &0.8605 &0.3587 &0.3831 &0.5173 &0.2036 &0.2724\\
&&(0.8018) &(0.0627) &(0.1154) &(0.8004) &(0.0237) &(0.0338) &(0.6708) &(0.0103) &(0.0156)\\
\multirow{6}{*}{(4)}&0.900&5.4073 &5.2042 &1.0318 &3.3226 &3.2871 &0.4075 &2.0358 &2.0686 &0.2295\\
&&(2.1503) &(1.9812) &(0.9534) &(1.1548) &(1.0106) &(0.1258) &(0.6044) &(0.5259) &(0.1632)\\
&0.950&8.7171 &8.0696 &1.4256 &6.5227 &6.2094 &0.5143 &4.5541 &4.4481 &0.2939\\
&&(2.8223) &(2.3418) &(1.4550) &(1.9576) &(1.5846) &(0.1540) &(1.4193) &(1.0287) &(0.3150)\\
&0.975&13.419 &11.635 &2.0054 &11.202 &9.8804 &0.7372 &8.9280 &8.3663 &0.3889\\
&&(5.1223) &(1.6721) &(2.2733) &(4.0968) &(1.8550) &(0.5037) &(2.4679) &(2.7240) &(0.3161)\\
\multirow{6}{*}{(5)}&0.900&0.1135 &0.0365 &0.0572 &0.0923 &0.0153 &0.0394 &0.0561 &0.0083 &0.0333\\
&&(0.0755) &(0.0027) &(0.0041) &(0.0878) &(0.0009) &(0.0011) &(0.0628) &(0.0004) &(0.0005)\\
&0.950& 0.1430 &0.0440 &0.0651 &0.1197 &0.0177 &0.0434 &0.0896 &0.0093 &0.0356\\
&&(0.1214) &(0.0034) &(0.0060) &(0.1033) &(0.0010) &(0.0018) &(0.0938) &(0.0005) &(0.0008)\\
&0.975& 0.2489 &0.0540 &0.0769 &0.1538 &0.0209 &0.0499 &0.1145 &0.0107 &0.0396\\
&&(0.6091) &(0.0042) &(0.0099) &(0.1272) &(0.0013) &(0.0031) &(0.1042) &(0.0006) &(0.0013)\\\hline\hline
\end{tabular}
\caption{average MSE and its standard deviation in brackets by 500 simulation runs for the simulation setting 2.}
\label{tab2}
\end{small}
\end{table}
\end{landscape}
\end{spacing}
\begin{figure}[H]
	\centering
		\includegraphics[width=0.40\textwidth]{./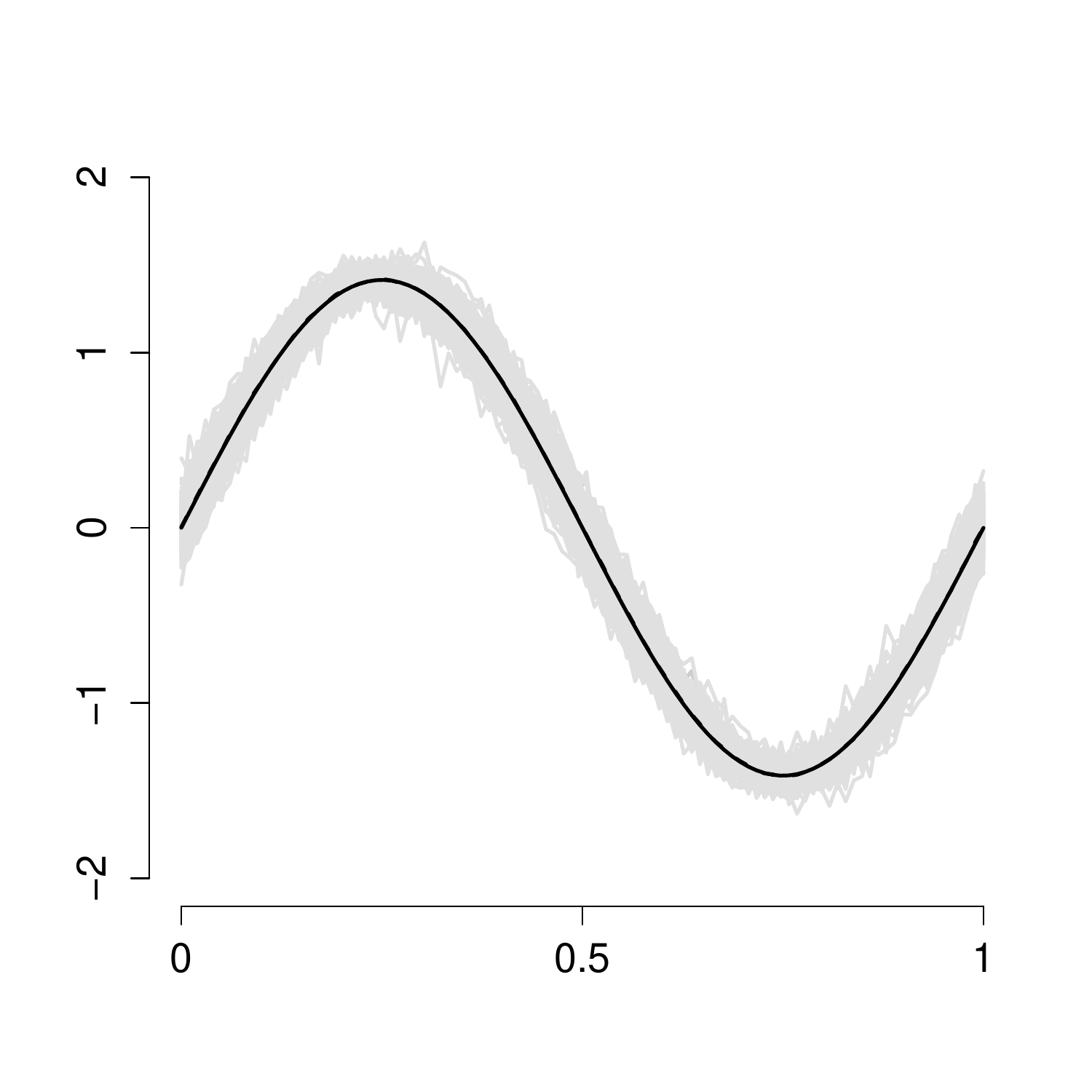}\includegraphics[width=0.40\textwidth]{./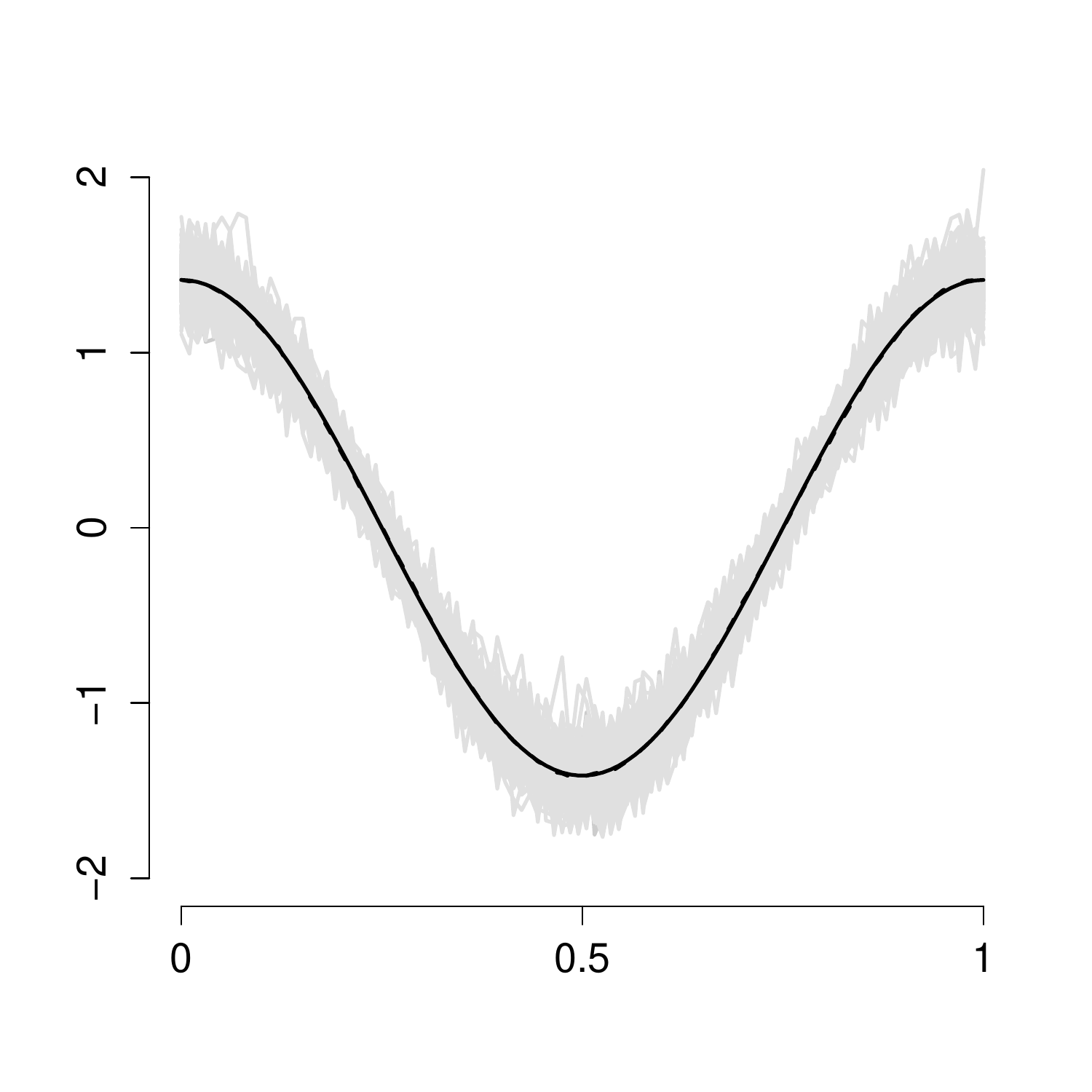}
		\includegraphics[width=0.40\textwidth]{./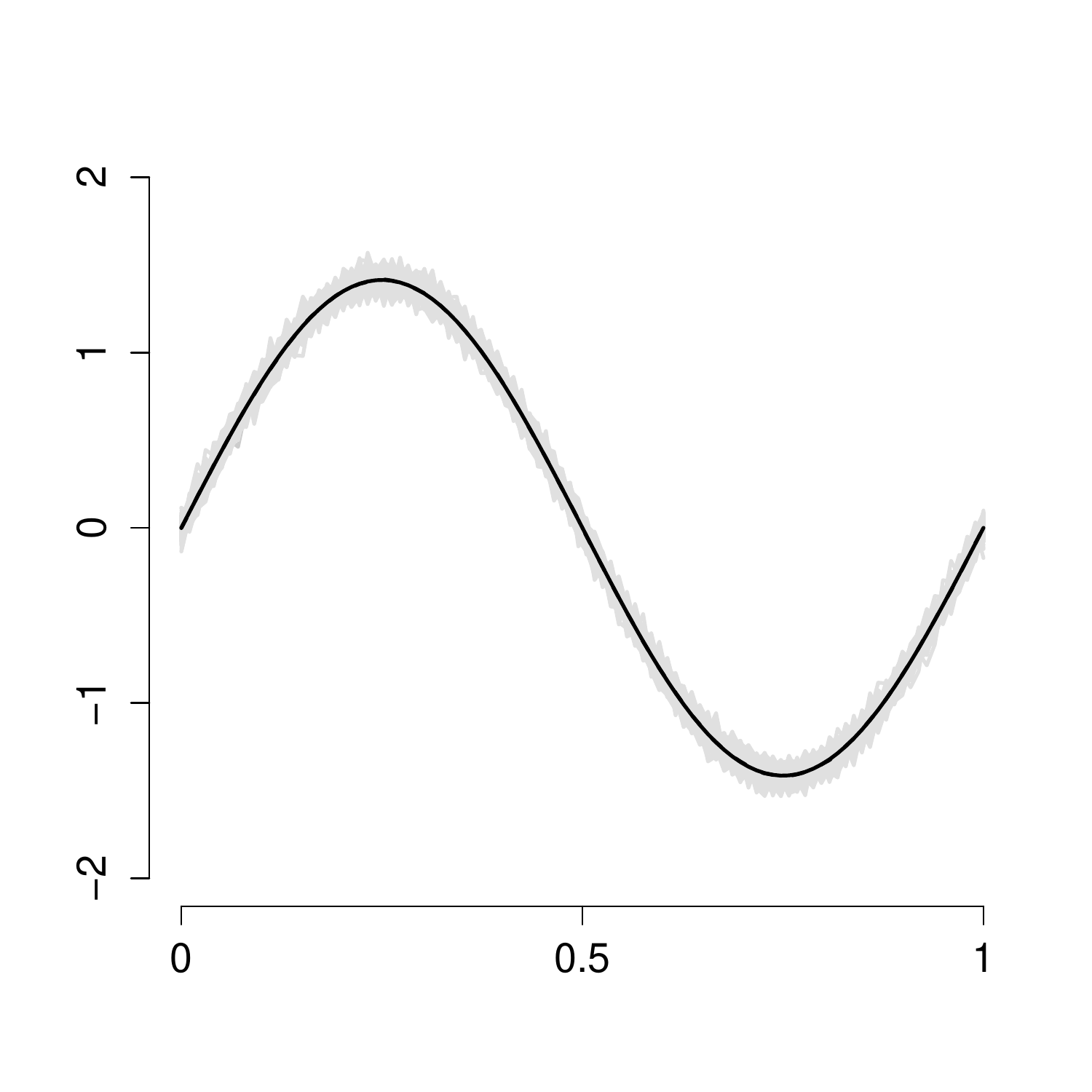}\includegraphics[width=0.40\textwidth]{./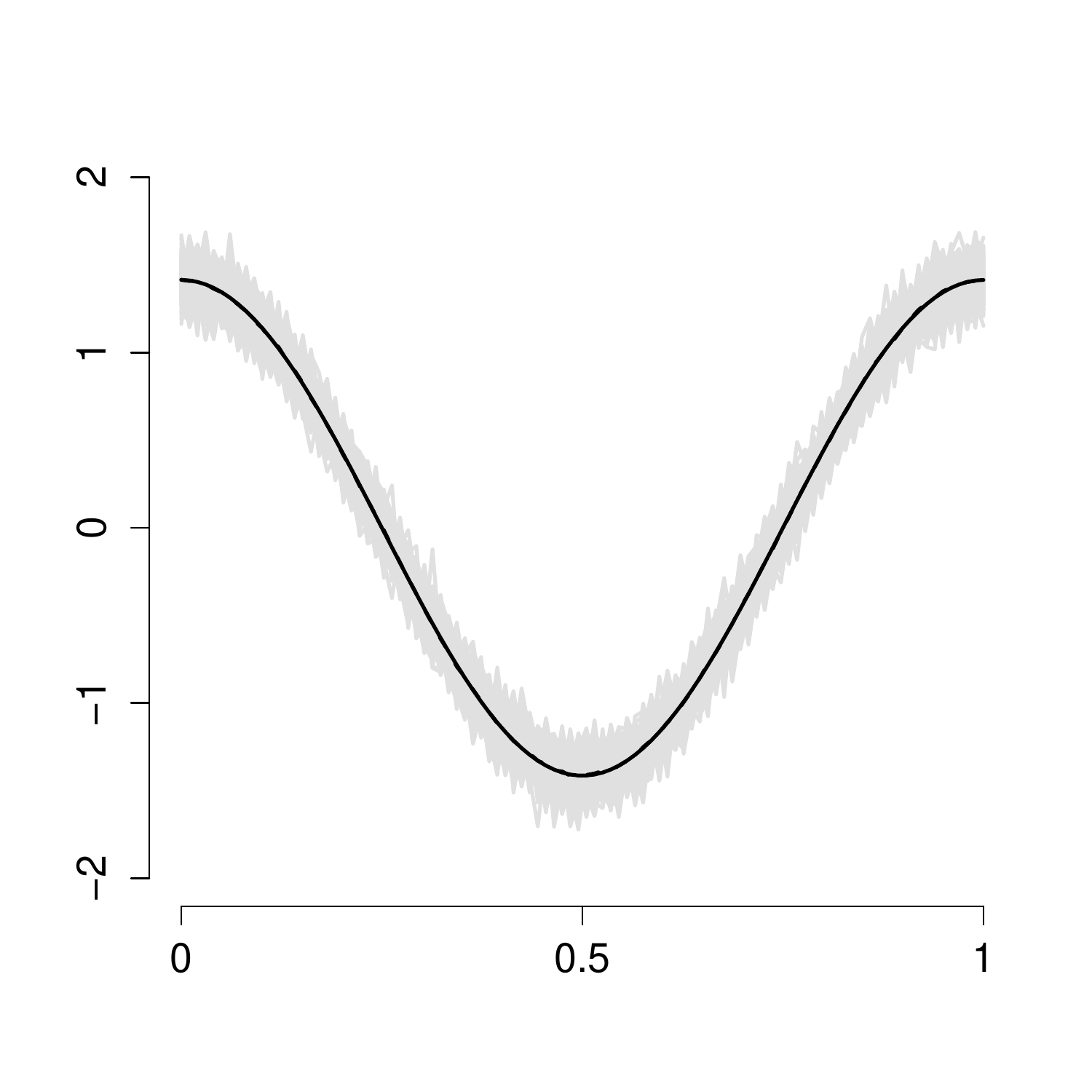}
		\includegraphics[width=0.40\textwidth]{./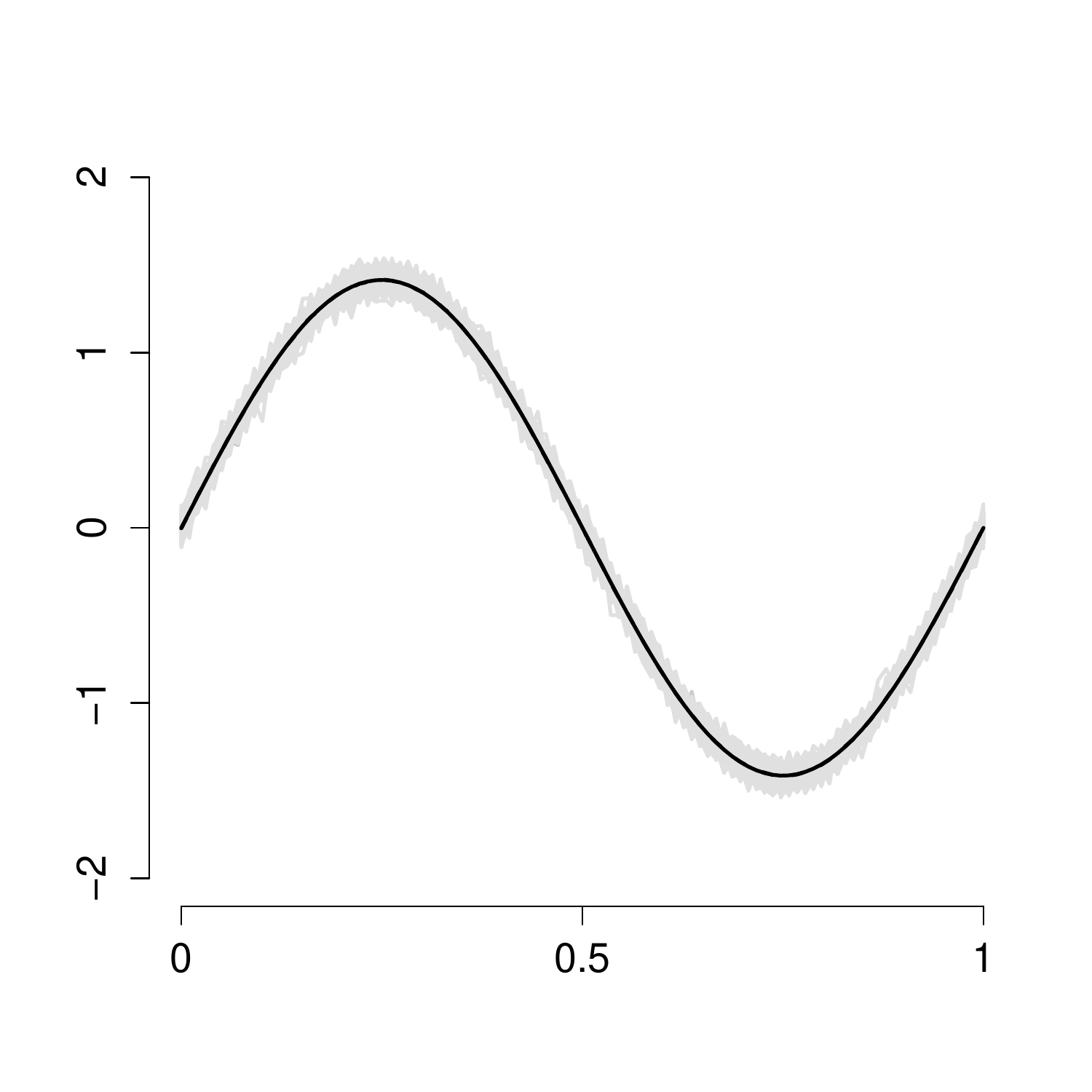}\includegraphics[width=0.40\textwidth]{./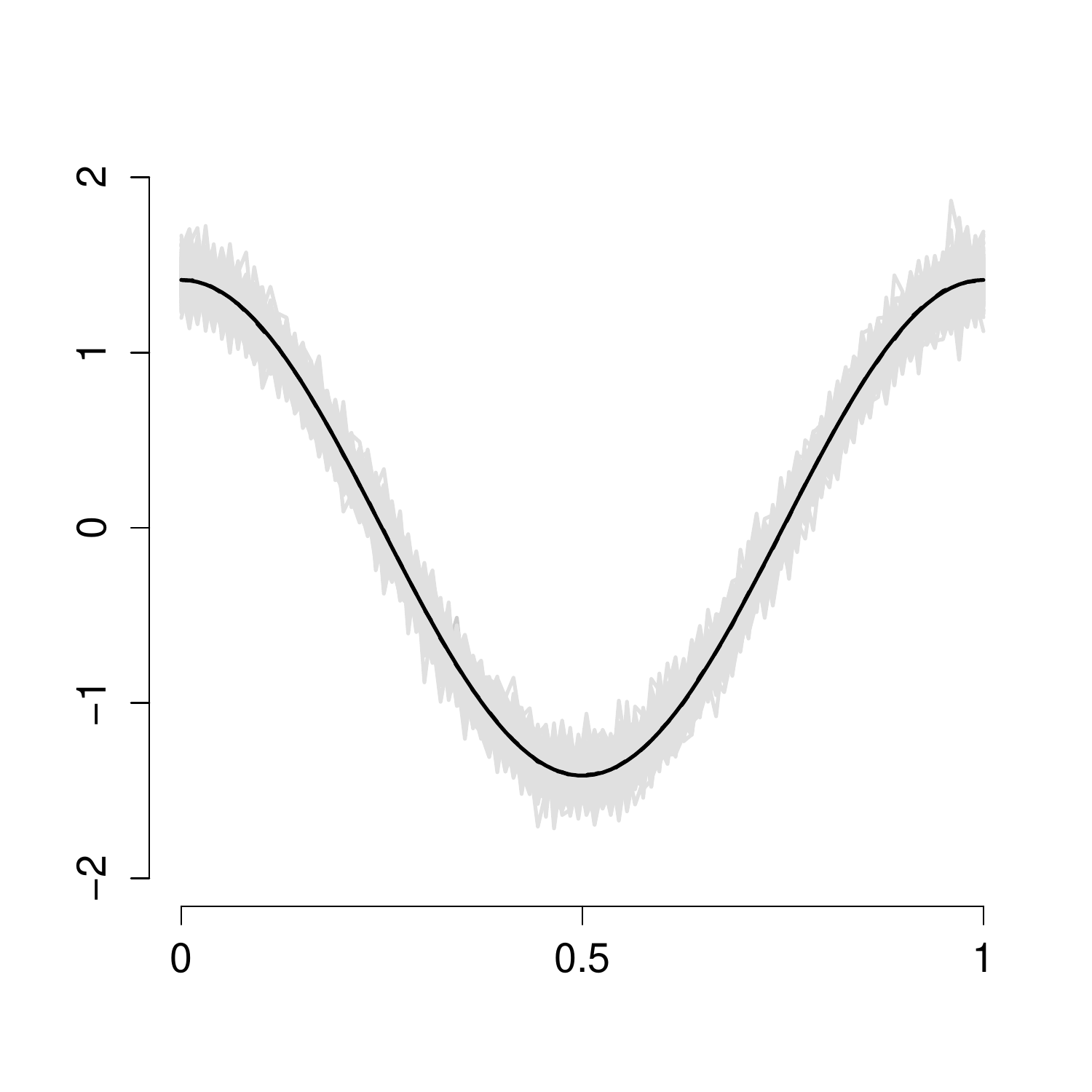}
	\caption{Estimated component functions (solid gray) by 500 simulation runs for simulation setting 1 scenario 1 small sample size and 95\% expectile. The rows from the top to the bottom show respectively results produced by BUP, TD and PEC. Left panel corresponds to the first component function, right panel - to the second. The true functions are shown as solid black curves. The overall mean across simulation runs is shown as dashed black curve. The later can not be distinguished from the true curve.}\label{figpc1}
\end{figure}
\begin{figure}[H]
	\centering
		\includegraphics[width=0.40\textwidth]{./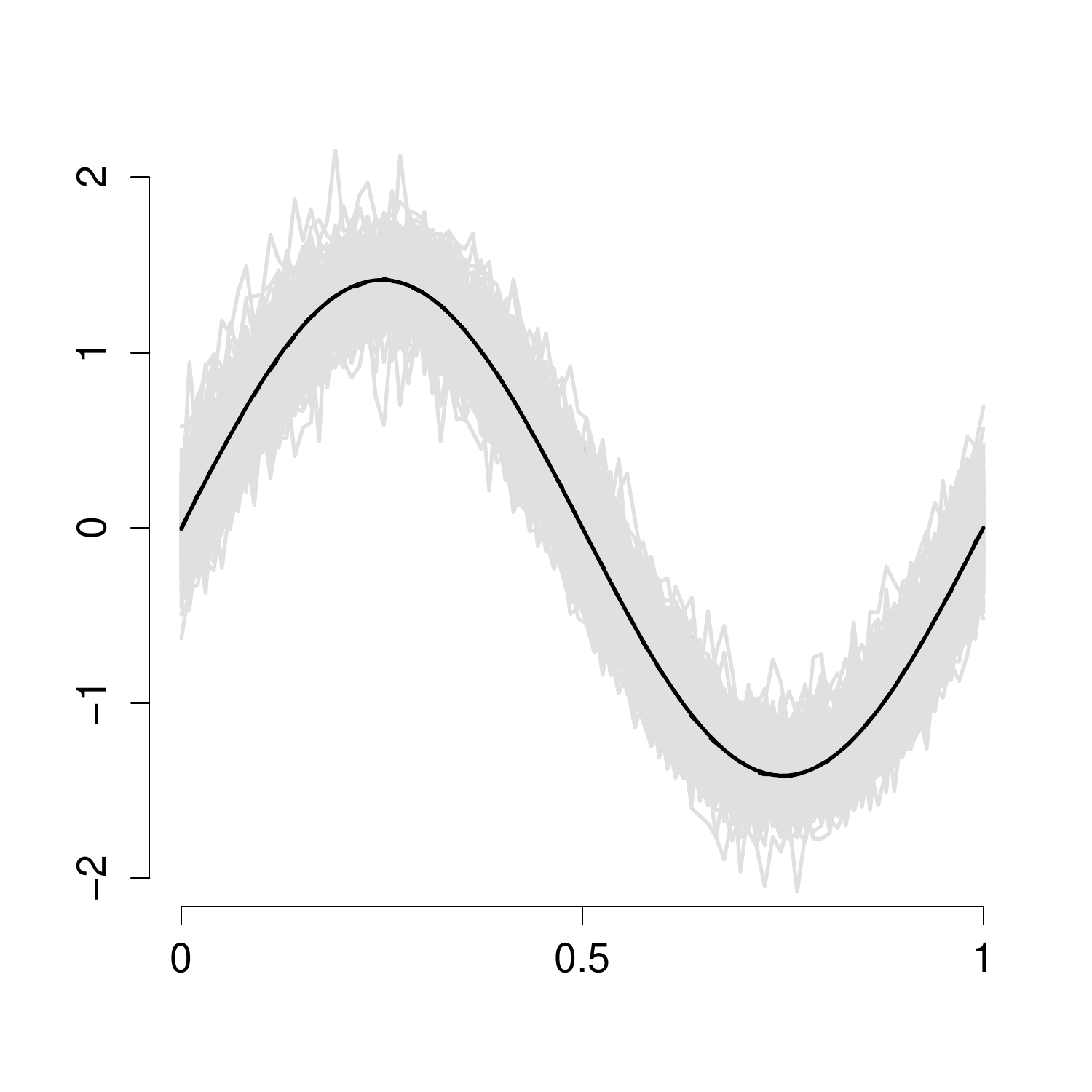}\includegraphics[width=0.40\textwidth]{./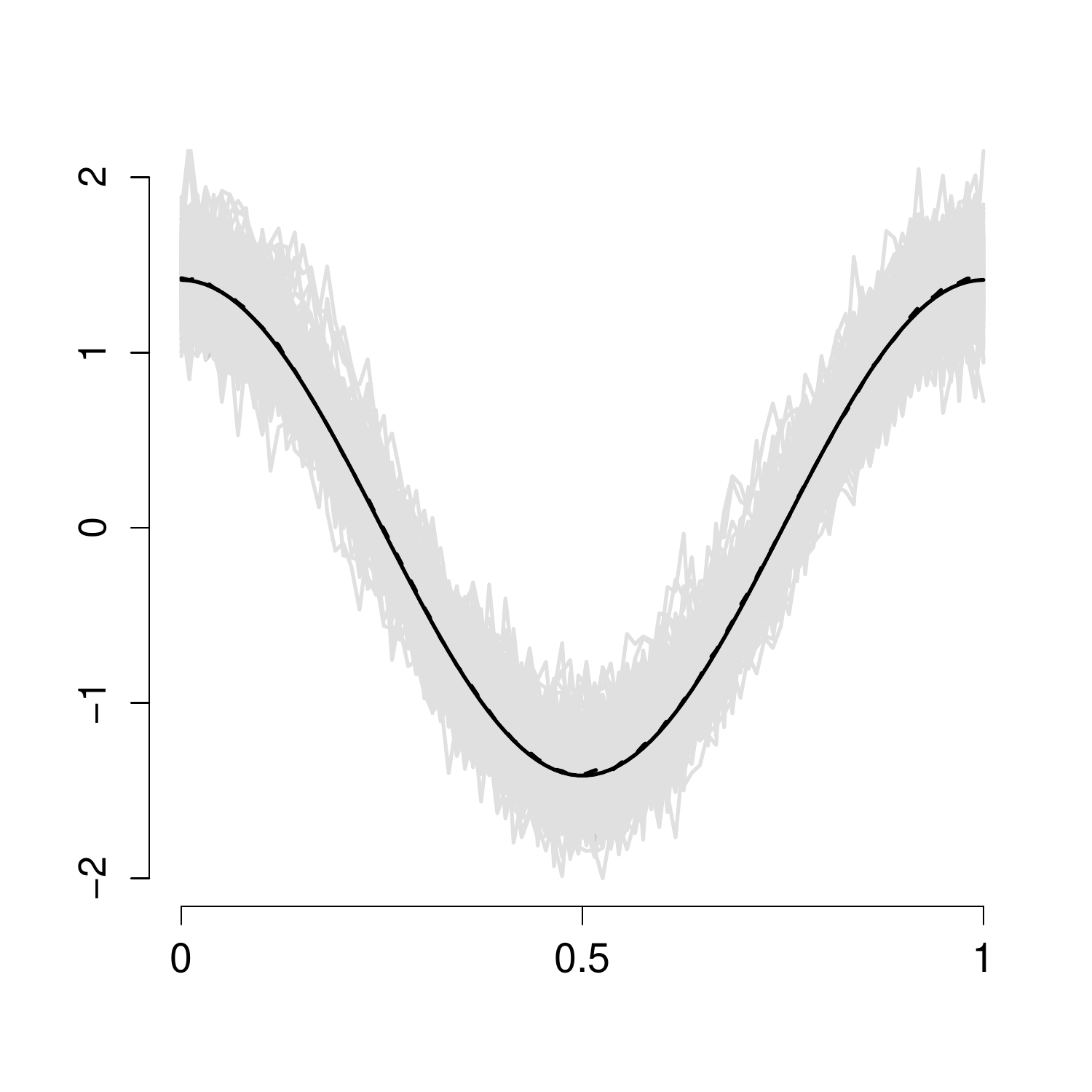}
		\includegraphics[width=0.40\textwidth]{./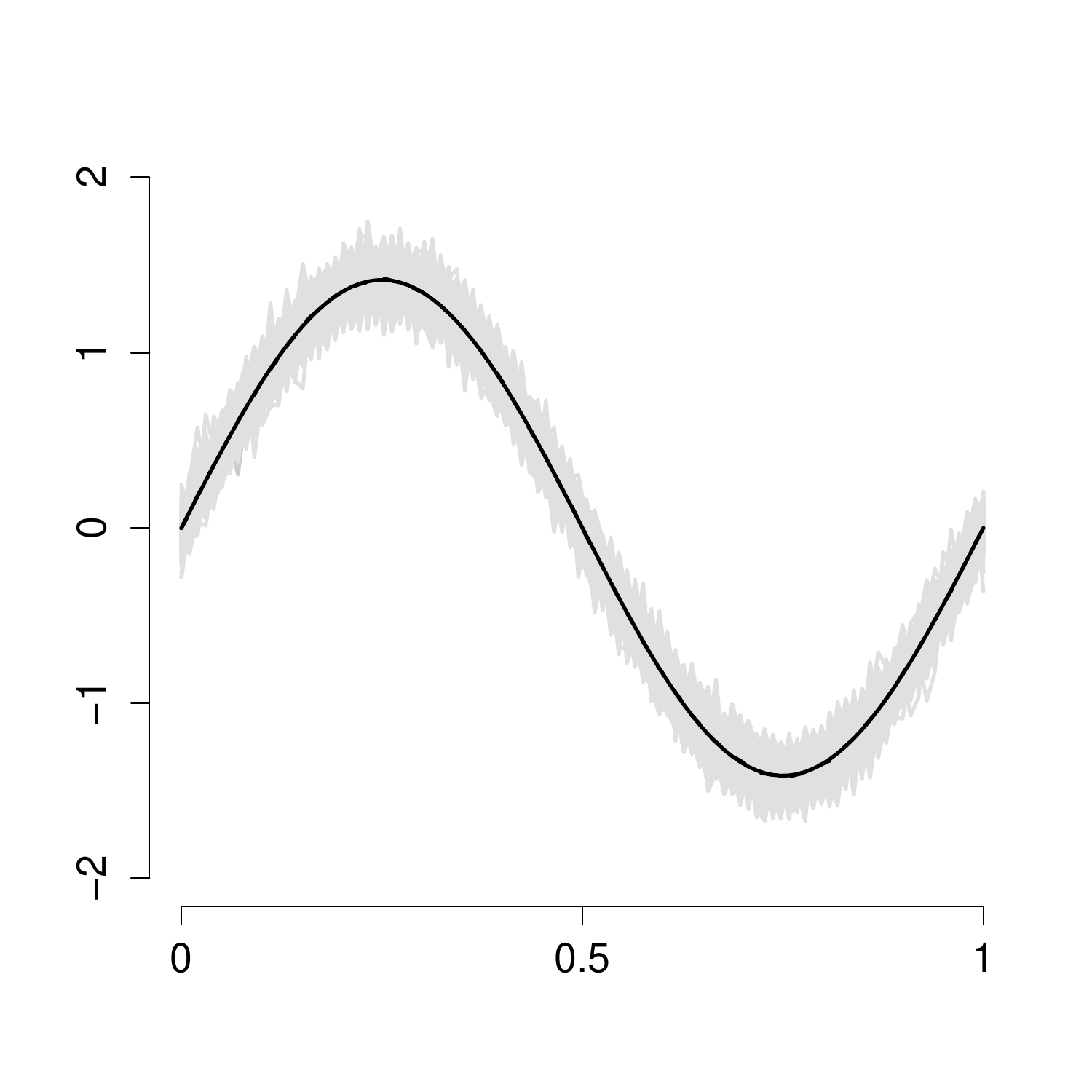}\includegraphics[width=0.40\textwidth]{./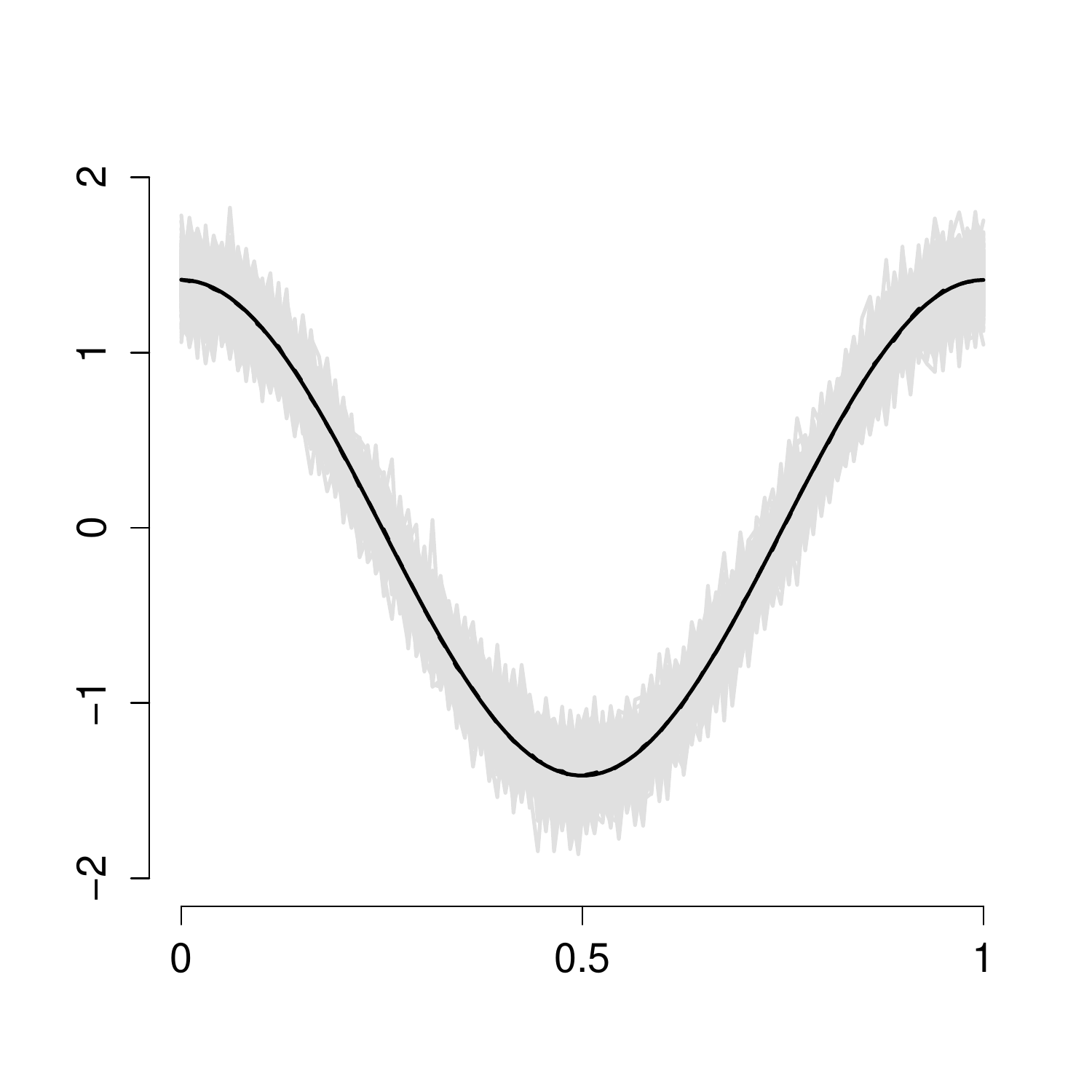}
		\includegraphics[width=0.40\textwidth]{./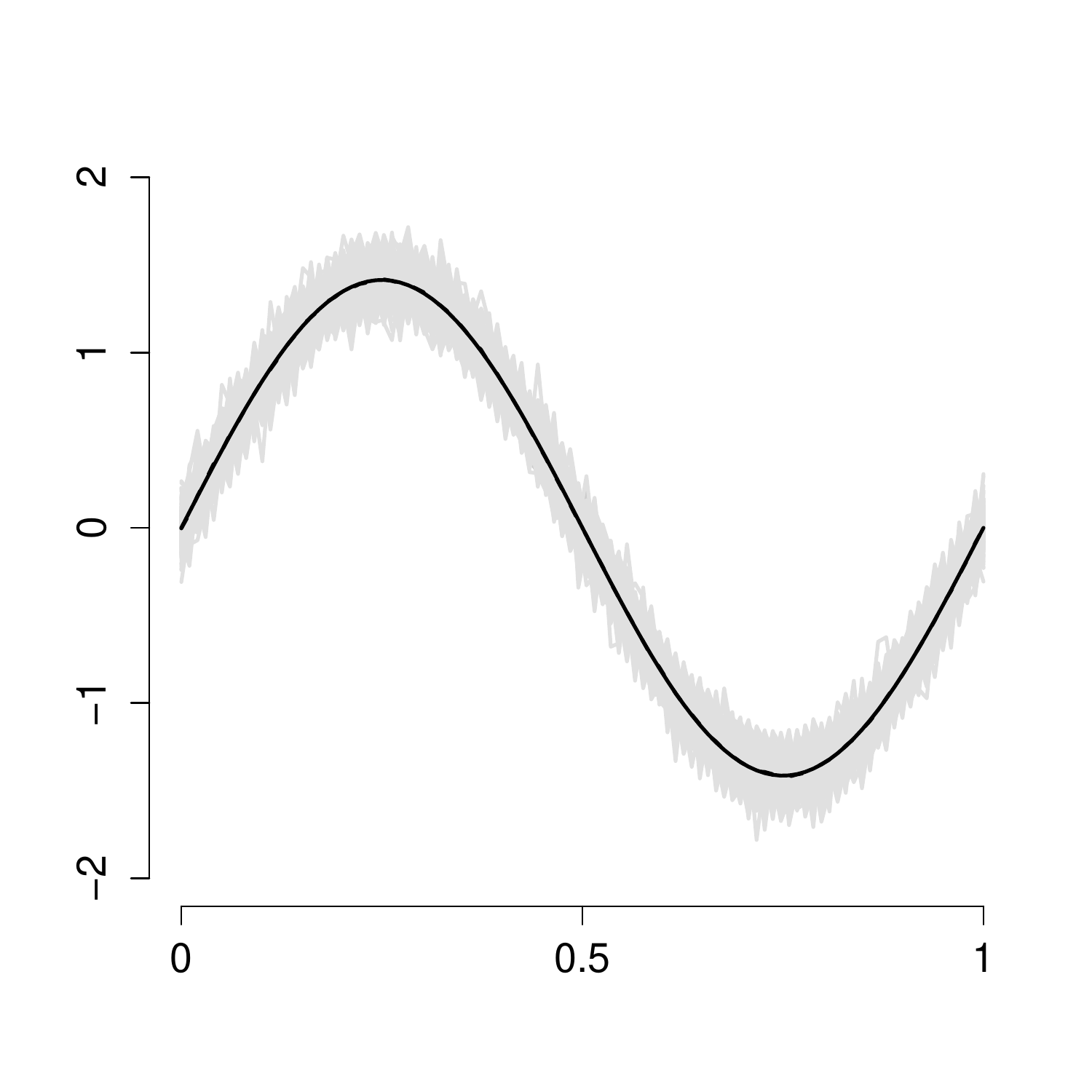}\includegraphics[width=0.40\textwidth]{./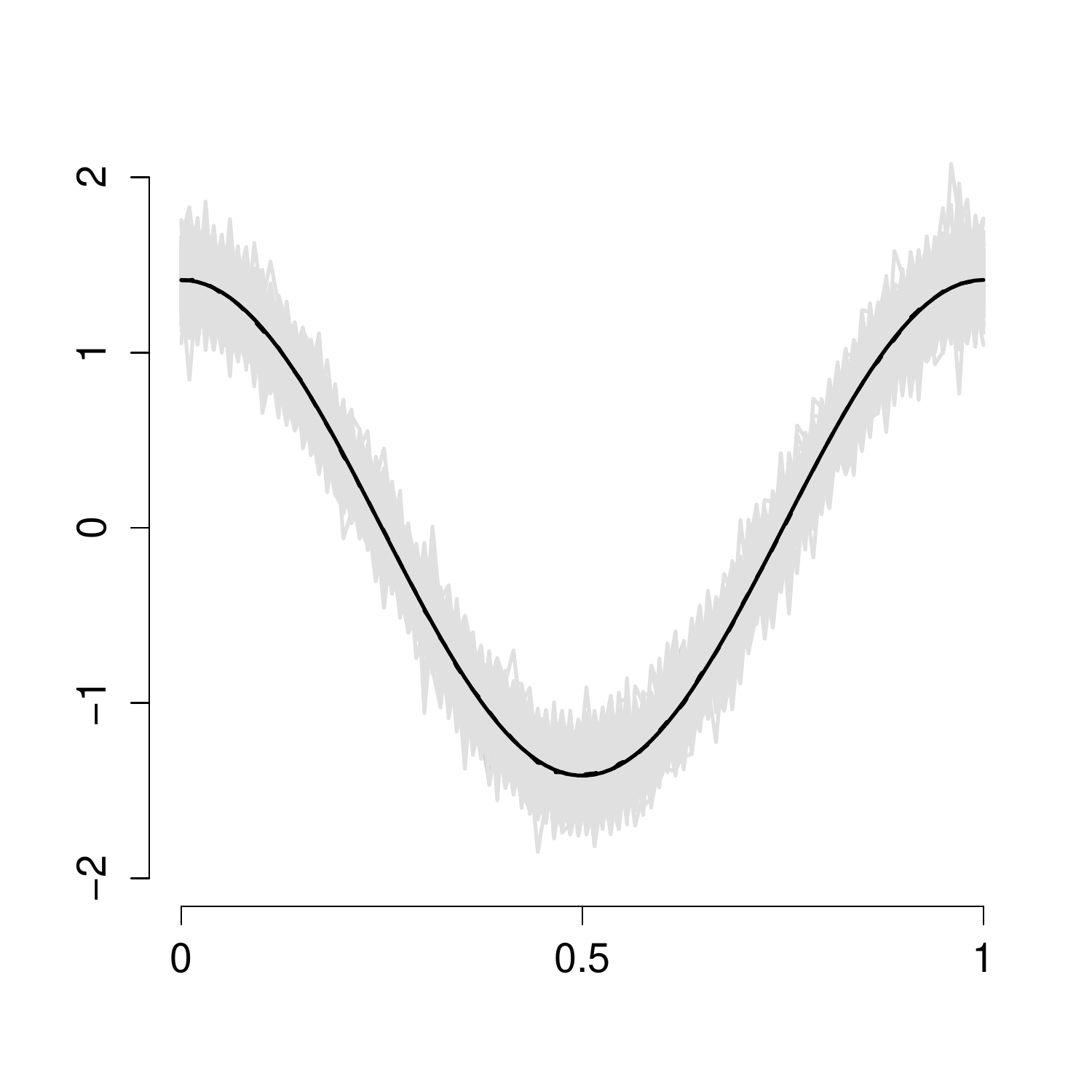}
	\caption{Estimated component functions (gray) by 500 simulation runs for simulation setting 2 scenario 1 small sample size and 95\% expectile. The rows from the top to the bottom show respectively results produced by BUP, TD and PEC. Left panel corresponds to the first component function, right panel - to the second. The true functions are shown as solid black curves. The overall mean across simulation runs is shown as dashed black curve. The later can not be distinguished from the true curve.}\label{figpc2}
\end{figure}
\section{Application to Chinese Weather Data}\label{sec:app}
We apply the algorithms BottomUp, TopDown and PrincipalExpectile to Chinese temperature data with a view to pricing weather derivatives (WDs). WDs are financial instruments written on weather indices as underlyings and are designed to trade with weather related risks. Temperature derivatives are WDs written on a temperature index such as the average temperature recorded at a prespecified weather station. As for  financial derivatives risk factors of temperature are at the core of temperature derivative pricing. In this section we study the risk factors of temperature using daily average temperature data of 159 weather stations in China for the years 1957 to 2009 provided by Chinese Meteorological Administration via its website. We refer to this dataset as the Chinese weather dataset.

To conduct the analysis of the temperature risk factors which are relevant for pricing temperature derivatives we follow the well established methodology of \cite{benth2007}. That is, let $T_{it}$ denote the average temperature at station $i$, $i=1,2,\ldots,n$ in time $t$. We consider each station $i$ separately and using the whole time series of the average temperatures from 1957 to 2009 we fit the following model:
\begin{align}
T_{it}&=X_{it}+\Lambda_{it}\label{tmod}
\end{align}
In (\ref{tmod}) $\Lambda_{it}$ is a seasonal function:
\begin{align*}\Lambda_{it}&=a_i+b_it+\sum_{i=1}^2c_i\sin(2\pi t/365 i)+d_i\cos(2\pi t/365 i)
\end{align*}
and $X_{it}$ is an autoregressive process:
\begin{align*}X_{it}&=\sum_{j=1}^{10}\beta_{ij}X_{i,t-j}+\varepsilon_{it}\nonumber
\end{align*}
We fit the model (\ref{tmod}) to the temperature data of 159 stations and obtain the estimated residuals $\hat{\varepsilon}_{it}$. 

It is crucial to study these risk factors $\hat{\varepsilon}_{it}$ for pricing WDs since the later relies heavily on the distributional properties of ${\varepsilon}_{it}$; frequently ${\varepsilon}_{it}$ are assumed to be Gaussian, \citet{benth2007} and \citet{alaton}. 
The findings of \citet{cadi} reveal the importance of modeling conditional variances beside conditional means to capture the distributional features of temperature. We go beyond this and look at the scale factors in the tails of the ${\varepsilon}_{it}$'s.

To eliminate possible year-specific level and scale effects in ${\varepsilon}_{it}$ for different years, we average and demean the $\hat{\varepsilon}_{it}$ day-wise (the 29th February was droped from the data) over all years, and presmooth them using 23 Fourier series. 

We run the algorithms to estimate a collection of 159 expectile curves for the weather stations at each of the levels 5\%, 50\% and 95\% with respect to days of a year from 1 to 365. Our analysis for the 50\% expectile corresponds to the classical PCA. We estimate first two principal component functions. As we show in Table \ref{varprop} they already explain large portion of the sample variation.
\begin{figure}[H]
	\centering
		\includegraphics[width=0.45\textwidth]{./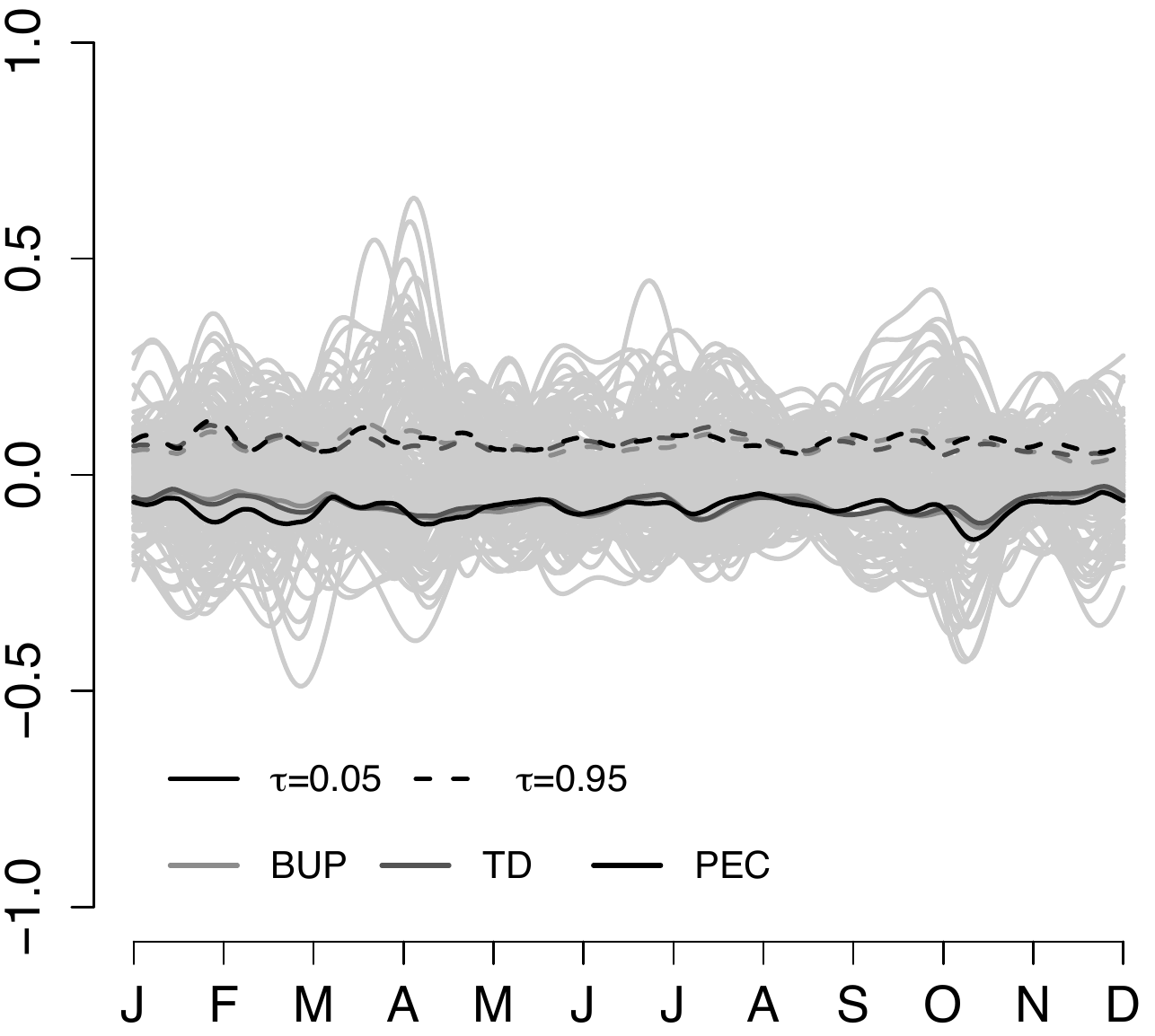}
		\caption{Averaged and smoothed residuals of temperature on 159 stations (gray) and the estimated constants by the algorithms. The horizontal axis features the months from January to December.}
		\label{res_temp}
\end{figure}
The estimation results of the three proposed algorithms are rather similar. On Figures \ref{res_temp} and \ref{res_temp1} we present the estimated constant terms  and the estimated principal component functions for $\tau=0.05$ and $\tau=0.95$. 
\begin{figure}[H]
	\centering
		\includegraphics[width=0.45\textwidth]{./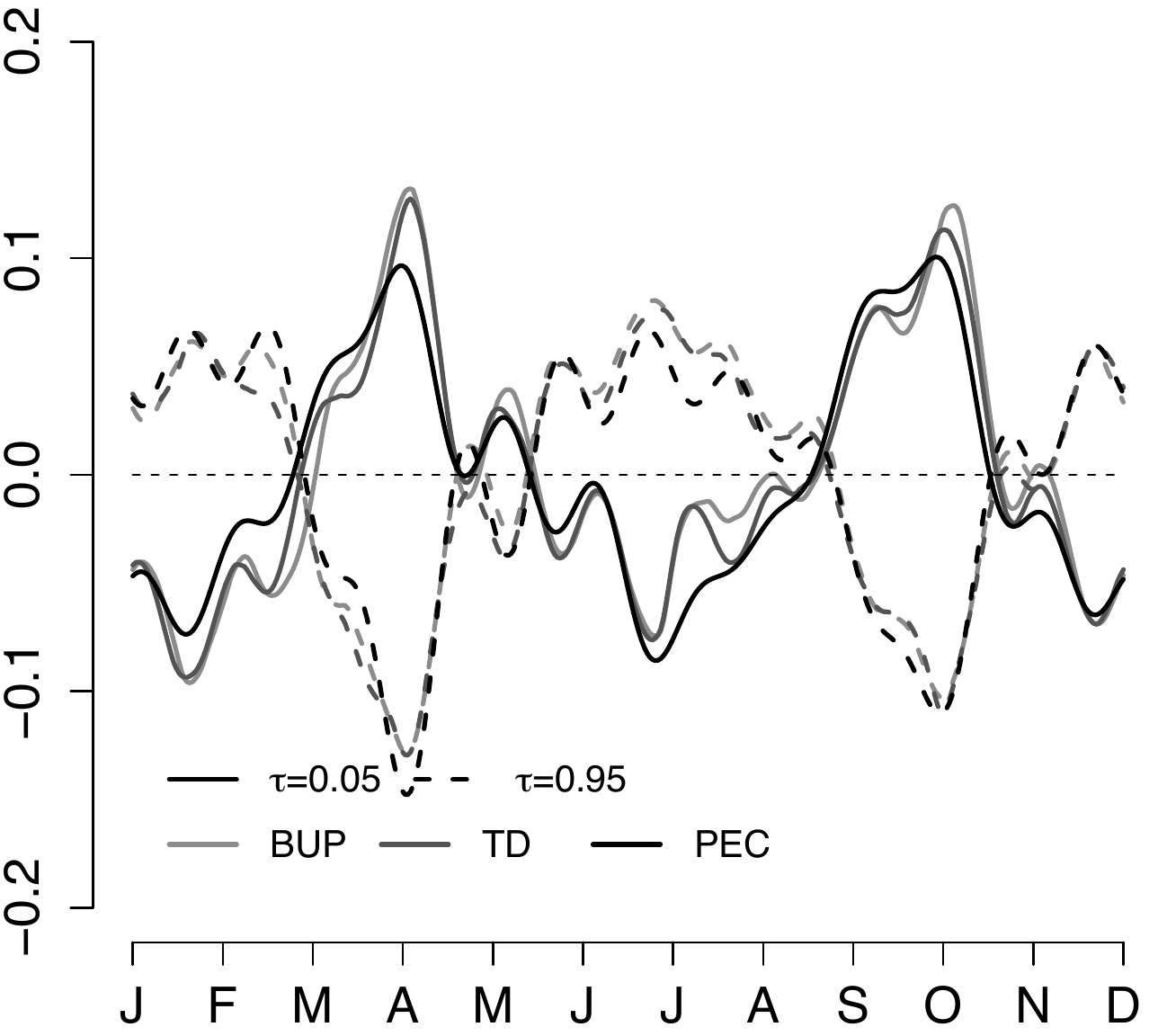} \includegraphics[width=0.45\textwidth]{./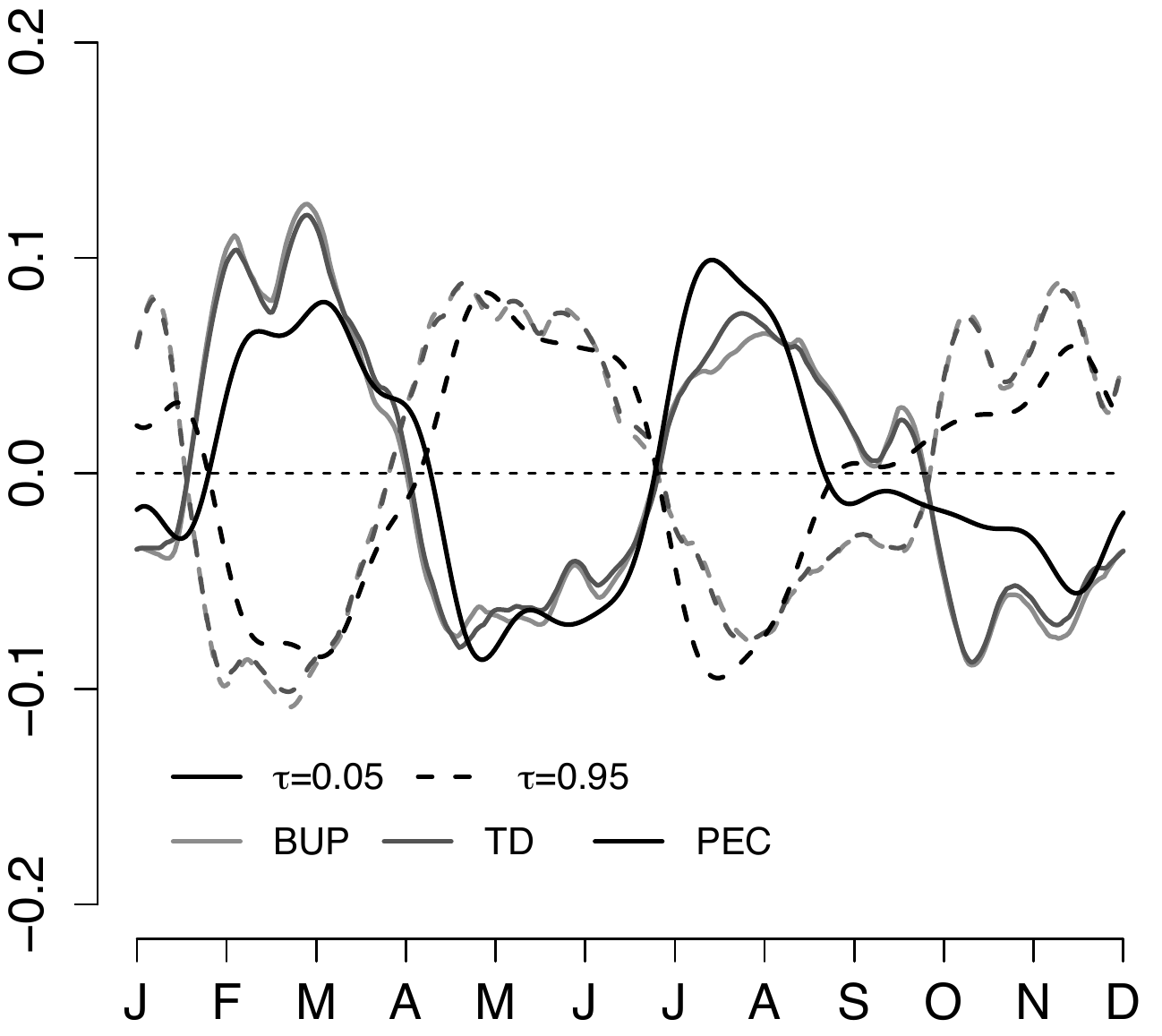}
\caption{Left: the estimated first component function for the residuals of temperature. Right: the estimated second component function.}
		\label{res_temp1}
\end{figure}
The figures reveal that i. the estimators for the constant term produced by different algorithms are rather close to each other; ii. the estimators of the principal component functions returned by the algorithms are also quite similar; iii. BUP and TD principal components are particularly close to each other. 
\begin{table}[H]
\centering
\begin{tabular}{crrrr}\hline\hline
&&BUP& TD& PEC \\\hline
$\tau=$&0.05&0.89&0.89&0.86\\
$\tau=$&0.50&0.65&0.65&0.65\\
$\tau=$&0.95&0.89&0.90&0.87\\\hline\hline
\end{tabular}
\caption{Proportion of the explained variance by the two PCs for different $\tau$-levels and each of the algorithms in the temperature residuals curves}\label{varprop}
\end{table}
The obtained first and second components indicate changes in the temperature distribution from lighter to heavier tails and the other way around within a typical year. A positive score on the first component would mean lighter than average tails of the temperature distribution in spring and fall, and heavier than average tails in winter and summer. Similar, a positive score on the second component would indicate lighter than average tails of the temperature distribution in February, March, April, July, August, and September, and heavier than average tails during the rest of the year. 
\begin{figure}[H]
	\centering
		\includegraphics[width=0.45\textwidth]{./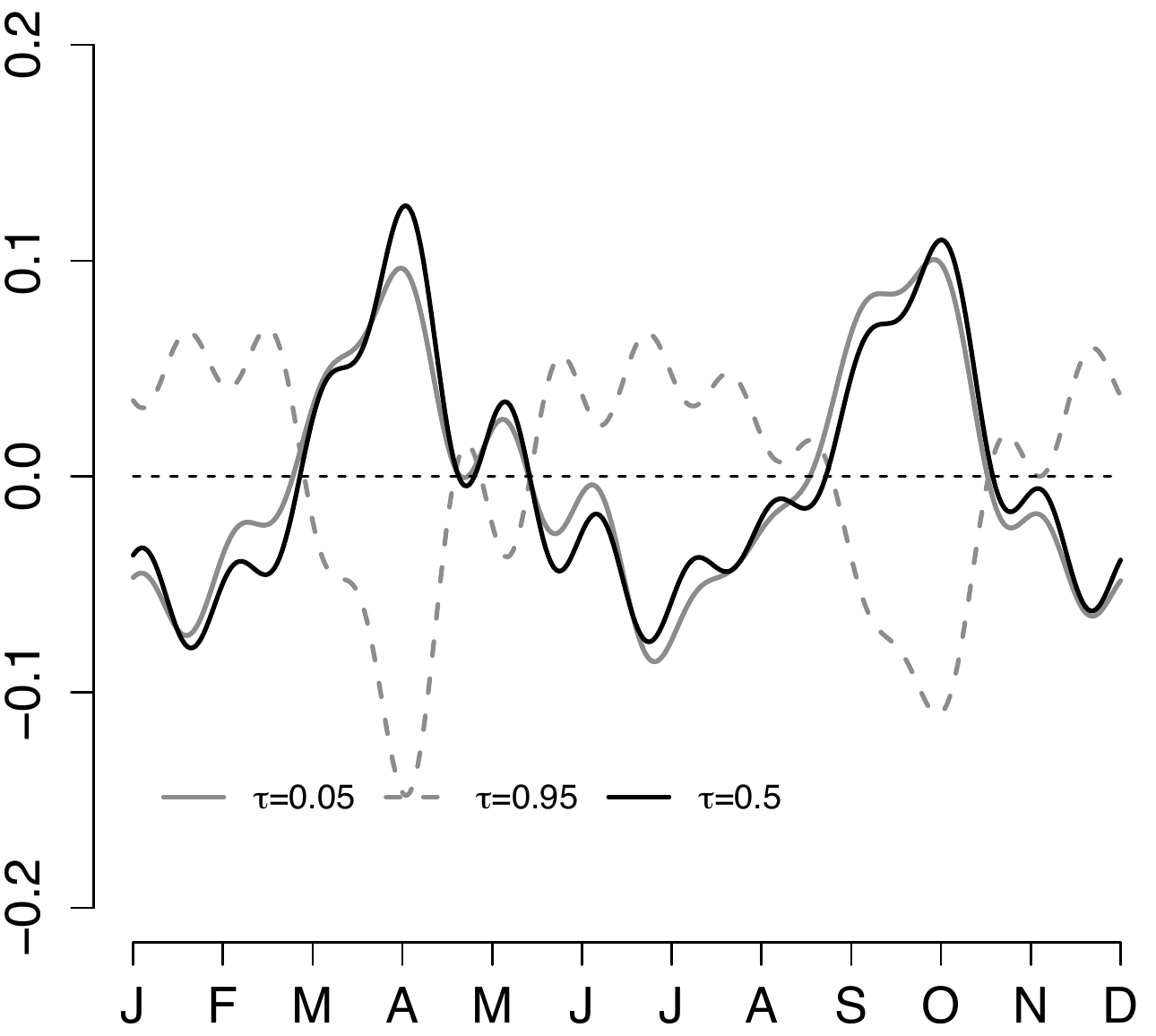} \includegraphics[width=0.45\textwidth]{./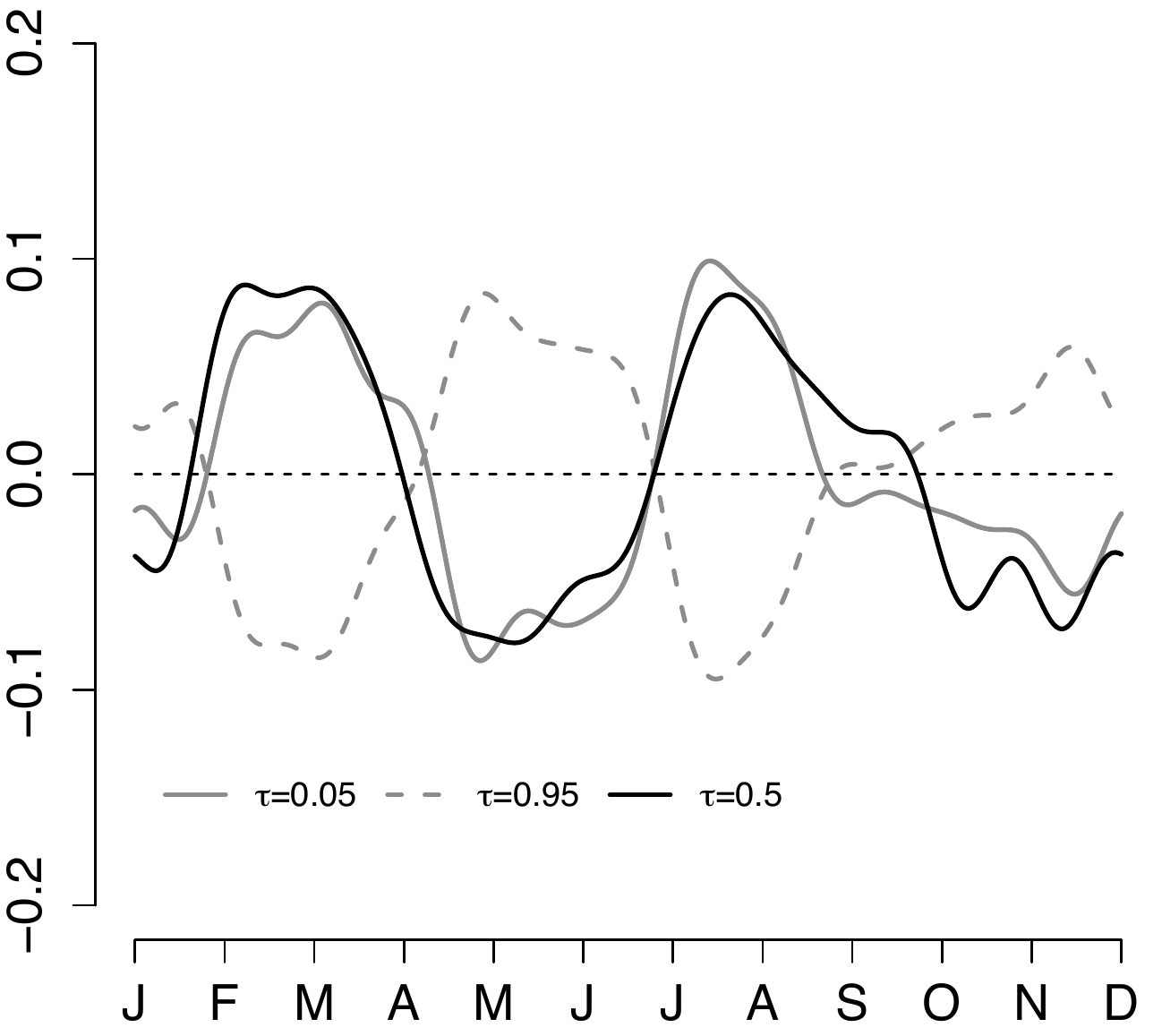}
\caption{Left panel: the estimated first PEC for $\tau=\{0.05,0.5,0.95\}$. Right panel: the estimated second PEC for $\tau=\{0.05,0.5,0.95\}$.}
		\label{pec_pc}
\end{figure}
In Figure \ref{pec_pc} we show the principal component functions for $\tau$= 0.05, 0.5, and 0.95 obtained by PrincipalExpectile. We observe that the estimated principal component functions vary with $\tau$ and exhibit some differences to the classical PCA where $\tau=0.5$. By applying Proposition \ref{proppec}(3) to PEC, we conclude that the distribution of the considered temperature residuals is rather not an elliptically symmetric one. Thus, the normality assumption for pricing WDs on temperature as needed in the technology presented by \cite{benth2007} might be violated for this data.


\section{Summary}
We proposed two definitions of principal components in an asymmetric norm and provided consistent algorithms based on iterative least squares. We derived the upper bounds on their convergence times as well as other useful properties of the resulting principal components in an asymmetric norm. 

The algorithms TopDown and BottomUp minimize the projection error in a $\tau$-asym\-met\-ric norm, and PrincipalExpectile algorithm maximizes the $\tau$-variance of the low-dimensional projection. The later algorithm was shown to share 'nice' properties of PCA as invariance under translations and changes of basis, moreover, it coincides with classical PCA for elliptically symmetric distributions. 

Using simulations we compared finite sample performance of the proposed algorithms. All algorithms appear to produce similar results. Overall performance of PrincipalExpectile and TopDown was very satisfactory in terms of the MSE, PrincipalExpectile showed robustness to 'fat-tails' and skewness of the data distribution. 

We applied the algorithms to a Chinese weather dataset with a view to weather derivative pricing. Using a commonly accepted model for temperature of \cite{benth2007}, we estimated the first two principal component functions of the temperature residuals as functions of days of a year. The resulting component functions indicate relative changes in the tails of the temperature distribution from light to heavier and vice versa. Our further results question the validity of the normality assumption on the temperature residuals which is frequently used for pricing temperature based derivatives.

The proposed algorithms appear to be a good way to study extremes of multivariate data. They are easy to compute, relatively fast and their results are easy to interpret.

\bibliography{references}
\bibliographystyle{ecta}

\end{document}